\newtheorem*{definition}{Definition}
\newtheorem{thm}{Theorem}[section]
\newtheorem{lem}[thm]{Lemma}
\newtheorem{prop}[thm]{Proposition}
\newtheorem{cor}[thm]{Corollary}
\title{Exponential Stability of Subspaces\\ for Quantum Stochastic Master Equations}
\author[$\diamondsuit$]{Tristan Benoist\thanks{tristan.benoist@math.univ-toulouse.fr}}
\author[$\clubsuit$]{Cl\'ement Pellegrini\thanks{clement.pellegrini@math.univ-toulouse.fr}}
\author[$\spadesuit$]{Francesco Ticozzi\thanks{ticozzi@dei.unipd.it}}
\affil[$\diamondsuit$]{\small CNRS, Laboratoire de Physique Th\'eorique, IRSAMC}
\affil[$$]{Universit\'e de Toulouse, UPS}
\affil[$$]{F-31062 Toulouse, France}
\affil[$\clubsuit$]{\small Institut de Math\'ematiques de Toulouse}
\affil[$$]{Equipe de Statistique et de Probabilit\'e}
\affil[$$]{Universit\'e Paul Sabatier}
\affil[$$]{31062 Toulouse Cedex 9, France}
\affil[$\spadesuit$]{\small Dept. of Information Engineering}
\affil[$$]{Universit\'a degli Studi di Padova}
\affil[$$]{via gradenigo 6/b, 35131 Padova, Italy}
\affil[$$]{\small and Dept. of Physics and Astronomy}
\affil[$$]{Dartmouth College}
\affil[$$]{6127 Wilder, Hanover, NH (USA)}
\begin{document}
\maketitle
\begin{abstract}
We study the stability of quantum pure states and, more generally, subspaces for stochastic dynamics that describe continuously--monitored systems. We show that the target subspace is almost surely invariant if and only if it is invariant for the average evolution, and that the same equivalence holds for the global asymptotic stability. Moreover, we prove that a strict linear Lyapunov function for the average evolution always exists, and latter can be used to derive sharp bounds on the Lyapunov exponents of the associated semigroup. Nonetheless, we also show that taking into account the measurements can lead to an improved bound on stability rate for the stochastic, non-averaged dynamics. We discuss explicit examples where the almost sure stability rate can be made arbitrary large while the average one stays constant.
\end{abstract}

\section{Introduction}
\textbf{General context:} Pure quantum states play a key role in many aspects of quantum theory, and quantum dynamics in particular: they are associated to eigenstates of Hamiltonians with non-degenerate spectrum, and hence to ground states representing the zero-temperature equilibria for the system; they are the output of measurement processes corresponding to non-degenerate observables; pure states are typically used to represent information in quantum information processing and communication; furthermore, nonclassical correlations in quantum mechanics are best exhibited by maximally entangled states for joint systems, which are pure. This central role motivates a growing interest in characterizing evolutions that converge to specific classes of pure states of interest. 

A similar interest lays on convergence to subspaces of Hilbert space, whether they represent energy eigenspaces, they are associated to certain excitation numbers or symmetric states, or represent the support for a quantum error-correcting code.

In order for a quantum dynamical system to converge to a pure state or a subspace irrespective of the initial state, it needs to include some interaction with its environment, namely it needs to be an {\em open system}.
We shall focus on Markov quantum systems associated to Stochastic Master Equations (SME) and their corresponding semigroups \cite{barchielli2009,alicki-lendi,gardiner-qn}. This class of models emerges naturally in many quantum atomic, optical and nanomechanical systems \cite{qed1,qed2,nanomechanical1}. It is of interest in measurement and decoherence theory \cite{belavkin1985, wiseman-adaptive,adler,pellegrini1,pellegrini2,pellegrini3,bbbqnd,bbbcontinuous}, and it has a central role in quantum filtering and measurement-based feedback control systems \cite{belavkin1992, belavkin-report, wiseman-book,altafini-introduction,amini_rouchon_pellegrini,Rouchon1,Rouchon2,Rouchon3,Rouchon4,pellegrini4}.

In many applications, convergence is not enough: a {\em fast} preparation of the target set needs to be enacted. A fast convergence is also needed to best protect the system from undesirable external perturbations that may be negligible on the state preparation time scale but are relevant on longer time scales, making the preparation robust. This is, for example, crucial towards implementing effective quantum memories \cite{ticozzi-QDS,shabani-init,ticozzi-isometries}. Different ways to characterize the speed of convergence, as well as asymptotic invariant sets, have been developed for Markovian evolutions \cite{baumgartner-2,ticozzi-QDSspeed,cirillo-decompositions}.

In \cite{ticozzi-stochastic}, a general approach to stabilization of {\em diffusive} SME has been proposed, which relies as much as possible on open-loop control and resorts to feedback design only when the open-loop control cannot achieve the desired task. The motivation for this choice is twofold: on the one hand, open-loop control is easier to implement, as it does not require the taxing computational overhead of integrating the SME in real time. On the other hand, simulations showed that the open-loop controlled evolution converged exponentially. This is not completely surprising, as it is in agreement with another result of the paper: convergence {\em in probability} to subspaces for the SME can be proved by checking if the {\em mean} evolution converges to the same subspace. In this paper, we shall make such observation rigorous for a larger, more general class of dynamics and further explore their convergence features, comparing average and almost sure convergence.

\bigskip

\noindent{\textbf{Dynamics of interest:}} We study quantum systems described by a finite dimensional Hilbert space $\cal H$. The possible states of the system are then given by density matrices on $\cal H$. Namely, $\rho$ is a state if and only if it is an element of the set
$$\mathcal S(\mathcal H):=\{\rho\in\mathcal B(\mathcal H)\ |\ \rho\geq 0, \textrm{tr}\rho=1\}$$
where $\mathcal B(\mathcal H)$ is the set of linear operators on $\mathcal H$.

The stochastic dynamics we shall consider are processes $(\rho(t))_{t\in\mathbb R_+}$ of states solving stochastic differential equations of the form:
\begin{eqnarray}\label{eq:def_trajectory0}
   d\rho(t)&=&{\cal L}(\rho(t-))dt + \sum_{i=0}^p {\cal G}_i(\rho(t-)) dW_i(t)\nonumber\\&& + \sum_{i=p+1}^n \left(\frac{{\cal J}_i(\rho({t-}))}{\textrm{tr}[\mathcal J_i(\rho({t-}))]}-\rho({t-})\right)(d\hat{N}_i(t)-\textrm{tr}[\mathcal J_i(\rho({t-}))]dt),
\end{eqnarray}
including both diffusive ($dW_t$) Wiener processes and Poisson, or ``jump'', processes ($d\hat N_t$) with intensity $\textrm{tr}[\mathcal J_i(\rho({t-}))]dt$.
The operators appearing in \eqref{eq:def_trajectory0} are defined by
 \begin{equation}\label{eq:def_functions}
   \begin{split}
    {\cal L}(\rho)&=-i[H,\rho] + \sum_{i=0}^n\left( C_i\rho C_i^* - \frac12 \big(C_i^*C_i \rho + \rho C_i^*C_i\big)\right),\\
   {\cal J}_i(\rho)&=C_i\rho C_i^*,\,\, i=0,\ldots,n,\\
  {\cal G}_i(\rho)&=C_i\rho + \rho C_i^* - {\rm tr}[(C_i + C_i^*)\rho]\rho,\,\, i=0,\ldots,n,
   \end{split}
  \end{equation}
  where $C_i,i=0,\ldots,n$ are elements of $\mathcal B(\mathcal H)$, and where $H$ is a self-adjoint element of $\mathcal B(\mathcal H)$. 


%
An equation of the form \eqref{eq:def_trajectory0} defines a generic evolution of quantum system undergoing continuous indirect measurements. It is called SME \cite{barchielli2009}, or filtering equation, in the control-oriented community \cite{belavkin1992,altafini-introduction,amini_rouchon_pellegrini, Rouchon1,Rouchon2}. Its solution $(\rho(t))_t$ is called a \emph{quantum trajectory}.

The class of evolutions captured by \eqref{eq:def_trajectory0} comprises all evolution of a system (an atom or a spin) interacting with a electromagnetic field which is monitored \cite{barchielli2009,qed1} as well as nano-mechanical devices \cite{nanomechanical1}. Hence, these include the typical models used for (measurement-based) feedback stabilization \cite{altafini-introduction}. Similar models can also be derived for discrete-time evolutions, and have received particular attention given their applicability to current experimental setups \cite{vanhandel-invitation,harochebook}. In the continuous time limit these discrete models converge weakly to solutions of SME \cite{pellegrini1,pellegrini2,pellegrini3,bbbqnd,bbbcontinuous}. 
Physically, $H$ corresponds to the effective Hamiltonian for the system which includes its internal Hamiltonian and a perturbation (Lamb shift) induced by the interaction with its environment. The environment is typically associated to a number of quantum fields, and the interaction of the system with the latter is described by the operators $C_i$. 

With respect to \cite{ticozzi-stochastic} or most control-oriented work, in \eqref{eq:def_trajectory0} we consider both processes $(W_i(t))$ corresponding to diffusive evolutions and $(\hat{N}_i(t))$ corresponding to Poisson processes (jump processes) with stochastic intensity. These canonical stochastic processes represent the fluctuations of the outcome of continuous measurements performed on the fields, after their interaction with the system.  Poisson processes are associated to particle  counting measurements (typically photons), whereas Wiener processes are associated to particle currents or field quadrature measurements \cite{barchielli2009,belavkin1992,pellegrini1,pellegrini3,bbbqnd,bbbcontinuous}.

The operator $\mathcal L$ in \eqref{eq:def_functions} has the form of so-called Lindblad operators \cite{lindblad1976,gorini1978}, namely the generator of a semi group of completely positive, trace preserving (CPTP) maps on the set of states $\mathcal S(\mathcal H)$. These generators correspond to master equations for open quantum systems \cite{alicki-lendi,petruccione}, and have been extensively studied. Being linear deterministic systems on a convex, positive set the study of their properties is generally simpler than studying directly the stochastic evolution. Their stability and controllability properties are discussed for example in \cite{ticozzi-QDS, ticozzi-markovian, ticozzi-QDSspeed, thomas-liesemi group, altafini-markov}. 

In our case, ${\cal L}$ is also the generator of the Markov semigroup associated to the stochastic system. These represent the best description of the state evolution when the measurement record is not accessible, and can thus be obtained as the expectation of \eqref{eq:def_trajectory0} over the outcomes of the  measurement processes. Namely, if $$\hat\rho(t)=\mathbb E[\rho(t)],$$
 it follows from \eqref{eq:def_trajectory0} that,
 $$\frac{d}{dt}\hat\rho(t)=\mathcal L(\hat\rho(t)).$$ 
In this work, we shall exploit known and new properties of the semigroup evolution to obtain new results regarding the stochastic ones.
 
 \noindent{\textbf{Main results:}} The principal aim of this paper is to study asymptotic stability of $(\rho(t))$ towards attracting subspaces, as well as to provide sharp bounds on its rate of convergence. Let $\mathcal H_S$ be the target subspace of $\mathcal H$. The whole Hilbert space can be decomposed in the direct sum $\mathcal H=\mathcal H_S\oplus\mathcal H_R,$ where $\mathcal H_R$ corresponds to the orthogonal complement of $\mathcal H_S$. Denoting $P_S$ the orthogonal projector on $\mathcal H_S$ the following set \[\mathcal I_S(\mathcal H)=\{\rho\in \mathcal S(\mathcal H)\mbox{ s.t. }{\rm tr}(P_S\rho)=1\}.\] represents the set of states whose support is $\mathcal H_S$ or a subspace of $\mathcal H_S$. When we are concerned with pure state preparation, we have $\mathcal H_S=\mathbb C |\phi\rangle,$ with $|\phi\rangle$ the pure state to be prepared. The following definition addresses the invariance and asymptotic attractivity properties of interest.
 
\begin{definition}
The subspace $\mathcal H_S$ is said invariant 
\begin{itemize}
\item in mean if
\[\rho_0\in\mathcal I_S(\mathcal H)\Rightarrow \; \hat\rho(t)\in \mathcal I_S(\mathcal H),\quad \forall t>0.\]
\item almost surely if
\[\rho_0\in\mathcal I_S(\mathcal H)\Rightarrow  \rho(t)\in \mathcal I_S(\mathcal H),\;\forall t>0 \quad a.s  .\]
\end{itemize}

The subspace $\mathcal H_S$ is said globally asymptotic (GAS) 
\begin{itemize}
\item in mean if $\forall \rho_0\in\mathcal S(\mathcal H)$,
\[\lim_{t\to\infty}\left\|\hat\rho(t) - P_S\hat\rho(t)P_S\right\|=0.\]
\item almost surely if $\forall \rho_0\in\mathcal S(\mathcal H)$,
\[\lim_{t\to\infty}\left\|\rho(t) - P_S\rho(t)P_S\right\|=0\quad a.s .\]
\end{itemize}
\end{definition}
Stability of pure states and subspaces for CPTP map semigroups has been discussed in \cite{ticozzi-markovian,ticozzi-QDS,ticozzi-QDSspeed,baumgartner-2}. In particular, it is proven that the Lindblad operators must exhibit a particular structure in order to ensure mean invariance and mean GAS.  Building on this framework, we obtain the following theorem:
\begin{thm}[Invariance and stability in mean {\em iff} almost sure]\label{thm-mean-as}
The subspace $\mathcal H_S$ is invariant in mean if and only if it is invariant almost surely.
The space $\mathcal H_S$ is GAS in mean if and only if it is GAS almost surely.
\end{thm}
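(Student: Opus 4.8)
The plan is to reduce the whole statement to the behaviour of a single scalar functional. Write $P_R$ for the orthogonal projector onto $\mathcal H_R$ and set $V(\rho):=\mathrm{tr}(P_R\rho)$; since $\rho\geq 0$ one has $0\leq V(\rho)\leq 1$, and $V(\rho)=0$ precisely when $\rho\in\mathcal I_S(\mathcal H)$. Using that the off-diagonal block of a positive semidefinite matrix is norm-dominated by the geometric mean of its diagonal blocks, $\|\rho-P_S\rho P_S\|\to 0$ along a net of states if and only if $V(\rho)\to 0$. Consequently all four notions in the Definition become statements about $V$: mean (resp.\ almost sure) invariance is equivalent to $\mathbb E[V(\rho(t))]=0$ for all $t$ (resp.\ $V(\rho(t))=0$ for all $t$ a.s.) whenever $V(\rho_0)=0$, and mean (resp.\ almost sure) GAS is equivalent to $\mathbb E[V(\rho(t))]\to 0$ (resp.\ $V(\rho(t))\to 0$ a.s.).

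The implications ``almost sure $\Rightarrow$ mean'' are then soft. Because $\hat\rho(t)=\mathbb E[\rho(t)]$ and $V$ is affine, $\mathbb E[V(\rho(t))]=V(\hat\rho(t))$, so almost sure invariance forces $\mathbb E[V(\rho(t))]=0$; and $\|\hat\rho(t)-P_S\hat\rho(t)P_S\|=\|\mathbb E[\rho(t)-P_S\rho(t)P_S]\|\leq\mathbb E\|\rho(t)-P_S\rho(t)P_S\|$, which tends to $0$ by dominated convergence since the integrand is uniformly bounded and vanishes a.s.\ under almost sure GAS.

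For ``mean $\Rightarrow$ almost sure'' of invariance: if $V(\rho_0)=0$ then $\mathbb E[V(\rho(t))]=V(\hat\rho(t))=0$ for each fixed $t$, and $V(\rho(t))\geq 0$ a.s.\ gives $V(\rho(t))=0$ a.s.\ for each fixed $t$; fixing a countable dense $D\subset\mathbb R_+$ and using that $t\mapsto\rho(t)$ is c\`adl\`ag then upgrades this to $V(\rho(t))=0$ for all $t\geq 0$, almost surely. For ``mean $\Rightarrow$ almost sure'' of GAS, assume $\mathcal H_S$ is GAS in mean. By the structure theory recalled above, mean GAS entails in particular mean invariance (attractivity of the face $\mathcal I_S(\mathcal H)$ forces its invariance), hence $P_RC_iP_S=0$ for all $i$ together with the compatibility condition tying $H$ to the $C_i$. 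A direct block computation — best isolated as a lemma — then shows that $\mathrm{tr}(P_R\mathcal L(\rho))\leq 0$ for \emph{every} state $\rho$ (in fact it equals $-\sum_i\mathrm{tr}(B_i^*B_i\,P_R\rho P_R)$ with $B_i:=P_SC_iP_R$). Taking the trace of \eqref{eq:def_trajectory0} against $P_R$ then shows that the $dW_i$ and compensated-$d\hat N_i$ terms contribute (bounded, hence true) martingales while the drift equals $\mathrm{tr}(P_R\mathcal L(\rho(t-)))\leq 0$, so $(V(\rho(t)))_t$ is a nonnegative supermartingale.

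The argument closes with the standard fact that a nonnegative supermartingale $X$ with $\mathbb E[X(t)]\to 0$ converges a.s.\ to $0$: being bounded in $L^1$ it converges a.s.\ to some $X_\infty\geq 0$, and Fatou gives $\mathbb E[X_\infty]\leq\liminf_t\mathbb E[X(t)]=0$, whence $X_\infty=0$ a.s. Applied to $X(t)=V(\rho(t))$, and combined with the mean GAS hypothesis $\mathbb E[V(\rho(t))]\to 0$, this gives $V(\rho(t))\to 0$ a.s., i.e.\ $\|\rho(t)-P_S\rho(t)P_S\|\to 0$ a.s., which is almost sure GAS. I expect the main obstacle to be establishing the sign of the drift $\mathrm{tr}(P_R\mathcal L(\rho))$: one must use the block-triangular form of the $C_i$ \emph{and} the $H$-compatibility condition jointly so that the terms linear in the coherence block $P_S\rho P_R$ cancel, leaving only a manifestly nonpositive quadratic expression; the remaining points — path regularity and the exchange of limits with expectations — are routine.
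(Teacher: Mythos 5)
Your proof is correct and follows essentially the same route as the paper's: everything is reduced to the linear functional $V(\rho)=\mathrm{tr}(P_R\rho)$, with linearity and positivity of $V$ handling the invariance equivalence, dominated convergence handling almost sure $\Rightarrow$ mean GAS, and the convergence theorem for positive bounded supermartingales (the paper invokes uniqueness of the $L^1$ limit where you use Fatou) handling mean $\Rightarrow$ almost sure GAS. Your explicit treatment of the c\`adl\`ag upgrade from fixed-$t$ to all-$t$ null sets, and your remark that the nonpositivity of the drift $\mathrm{tr}(P_R\mathcal L(\rho))=-\sum_j\mathrm{tr}(C_{j,P}^*C_{j,P}\rho_R)$ requires the mean-invariance block conditions, are points the paper leaves implicit, but the argument is the same.
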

A similar result for SME including only diffusive terms has been estabilished in  Theorem 3.1 and Proposition 4.1 of \cite{ticozzi-stochastic}, focusing on the relation between invariance and convergence in mean with  invariance and convergence \textit{in probability}. Here we obtain results for general SME of type \eqref{eq:def_trajectory0} in an {\em almost sure} sense, which is stronger, and provide a more direct proof. 

We next establish that GAS subspace for the dynamical evolutions of interest are in fact also exponentially stable by using Lyapunov function techniques. The second main result proves that asymptotic stability of a subspace is equivalent to, and not just implied by, the existence of a {\em linear} Lyapunov function for the semigroups of interest.
\begin{thm}[Linear Lyapunov Function for GAS Subspaces]\label{thm:convlyap}
A subspace ${\cal H}_S$ is GAS if and only if there exists  a linear function $V_K:\mathcal S(\mathcal H)\to\mathbb R_+$ such that:
\begin{align}
&V_K(\rho)\geq 0,\quad \textrm{with }\; V_K(\rho)= 0\;\textrm{if and only if }\; \rho\in{\cal I}_S({\cal H});\\
&\label{eq:pospos}V_K({\cal L}(\rho))<0\quad \textrm{for all}\; \rho\notin{\cal I}_S({\cal H}).
\end{align}
\end{thm}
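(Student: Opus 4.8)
The plan is to construct the linear Lyapunov function explicitly from the block structure that GAS forces on the Lindblad generator, using the decomposition $\mathcal H = \mathcal H_S \oplus \mathcal H_R$. First I would recall (from the structural results of \cite{ticozzi-markovian,ticozzi-QDS} invoked just before the statement) that if $\mathcal H_S$ is GAS then, writing operators in block form with respect to this decomposition, each $C_i$ has a block-upper-triangular form $C_i = \left(\begin{smallmatrix} C_i^S & C_i^P \\ 0 & C_i^R \end{smallmatrix}\right)$ — invariance in mean already kills the lower-left block — and moreover the "reduced" dynamics generated by the $C_i^R$ together with the off-diagonal coupling cannot support any invariant state with support in $\mathcal H_R$. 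The natural candidate is then $V_K(\rho) = \mathrm{tr}(K\rho)$ for a positive operator $K$ of the form $K = P_R^\dagger K_R P_R$ (equivalently, $K$ supported on $\mathcal H_R$, block-diagonal), chosen so that $K \geq 0$, $\ker K = \mathcal H_S$, which immediately gives the first condition: $V_K(\rho)=0 \iff \mathrm{supp}\,\rho \subseteq \mathcal H_S \iff \rho \in \mathcal I_S(\mathcal H)$.

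The heart of the argument is choosing $K$ so that $V_K(\mathcal L(\rho)) = \mathrm{tr}(\mathcal L^\dagger(K)\rho) < 0$ for every $\rho \notin \mathcal I_S(\mathcal H)$. I would compute $\mathcal L^\dagger(K)$ using the adjoint Lindblad generator and the block-triangular structure of $H$ and the $C_i$; because $K$ is supported on $\mathcal H_R$, the relevant quantity reduces to an effective generator acting on $\mathcal B(\mathcal H_R)$, essentially $\mathcal L_R^\dagger(K_R) = i[H_R,K_R] + \sum_i \big((C_i^R)^* K_R C_i^R - \tfrac12\{(C_i^R)^*C_i^R + (C_i^P)^*(\cdot)\dots, K_R\}\big)$ plus cross terms; the key point is that the GAS hypothesis means this reduced evolution on $\mathcal H_R$ is strictly contractive — there is no invariant state supported in $\mathcal H_R$ and no decoherence-free subspace there — so by the standard Lyapunov theory for (sub-)Markovian generators on finite dimensions one can solve a Lyapunov-type equation $\mathcal L_R^\dagger(K_R) = -M$ for some $M > 0$ on $\mathcal H_R$, yielding $K_R > 0$. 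One then has to check that with this $K_R$ the full functional satisfies $V_K(\mathcal L(\rho)) < 0$ not just for $\rho$ supported on $\mathcal H_R$ but for all $\rho \notin \mathcal I_S(\mathcal H)$, i.e. for states with a nonzero $\mathcal H_R$-component: this follows because $\mathrm{tr}(\mathcal L^\dagger(K)\rho)$ depends, thanks to the support of $K$ and the triangular structure, only on the $\mathcal H_R$-block $P_R \rho P_R$ of $\rho$ (the off-diagonal and $\mathcal H_S$ blocks drop out), and that block is nonzero precisely when $\rho \notin \mathcal I_S(\mathcal H)$.

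The converse direction is the easy one: given such a linear $V_K(\rho) = \mathrm{tr}(K\rho)$, the map $t \mapsto V_K(\hat\rho(t))$ is differentiable with derivative $V_K(\mathcal L(\hat\rho(t))) < 0$ whenever $\hat\rho(t) \notin \mathcal I_S(\mathcal H)$; since $V_K \geq 0$ is bounded below, a standard LaSalle / compactness argument on the compact set $\mathcal S(\mathcal H)$ forces $\hat\rho(t)$ into $\mathcal I_S(\mathcal H)$, and combined with mean invariance of $\mathcal I_S(\mathcal H)$ (which $V_K \geq 0$ with $V_K|_{\mathcal I_S}=0$ and $V_K(\mathcal L(\rho))\le 0$ on the boundary guarantees) this yields mean GAS, hence almost sure GAS by Theorem \ref{thm-mean-as}.

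The main obstacle I anticipate is precisely the existence of a \emph{strict linear} $K$ — linearity over $\mathcal S(\mathcal H)$ is a strong requirement, and one must verify that the reduced generator $\mathcal L_R^\dagger$ on $\mathcal H_R$ really is "strictly stable" in the sense needed for the Lyapunov equation to have a strictly positive solution. This amounts to showing that GAS of $\mathcal H_S$ rules out not only invariant states in $\mathcal H_R$ but also any neutrally-stable behaviour (purely imaginary spectrum of the reduced generator), which is where the detailed structure theory of Lindblad generators with an attractive subspace — in particular the iterative construction of the "enlarged" invariant subspace — has to be used carefully to guarantee strict negativity of $V_K(\mathcal L(\rho))$ uniformly on the boundary $\partial\mathcal I_S(\mathcal H) \cap \{\rho \notin \mathcal I_S(\mathcal H)\}$ as well as in the interior.
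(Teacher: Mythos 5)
Your overall architecture coincides with the paper's: take $K=\left(\begin{smallmatrix}0&0\\0&K_R\end{smallmatrix}\right)$ with $K_R>0$ on $\mathcal H_R$, note that the invariance structure ($C_{j,Q}=0$, $iH_P-\frac12\sum_jC_{j,S}^*C_{j,P}=0$) makes $V_K(\mathcal L(\rho))={\rm tr}(\mathcal L_R^*(K_R)\rho_R)$ depend only on the $R$-block, and dispatch the converse by LaSalle. The genuine gap is exactly the step you flag at the end and do not carry out: you never prove that GAS forces the spectrum of $\mathcal L_R$ into the \emph{open} left half-plane, which is what your Lyapunov-equation construction $K_R=\int_0^\infty e^{t\mathcal L_R^*}(M)\,dt$ needs in order to converge. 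Your proposed remedy (``the iterative construction of the enlarged invariant subspace'') is a support-theoretic tool, whereas the obstruction is spectral: a priori $\mathcal L_R$ could have purely imaginary eigenvalues with non-positive eigenoperators even though no invariant \emph{state} is supported in $\mathcal H_R$, and you give no argument excluding this.

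The paper closes this gap with the Perron--Frobenius theorem for completely positive maps \cite{Evans}: since $e^{t\mathcal L_R}$ is a trace-nonincreasing CP semigroup (Proposition \ref{prop:LR}), its spectral bound $-\alpha_0$ is attained at a real eigenvalue admitting a positive semi-definite eigenoperator $\mu_R\neq0$; if $\alpha_0\le0$ then $V(\hat\mu(t))=e^{-\alpha_0t}V(\mu)$ does not decay, contradicting GAS, hence $\alpha_0>0$ (Lemma \ref{lem:alpha0pos}). Once that is supplied, your route does work, and your way of producing $K_R$ is a legitimate alternative to the paper's: the integral $\int_0^\infty e^{t\mathcal L_R^*}(M)\,dt$ with $M>0$ is strictly positive (the integrand is PSD and strictly positive near $t=0$) and satisfies $\mathcal L_R^*(K_R)=-M<0$, giving strict negativity directly. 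The paper instead obtains $K_R>0$ as the Perron--Frobenius eigenoperator of an irreducible perturbation $\mathcal L_R^*+\eta\Psi$ (Lemmas \ref{lem:irr from Psi} and \ref{lem:spectral_radius_majoration}), which yields the quantitatively sharper relation $\mathcal L_R^*(K_R)\le-(\alpha_0-\epsilon)K_R$ that is then reused for the Lyapunov-exponent bounds of Theorem \ref{thm:expo_conv}; your $K_R$ would prove Theorem \ref{thm:convlyap} but not, as stated, the rate $\alpha_0$.
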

The result can thus be seen as a {\em converse} Lyapunov Theorem, which is of practical interest in many situations in which one would like to prove that a given controlled dynamics converges to a target pure state, as well as to develop insights in design methods for dissipative quantum control \cite{poyatos,ticozzi-QDSspeed,ticozzi-QL1,scaramuzza-switching}.
Beside its own relevance, the above result is going to be instrumental in deriving our bounds on the convergence speed, which are summarized in our third main result.
\begin{thm}[Lyapunov exponents for GAS subspaces] \label{thm:expo_conv} Assume $\mathcal H_S$ is GAS and denote $V(\rho)=\mathrm{tr}(P_R\rho)$, where $P_R$ is the orthogonal projector on $\mathcal H_R$. Then there exists $\alpha_0>0$ such that
\begin{align}
\limsup_{t\rightarrow\infty}\frac{1}{t}\ln(V(\hat\rho(t)))\leq-\alpha_0\label{eq:mean_mean_rate_conv},
\end{align}
\begin{align}
\limsup_{t\rightarrow+\infty} \frac{1}{t}\ln(V(\rho(t))\leq -\alpha_0\quad \mbox{a.s.}.\label{eq:as_mean_rate_conv}
\end{align}
Moreover, if the condition $\mathbf{(SP)}$ $P_RC_{j}^*P_RC_{j}P_R>0$ for all $j=p+1,\ldots,n$ holds, then there exists $\beta_0\geq\alpha_0$ such that
\begin{align}
\limsup_{t\rightarrow+\infty} \frac{1}{t}\ln(V(\rho(t))\leq -\beta_0\quad \mbox{a.s.}\label{eq:as_as_rate_conv}
\end{align}
\end{thm}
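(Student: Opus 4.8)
The plan is to run all three estimates through the linear Lyapunov function $V_K(\rho)=\mathrm{tr}(K\rho)$ supplied by Theorem~\ref{thm:convlyap}. Since $V_K\ge 0$ and $V_K$ vanishes exactly on $\mathcal I_S(\mathcal H)$, we have $K\ge 0$, $K=P_RKP_R$, and $\delta P_R\le K\le\|K\|P_R$ for some $\delta>0$; hence $\delta V(\rho)\le V_K(\rho)\le\|K\|V(\rho)$ for every state, so it suffices to prove \eqref{eq:mean_mean_rate_conv}--\eqref{eq:as_as_rate_conv} with $V$ replaced by $V_K$. Because $\mathcal H_S$, being GAS, is also invariant in mean, $V_K(\mathcal L(\rho))=0$ on $\mathcal I_S(\mathcal H)$; together with \eqref{eq:pospos} this says that $-\mathcal L^*(K)$ is a nonnegative self-adjoint operator whose kernel is exactly $\mathcal H_S$, so $-\mathcal L^*(K)\ge\varepsilon P_R$ for some $\varepsilon>0$, whence for \emph{every} state
\begin{equation}\label{eq:plankey}
V_K(\mathcal L(\rho))=\mathrm{tr}(\mathcal L^*(K)\rho)\le-\varepsilon\,\mathrm{tr}(P_R\rho)\le-\alpha_0 V_K(\rho),\qquad \alpha_0:=\varepsilon/\|K\|>0 .
\end{equation}
Gr\"onwall's lemma applied to $\frac{d}{dt}V_K(\hat\rho(t))=V_K(\mathcal L(\hat\rho(t)))$ gives $V_K(\hat\rho(t))\le e^{-\alpha_0 t}V_K(\hat\rho_0)$, i.e.\ \eqref{eq:mean_mean_rate_conv}.

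For \eqref{eq:as_mean_rate_conv} I would set $Y_t:=e^{\alpha_0 t}V_K(\rho(t))\ge 0$. Applying the linear functional $V_K$ to \eqref{eq:def_trajectory0}, using \eqref{eq:plankey}, and noting that all coefficients are bounded on the compact set $\mathcal S(\mathcal H)$, one gets $dY_t=e^{\alpha_0 t}\big(\alpha_0 V_K(\rho(t-))+V_K(\mathcal L(\rho(t-)))\big)dt+dM_t$ with drift $\le 0$ and $M_t$ a local martingale. Hence $Y_t$ is a nonnegative local supermartingale, therefore a supermartingale, and converges a.s.\ to a finite limit. Since $\frac1t\ln V_K(\rho(t))=\frac1t\ln Y_t-\alpha_0$ and $\limsup_t\frac1t\ln Y_t\le 0$ a.s., \eqref{eq:as_mean_rate_conv} follows (on the event that $\rho(t)$ reaches $\mathcal I_S(\mathcal H)$ the bound is trivial, since then $V_K(\rho(s))=0$ thereafter by invariance).

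To obtain the sharpened bound \eqref{eq:as_as_rate_conv} I would analyse $\ln V_K(\rho(t))$ directly. On the event $\{\rho(t)\notin\mathcal I_S(\mathcal H)\ \forall t\}$ (the complement being trivial), It\^o's formula for jump-diffusions yields $d\ln V_K(\rho(t))=\theta_K(\rho(t-))\,dt+dM_t$, where
\[
\theta_K(\rho)=\frac{V_K(\mathcal L(\rho))}{V_K(\rho)}-\frac{1}{2V_K(\rho)^2}\sum_{i=0}^{p}V_K(\mathcal G_i(\rho))^2+\sum_{i=p+1}^{n}\lambda_i(\rho)\Big[\ln\tfrac{\tilde r_i(\rho)}{V_K(\rho)}-\tfrac{\tilde r_i(\rho)}{V_K(\rho)}+1\Big],
\]
with $\lambda_i(\rho)=\mathrm{tr}(\mathcal J_i(\rho))$, $\tilde r_i(\rho)=V_K(\mathcal J_i(\rho))/\lambda_i(\rho)$, and $M_t$ a local martingale of bracket rate $\sum_{i\le p}\big(V_K(\mathcal G_i(\rho))/V_K(\rho)\big)^2+\sum_{i>p}\lambda_i(\rho)\big(\ln(\tilde r_i(\rho)/V_K(\rho))\big)^2$. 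By \eqref{eq:plankey} the first term of $\theta_K$ is $\le-\alpha_0$, while the other two are $\le 0$ (the last since $\ln x-x+1\le 0$); thus $\theta_K\le-\alpha_0$ on $\mathcal S(\mathcal H)\setminus\mathcal I_S(\mathcal H)$, and $\beta_0:=-\sup_{\rho\notin\mathcal I_S(\mathcal H)}\theta_K(\rho)\ge\alpha_0$ (one may sharpen $\beta_0$ by taking the supremum over a neighbourhood of $\mathcal I_S(\mathcal H)$ only, as $\rho(t)\to\mathcal I_S(\mathcal H)$ by \eqref{eq:as_mean_rate_conv}). The role of $\mathbf{(SP)}$ is to control the bracket of $M_t$: writing everything in a basis adapted to $\mathcal H=\mathcal H_S\oplus\mathcal H_R$, where invariance forces each $C_i$ block upper-triangular, one finds $V_K(\mathcal J_i(\rho))=\mathrm{tr}\big((C_i^{RR})^*K^{RR}C_i^{RR}\rho^{RR}\big)$, and $\mathbf{(SP)}$ forces $(C_i^{RR})^*C_i^{RR}\ge\gamma_i P_R$, so $\tilde r_i(\rho)/V_K(\rho)$ is bounded below by a positive constant and $\lambda_i(\rho)\big(\ln(\tilde r_i(\rho)/V_K(\rho))\big)^2$ stays bounded even where $\lambda_i(\rho)\to 0$. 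Together with the (unconditional) boundedness of $V_K(\mathcal G_i(\rho))/V_K(\rho)$, this makes the bracket rate of $M_t$ uniformly bounded, so $M_t/t\to 0$ a.s.\ by the law of large numbers for martingales. Then $\frac1t\ln V_K(\rho(t))=\frac1t\ln V_K(\rho_0)+\frac1t\int_0^t\theta_K(\rho(s-))\,ds+\frac1t M_t\le-\beta_0+o(1)$, giving \eqref{eq:as_as_rate_conv}.

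The main obstacle is the last step: proving $M_t/t\to 0$ for the logarithmic functional, i.e.\ controlling the jump contributions $\lambda_i(\rho)\big(\ln(\tilde r_i(\rho)/V_K(\rho))\big)^2$ to its bracket near the boundary $\mathcal I_S(\mathcal H)$, where $V_K\to 0$ and $\ln V_K$ is singular. This is exactly where $\mathbf{(SP)}$ enters: if some $(C_i^{RR})^*C_i^{RR}$ is degenerate, then along the corresponding directions of approach to $\mathcal I_S(\mathcal H)$ one has $\tilde r_i/V_K\to 0$ while $\lambda_i$ may stay bounded away from $0$, so the bracket rate blows up and the argument collapses (and in that regime one indeed does not expect an a.s.\ rate better than $\alpha_0$). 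Subsidiary care is needed to justify the jump-diffusion It\^o formula and the drift/martingale splitting (in particular well-definedness of the stochastic integrals away from $\mathcal I_S(\mathcal H)$), and to import from the structure theory behind Theorems~\ref{thm-mean-as}--\ref{thm:convlyap} and \cite{ticozzi-markovian,ticozzi-QDS} the block-triangular form of the $C_i$ and the identity $\mathcal L^*(K)=P_R\mathcal L^*(K)P_R$ used in \eqref{eq:plankey}.
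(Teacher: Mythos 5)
Your proposal is correct in its essentials and follows the same three-step architecture as the paper: a linear Lyapunov functional plus Gr\"onwall for \eqref{eq:mean_mean_rate_conv}, an exponentially weighted nonnegative supermartingale for \eqref{eq:as_mean_rate_conv}, and an It\^o/Doleans--Dade representation of $\ln V$ combined with the strong law of large numbers for square-integrable martingales---with $\mathbf{(SP)}$ guaranteeing strict positivity and a bounded jump bracket---for \eqref{eq:as_as_rate_conv}. The differences lie in where the constants come from. For the first two bounds the paper does not take $K$ from Theorem~\ref{thm:convlyap}; it uses Lemma~\ref{lem:spectral_radius_majoration} (a Perron--Frobenius perturbation argument) to produce, for every $\epsilon>0$, a $K_R>0$ with $\mathcal L_R^*K_R\le -(\alpha_0-\epsilon)K_R$, where $\alpha_0$ is the spectral abscissa of $\mathcal L_R$; your operator inequality $-\mathcal L^*(K)\ge\varepsilon P_R$ is correct and suffices for the existential statement, but the resulting $\alpha_0=\varepsilon/\|K\|$ is in general strictly smaller than the sharp spectral rate. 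For the third bound the paper runs the logarithmic argument on $V=\mathrm{tr}(P_R\,\cdot)$ itself rather than on $V_K$, so the drift splits cleanly as $\mathrm{tr}(\mathcal L_R\rho_{R,\mathrm{red.}}(s-))-\alpha(\rho(s-),\rho_{R,\mathrm{red.}}(s-))$; it bounds the first piece by $-\alpha_0'=-\min\,\mathrm{spec}\bigl(\sum_j C_{j,P}^*C_{j,P}\bigr)$ and, using almost sure convergence to $\mathcal I_S(\mathcal H)$ together with the continuity of $\alpha$ under $\mathbf{(SP)}$ (Lemma~\ref{lem:alpha continuous}), bounds the second piece asymptotically by $-\alpha_1$, the minimum of $\alpha$ over the limit set $\mathcal I_S(\mathcal H)\times\mathcal S(\mathcal H_R)$, giving $\beta_0=\max(\alpha_0,\alpha_0'+\alpha_1)$. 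Your global $\beta_0=-\sup\theta_K$ does satisfy $\beta_0\ge\alpha_0$ as stated, but would typically yield no actual improvement; your parenthetical remark about taking the supremum only over a neighbourhood of $\mathcal I_S(\mathcal H)$ is precisely the step the paper makes quantitative, and it is what powers Section~5, where $\alpha_1$ is made arbitrarily large while $\alpha_0$ stays fixed. Your diagnosis of the role of $\mathbf{(SP)}$ (lower bound on the jump ratios, hence a bounded bracket rate for the jump martingale) matches the paper's Corollary~\ref{cor:V>0} and Lemma~\ref{lem:martingales}.
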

This result shows that the exponential stability in mean and the almost sure one are comparable if $\mathcal H_S$ is GAS. In particular, the same bound for the Lyapunov exponent holds. Moreover, under assumption {\bf SP}, explicitly considering the measurements can lead to an improved exponential stability. Actually, in Section \ref{sec:simulations} we show that adding a specific extra indirect measurement, one can arbitrary increase the stability rate for the stochastic dynamics while keeping the same stability for the average dynamics. That is, one can taylor an experiment with $\alpha_0$ fixed and $\beta_0$ arbitrary large. 

Next corollary elucidates further the convergence towards $\mathcal H_S$.
\begin{cor}\label{thm:littleo}
Assume $\mathcal H_S$ is GAS and let $\beta_0\geq\alpha_0>0$ be the same as in Theorem \ref{thm:expo_conv}. Then for any $\epsilon>0$,
\begin{align}\label{eq:littleo_mean}
\left\|\rho(t)-P_S\rho(t)P_S\right\|=&o(e^{-\frac12(\alpha_0-\epsilon)t})\quad \mbox{a.s. and in $L^1$--norm}
\end{align}
and if moreover assumption {\bf SP} is fulfilled,
\begin{align}\label{eq:littleo_as}
\left\|\rho(t)-P_S\rho(t)P_S\right\|=&o(e^{-\frac12(\beta_0-\epsilon)t})\quad a.s.
\end{align}
\end{cor}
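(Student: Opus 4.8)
The plan is to deduce the Corollary from Theorem \ref{thm:expo_conv} by a routine comparison between the trace norm of the off-diagonal remainder $\rho(t)-P_S\rho(t)P_S$ and the scalar quantity $V(\rho(t))=\mathrm{tr}(P_R\rho(t))$ already controlled there. The key observation is that for any state $\rho\in\mathcal S(\mathcal H)$ one has the operator-theoretic bound
\[
\left\|\rho-P_S\rho P_S\right\|\ \leq\ c\,\sqrt{\mathrm{tr}(P_R\rho)}
\]
for a dimension-dependent constant $c$ (e.g. $c=2\sqrt{\dim\mathcal H_R}$ works, or $c=2$ if $\|\cdot\|$ is the trace norm). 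Indeed, writing $\rho$ in block form with respect to $\mathcal H=\mathcal H_S\oplus\mathcal H_R$, the difference $\rho-P_S\rho P_S$ consists of the blocks $\rho_{SR}$, $\rho_{RS}$, $\rho_{RR}$; since $\rho\geq0$, each off-diagonal entry is bounded via Cauchy--Schwarz by the geometric mean of the corresponding diagonal entries, and $\mathrm{tr}(\rho_{RR})=\mathrm{tr}(P_R\rho)=V(\rho)$ controls all of $\rho_{RR}$ and, through Cauchy--Schwarz, $\rho_{SR}$ and $\rho_{RS}$ up to the factor $\sqrt{\mathrm{tr}(\rho_{SS})}\leq1$. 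This yields the stated square-root bound.

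**Next I would** combine this pointwise inequality with the exponential decay estimates. From \eqref{eq:as_mean_rate_conv} we have $\limsup_{t\to\infty}\frac1t\ln V(\rho(t))\leq-\alpha_0$ almost surely, so for every $\epsilon>0$ and almost every trajectory there is a (random) time $T$ beyond which $V(\rho(t))\leq e^{-(\alpha_0-\epsilon)t}$; hence
\[
\left\|\rho(t)-P_S\rho(t)P_S\right\|\ \leq\ c\,e^{-\frac12(\alpha_0-\epsilon)t}\qquad\text{for }t\geq T,
\]
which is exactly the $o(e^{-\frac12(\alpha_0-\epsilon)t})$ claim after replacing $\epsilon$ by $\epsilon/2$ (the little-$o$, as opposed to big-$O$, is free because $\epsilon>0$ is arbitrary: any fixed rate $\frac12(\alpha_0-\epsilon)$ is strictly beaten by the rate obtained from a smaller $\epsilon'$). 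The $L^1$-in-expectation statement follows the same way from \eqref{eq:mean_mean_rate_conv} applied to $\hat\rho(t)=\mathbb E[\rho(t)]$, together with the convexity/linearity giving $\mathbb E\|\rho(t)-P_S\rho(t)P_S\|_1\leq$ (const)$\cdot\mathbb E[\mathrm{tr}(P_R\rho(t))]^{1/2}$ — here one should be slightly careful and use Jensen's inequality, $\mathbb E\sqrt{X}\leq\sqrt{\mathbb E X}=\sqrt{\mathrm{tr}(P_R\hat\rho(t))}$, to pass the square root through the expectation. Under assumption \textbf{SP}, one repeats the argument verbatim with \eqref{eq:as_as_rate_conv} and the exponent $\beta_0$ in place of $\alpha_0$, giving \eqref{eq:littleo_as}.

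**The only mild subtlety**, rather than a genuine obstacle, is bookkeeping the $\tfrac12$ factors and the quantifier on $\epsilon$ correctly when converting a $\limsup$ rate bound into a little-$o$ statement; once the square-root inequality $\|\rho-P_S\rho P_S\|\leq c\sqrt{\mathrm{tr}(P_R\rho)}$ is in hand, everything else is immediate. I would state and prove that inequality as a short preliminary lemma (it is purely linear algebra on positive block matrices and holds for the trace norm with $c=2$, since $\|\rho_{RR}\|_1=\mathrm{tr}\rho_{RR}$ and $\|\rho_{SR}\|_1=\|\rho_{RS}\|_1\leq\sqrt{\mathrm{tr}\rho_{SS}\,\mathrm{tr}\rho_{RR}}\leq\sqrt{\mathrm{tr}\rho_{RR}}$, so the three off-diagonal blocks together contribute at most $\mathrm{tr}\rho_{RR}+2\sqrt{\mathrm{tr}\rho_{RR}}\leq 3\sqrt{V(\rho)}$ once $V(\rho)\leq1$), and then the corollary is a two-line deduction.
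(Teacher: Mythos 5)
Your proposal is correct and follows essentially the same route as the paper's proof: the key inequality $\left\|\rho-P_S\rho P_S\right\|\leq 3\sqrt{V(\rho)}$, obtained via Cauchy--Schwarz on the off-diagonal blocks of the positive matrix $\rho$, combined with the $\limsup$ rate bounds of Theorem \ref{thm:expo_conv}. Your explicit appeal to Jensen's inequality to pass the square root through the expectation in the $L^1$ statement is a detail the paper's own proof glosses over, and it is the right way to fill that step.
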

\bigskip

The remainder of paper consist essentially of the proofs of these results. It is structured as follows. Section 2 is devoted to recalling and deriving new properties of the deterministic semigroup dynamics. We first recall the result of \cite{ticozzi-markovian,ticozzi-QDS}. In particular, these give the explicit structure of the Lindblad operator that guarantees that a subspace is invariant and GAS. Next we prove Theorem \ref{thm:convlyap}, namely that the Perron--Frobenius Theorem for completely positive evolutions \cite{Evans} can be used to systematically derive a {\em linear} Lyapunov function that shows that a subspace is GAS.  As a corollary, we can directly prove the first bound in Theorem \ref{thm:expo_conv}. In Section 3, we present the probabilistic setting needed to formally introduce SME and quantum trajectories. Section 4 is mainly dedicated to the improved bound for stochastic exponential stability, which is the content of the second part of Theorem \ref{thm:expo_conv}. Finally in Section 5, we provide an example in which we can increase arbitrarily the stability rate for the almost sure convergence, while leaving the one for the average one invariant.

\section{Deterministic Result--A Converse Lyapunov Theorem}

This section concerns deterministic results. That is results for the mean of $(\rho(t))$. Let us first recall the result of the general structure of the Lindblad operator $\mathcal L$ leading to invariance and GAS \cite{ticozzi-markovian,ticozzi-QDS}. The definition of $\mathcal H_S$ and $\mathcal H_R$ allows for a convenient decomposition of all the matrices.  Let $X\in \mathcal B(\mathcal H)$, then its matrix representation in an appropriately chosen basis can be written as
\[X=\left(\begin{array}{cc}X_S&X_P\\X_Q&X_R\end{array}\right),\]
where $X_S,X_R,X_P,X_Q$ are operators from $\mathcal H_S$ to $\mathcal H_S$, from $\mathcal H_R$ to $\mathcal H_R$, from $\mathcal H_R$ to $\mathcal H_S$ and from $\mathcal H_S$ to $\mathcal H_R$, respectively.
In the rest of the paper, the indexes $S,R,P,Q$ will refer to the same blocks as above.

The invariance and GAS properties in mean are directly related to the Jordan structure and/or irreducibility of the completely positive map semi group $e^{t\mathcal L}$. We here recall the relevant results without proof and refer the interested reader to the original articles .
\begin{thm}[\cite{ticozzi-QDS}]\label{thm:mean inv condition}
The subspace $\mathcal H_S$ is invariant in mean if and only if 
\[\forall j, C_{j,Q}=0\mbox{ and } iH_P-\frac12\sum_j C_{j,S}^*C_{j,P}=0.\]
\end{thm}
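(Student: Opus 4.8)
The plan is to characterize when $\mathcal{H}_S$ is invariant in mean by working directly with the block decomposition of the Lindblad generator and imposing the condition that $\mathrm{tr}(P_R\hat\rho(t)) = 0$ be preserved. First I would observe that $\mathcal{H}_S$ is invariant in mean precisely when $\mathcal{I}_S(\mathcal{H})$ is an invariant set for the flow $e^{t\mathcal{L}}$, and since $\mathcal{I}_S(\mathcal{H})$ is a face of $\mathcal{S}(\mathcal{H})$ characterized by $\mathrm{tr}(P_R\rho) = 0$, invariance is equivalent to the requirement that $\frac{d}{dt}\mathrm{tr}(P_R\hat\rho(t))\big|_{t=0} = 0$ for every $\rho_0\in\mathcal{I}_S(\mathcal{H})$, i.e. $\mathrm{tr}(P_R\mathcal{L}(\rho)) = 0$ for all $\rho$ supported on $\mathcal{H}_S$. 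Because $\mathrm{tr}(P_R\rho) \geq 0$ on all states and vanishes on $\mathcal{I}_S(\mathcal{H})$, this first-derivative condition is both necessary and sufficient: if it holds, $\mathrm{tr}(P_R\hat\rho(t))$ is a nonnegative quantity with zero derivative whenever it is zero, hence stays zero; conversely invariance forces the derivative to vanish.

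Next I would compute $\mathrm{tr}(P_R\mathcal{L}(\rho))$ for $\rho = P_S\rho P_S$ using the explicit form of $\mathcal{L}$ in \eqref{eq:def_functions}. Writing each $C_j$ and $H$ in block form and using $\rho = \begin{pmatrix}\rho_S & 0\\ 0 & 0\end{pmatrix}$, the Hamiltonian part contributes $-i\,\mathrm{tr}(P_R[H,\rho]) = -i\,\mathrm{tr}(H_Q\rho_S - \rho_S H_Q^*)\cdot(\text{nothing})$ — more carefully, only the $C_j\rho C_j^*$ dissipator term can produce a positive $R$-block, while the anticommutator term and the Hamiltonian commutator produce traceless-on-$R$ contributions unless they feed the $R$ block. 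A direct block multiplication gives $\mathrm{tr}(P_R\mathcal{L}(\rho)) = \sum_j \mathrm{tr}(C_{j,Q}\rho_S C_{j,Q}^*) + \big(\text{terms linear in } C_{j,Q} \text{ and in } iH_P - \tfrac12\sum_j C_{j,S}^*C_{j,P}\big)$. The leading term $\sum_j \mathrm{tr}(C_{j,Q}\rho_S C_{j,Q}^*)$ is a sum of nonnegative terms, so for the whole expression to vanish for every $\rho_S\geq 0$ one needs each $C_{j,Q} = 0$; once that holds the cross terms also simplify, and the remaining constraint collapses exactly to $iH_P - \tfrac12\sum_j C_{j,S}^*C_{j,P} = 0$ (this last piece typically enters through the off-diagonal $P$-block of $\mathcal{L}(\rho)$, whose vanishing is forced once we further demand that the dynamics not leak in higher order, or equivalently by also testing $\frac{d}{dt}$ of the off-diagonal coherences).

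I would organize the argument so that the positivity structure does the heavy lifting: the sum $\sum_j \mathrm{tr}(C_{j,Q}\rho_S C_{j,Q}^*)\geq 0$ vanishing for all $\rho_S\geq 0$ immediately yields $C_{j,Q}=0$, and this is the only place where we genuinely exploit that $\rho$ ranges over a positive cone rather than an arbitrary linear space. After substituting $C_{j,Q}=0$ back, the off-diagonal block $(\mathcal{L}(\rho))_P$ for $\rho = P_S\rho_S P_S$ reduces to $-i H_P \rho_S$-type and anticommutator contributions, and requiring that $\mathcal{I}_S(\mathcal{H})$ be invariant (equivalently that no coherence between $\mathcal{H}_S$ and $\mathcal{H}_R$ be generated) pins down $iH_P - \tfrac12\sum_j C_{j,S}^*C_{j,P} = 0$. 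Conversely, plugging both conditions in and checking that $\mathcal{L}$ maps the subspace $\{X : X_Q = 0, X_R = 0\}$ into itself — hence $e^{t\mathcal{L}}$ preserves $\mathcal{I}_S(\mathcal{H})$ — gives sufficiency.

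The main obstacle I anticipate is bookkeeping the off-diagonal $P$-block correctly: unlike the $\mathrm{tr}(P_R\,\cdot)$ computation, which only sees the diagonal $R$-block and is controlled cleanly by positivity, tracking exactly which terms in $(\mathcal{L}(\rho))_P$ survive requires carefully expanding $C_j\rho C_j^* - \tfrac12(C_j^*C_j\rho + \rho C_j^*C_j)$ in blocks with $C_{j,Q}=0$ already imposed, and recognizing that the surviving piece is precisely $\big(iH_P - \tfrac12\sum_j C_{j,S}^*C_{j,P}\big)\rho_S$ up to sign. Since this is the reference's Theorem (stated as \cite{ticozzi-QDS}), I would simply cite that the structural computation is carried out there and content myself with explaining the positivity-based reduction and the equivalence between set-invariance and the infinitesimal condition; the rest is the routine block algebra that the cited paper handles in full.
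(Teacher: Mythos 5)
The paper itself offers no proof of this theorem: it is explicitly recalled without proof from \cite{ticozzi-QDS}, so there is no in-paper argument to compare against. Judged on its own terms, your proposal has one genuine logical gap, and it sits exactly where you lean hardest. You assert that invariance of $\mathcal I_S(\mathcal H)$ is equivalent to the first-derivative condition $\mathrm{tr}(P_R\,\mathcal L(\rho))=0$ for all $\rho\in\mathcal I_S(\mathcal H)$, on the grounds that a nonnegative quantity with vanishing derivative at its zeros must stay zero. That inference is false: $f(t)=t^2$ is nonnegative with $f(0)=f'(0)=0$ and does not stay at zero. Concretely, take no dissipators at all and $H$ with $H_P\neq 0$ (pure unitary rotation out of $\mathcal H_S$): then $\mathrm{tr}(P_R\,\mathcal L(\rho))=0$ for every $\rho\in\mathcal I_S(\mathcal H)$, yet $\mathcal H_S$ is not invariant --- the leakage appears at second order in $t$, precisely through the $P$-block you only address later. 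So the first-derivative trace condition is necessary but not sufficient, and your subsequent handling of the $P$-block (``demand that the dynamics not leak in higher order'') cannot be an optional afterthought; it is the other half of the necessary-and-sufficient infinitesimal condition, and as written it contradicts your earlier sufficiency claim.

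The clean way to close the gap: since states supported in $\mathcal H_S$ span the corner $P_S\mathcal B(\mathcal H)P_S$ and $e^{t\mathcal L}$ is linear and maps states to states, invariance of $\mathcal I_S(\mathcal H)$ is equivalent to invariance of the linear subspace $P_S\mathcal B(\mathcal H)P_S$, which for a semigroup is equivalent to $\mathcal L(\rho)$ having vanishing $R$-, $P$- and $Q$-blocks for every $\rho=P_S\rho P_S$. The $R$-block condition gives $\sum_j C_{j,Q}\rho_S C_{j,Q}^*=0$ for all $\rho_S\ge 0$, hence $C_{j,Q}=0$ by the positivity argument you correctly identify (this is indeed the only place positivity of $\rho$ is used); with that in place a short block computation gives $\bigl(\mathcal L(\rho)\bigr)_P=\rho_S\bigl(iH_P-\tfrac12\sum_j C_{j,S}^*C_{j,P}\bigr)$, whose vanishing for all $\rho_S$ is exactly the second condition, and the $Q$-block is its adjoint. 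Your block algebra is otherwise on the right track; what must be replaced is the claimed equivalence between set-invariance and the scalar first-derivative test, in favour of the full subspace-invariance criterion for $\mathcal L$.
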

\begin{thm}[\cite{ticozzi-markovian}]
The subspace $\mathcal H_S$ is GAS in mean if and only if no invariant subspaces are included in $\bigcap_{j}\ker(C_{j,P})$.
\end{thm}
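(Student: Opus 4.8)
Working in the setting of Theorem~\ref{thm:mean inv condition}, i.e.\ with $\mathcal H_S$ invariant and hence $C_{j,Q}=0$, $iH_P=\tfrac12\sum_j C_{j,S}^*C_{j,P}$, set $\rho_R:=P_R\rho P_R$ and $\mathcal K:=\bigcap_j\ker(C_{j,P})\subseteq\mathcal H_R$. The plan is to produce a weak linear Lyapunov function and combine it with the structure theory of the CPTP semigroup $e^{t\mathcal L}$. First I would observe that GAS in mean is equivalent to $\mathrm{tr}(P_R\hat\rho(t))\to 0$ for every initial state, since $\hat\rho(t)\geq 0$ forces the $P$- and $Q$-blocks to be dominated by the $R$-block. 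Taking $V(\rho)=\mathrm{tr}(P_R\rho)$, a direct computation of $\mathrm{tr}(P_R\mathcal L(\rho))$ in block form, in which the two invariance relations cancel all off-diagonal contributions, yields $\frac{d}{dt}V(\hat\rho(t))=-\sum_j\mathrm{tr}\!\big(C_{j,P}\,\hat\rho_R(t)\,C_{j,P}^*\big)\le 0$. Hence $V$ is non-increasing along the mean dynamics and its derivative vanishes exactly on the set $Z:=\{\rho:\mathrm{supp}(\rho_R)\subseteq\mathcal K\}$.

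The ``only if'' direction is then immediate: if $\mathcal W\subseteq\mathcal K$ is a nonzero invariant subspace then $\mathcal W\subseteq\mathcal H_R$, so any state $\rho_0$ supported on $\mathcal W$ stays supported on $\mathcal W$, whence $V(\hat\rho(t))\equiv 1$ and $\mathcal H_S$ is not GAS.

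For the converse I would argue by contradiction, assuming $\mathcal K$ contains no nonzero invariant subspace while $\mathcal H_S$ is not GAS. Failure of GAS, together with monotonicity and boundedness of $V$, gives an initial state with $V(\hat\rho(t))\downarrow c>0$; Cesàro-averaging, $\bar\sigma:=\lim_{T\to\infty}\tfrac1T\int_0^T\hat\rho(t)\,dt$ exists (the imaginary-axis spectrum of $\mathcal L$ is semisimple, since $e^{t\mathcal L}$ is trace-preserving and positive, hence bounded uniformly in $t$), is a stationary state, and has $V(\bar\sigma)=c>0$ and $\frac{d}{dt}V\equiv 0$, i.e.\ $\bar\sigma\in Z$. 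Now I would invoke the standard structure theory of CPTP semigroups (the Perron--Frobenius theorem of \cite{Evans}): $\mathrm{supp}(\bar\sigma)$ is an invariant subspace that splits into minimal invariant enclosures, and since $P_R\bar\sigma\neq 0$ at least one enclosure $\mathcal V$ is not contained in $\mathcal H_S$. Because $\mathcal V\cap\mathcal H_S$ is an invariant subspace inside $\mathcal V$, minimality forces $\mathcal V\cap\mathcal H_S=\{0\}$, so $P_R$ is injective on $\mathcal V$ and $\mathcal W:=P_R\mathcal V\neq\{0\}$. On $\mathcal V$ there is a faithful stationary state $\bar\sigma'$, and faithfulness gives $\mathrm{supp}(\bar\sigma'_R)=P_R\mathcal V=\mathcal W$ while stationarity gives $\bar\sigma'\in Z$, hence $\mathcal W\subseteq\mathcal K$. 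The last step is to verify that $\mathcal W$ is itself an invariant subspace in the sense of Theorem~\ref{thm:mean inv condition}: the inclusion $C_j\mathcal W\subseteq\mathcal W$ follows from invariance of $\mathcal V$ and of $\mathcal H_S$ together with $\mathcal W\subseteq\mathcal K$, and the Hamiltonian relation $iP_{\mathcal W}HP_{\mathcal W^\perp}=\tfrac12\sum_j P_{\mathcal W}C_j^*P_{\mathcal W}C_jP_{\mathcal W^\perp}$ is obtained in the same way from the enclosure structure of $\mathcal V$. This contradicts the hypothesis and proves GAS.

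The main obstacle is precisely this final extraction: manufacturing from $\bar\sigma$ a nonzero invariant subspace that lies inside $\bigcap_j\ker(C_{j,P})$. The portion involving the $C_j$'s is routine linear algebra once $P_R\mathcal V\subseteq\mathcal K$ is known, but checking the Hamiltonian part of the invariance conditions for $\mathcal W$ is the delicate point, and is where one genuinely needs $\mathcal V$ to be an invariant enclosure carrying a faithful stationary state rather than merely a subspace annihilated by the $C_{j,P}$. As a variant, one can instead invoke LaSalle's invariance principle on the compact state space with $V$ as a weak Lyapunov function, reducing GAS to the claim that the largest invariant subset of $Z$ consists only of states supported on $\mathcal H_S$; the same enclosure-extraction argument is then needed to establish that claim.
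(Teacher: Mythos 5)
This statement is one the paper explicitly recalls \emph{without proof} from \cite{ticozzi-markovian} (``We here recall the relevant results without proof and refer the interested reader to the original articles''), so there is no in-paper argument to compare against; I assess your proposal on its own. It is correct, granted two readings you adopt and should make explicit: the theorem presupposes that $\mathcal H_S$ is already invariant (otherwise $\mathrm{tr}(P_R\mathcal L(\rho))$ picks up the extra nonnegative term $\sum_j\mathrm{tr}(C_{j,Q}\rho_S C_{j,Q}^*)$ and the $C_{j,P}$ alone do not control the derivative), and ``no invariant subspaces'' means no nonzero ones. The skeleton is sound: $V=\mathrm{tr}(P_R\,\cdot\,)$ is a weak linear Lyapunov function with $\tfrac{d}{dt}V(\hat\rho(t))=-\sum_j\mathrm{tr}(C_{j,P}\hat\rho_R(t)C_{j,P}^*)$, the Ces\`aro average of a non-converging trajectory is a legitimate stationary state $\bar\sigma$ (the semigroup is uniformly bounded, hence its peripheral spectrum is semisimple), $V(\bar\sigma)=c>0$, and stationarity forces $\mathrm{supp}(\bar\sigma_R)\subseteq\mathcal K:=\bigcap_j\ker C_{j,P}$.

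The step you flag as delicate in fact closes cleanly, and you can bypass the minimal-enclosure machinery entirely by taking $\mathcal W:=\mathrm{supp}(\bar\sigma_R)=P_R\,\mathrm{supp}(\bar\sigma)$ directly. Write invariance of a subspace $\mathcal U$ in operator form as $C_j\mathcal U\subseteq\mathcal U$ and $A\mathcal U\subseteq\mathcal U$ with $A=iH+\tfrac12\sum_jC_j^*C_j$; this is equivalent to the block conditions of Theorem~\ref{thm:mean inv condition}. For $w=P_Rv$ with $v\in\mathrm{supp}(\bar\sigma)$ one has $P_SC_jw=C_{j,P}w=0$ because $w\in\mathcal K$, and
\begin{equation*}
P_SAw=\Bigl(iH_P+\tfrac12\sum_jC_{j,S}^*C_{j,P}\Bigr)w=\sum_jC_{j,S}^*C_{j,P}\,w=0,
\end{equation*}
using the invariance relation $iH_P=\tfrac12\sum_jC_{j,S}^*C_{j,P}$ for $\mathcal H_S$ and again $w\in\mathcal K$; then $C_jw=P_RC_j(v-P_Sv)=P_RC_jv\in P_R\,\mathrm{supp}(\bar\sigma)$, and likewise for $A$, by invariance of both $\mathrm{supp}(\bar\sigma)$ and $\mathcal H_S$. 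Thus $\mathcal W$ is a nonzero invariant subspace inside $\mathcal K$ -- the desired contradiction -- with no need for a faithful stationary state on a minimal enclosure. Your longer route is also valid (every fact you invoke -- the support of a stationary state is invariant, it decomposes into minimal enclosures each carrying a faithful stationary state, and $\mathrm{supp}(P_R\sigma P_R)=P_R\,\mathrm{supp}(\sigma)$ -- is standard in finite dimension), but it does more work than the statement requires.
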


With these results in mind we turn to the proofs of Theorem \ref{thm:convlyap} and \eqref{eq:mean_mean_rate_conv} in Theorem \ref{thm:expo_conv}. A key tool in deriving stability rates of the GAS subspace $\mathcal H_S$ is the construction of a suitable {\em linear Lyapunov function} for the corresponding semi group evolution. While typically this is not possible for linear systems, where the natural Lyapunov functions are quadratic, in this case we can exploit: (i) the positivity of the evolution, so that a Perron--Frobenius type result holds; and (ii) the fact that the stable set has support on a subspace of ${\cal H}$.

Let us recall some well known facts on semi groups of completely positive maps. 
A continuous semi group on $\mathcal{B}({\mathcal H })$ is completely positive if and only if its generator $\cal K$ has the form \cite{lindblad1976,GKS}:
\begin{equation}\label{CPform}
\mathcal K(X)=G^*X+XG+\Psi(X),
\end{equation}
where $\Psi(X)$ is a completely positive map from $\mathcal B(\mathcal H)$ to itself and $G$ is an element of $\mathcal B(\mathcal H)$. A positive linear map from $\mathcal B(\mathcal H)$ to itself is called irreducible if it does not admit nontrivial invariant subspaces or, equivalently, invariant operators are full rank. A generator is said to be irreducible if the semi group it generates is of irreducible maps.

The following Lemma gives a sufficient condition on $\Psi$ such that $\mathcal K$ generates a semi group of irreducible completely positive maps. It is a weaker version of \cite[Theorem 2.3]{jaksic}. We reproduce the proof from \cite{jaksic} for the reader convenience.
\begin{lem}\label{lem:irr from Psi}
Let $\mathcal K X=G^*X+XG+\Psi(X)$ be the generator of a semi group of completely positive maps on $\mathcal B(\mathbb C^d)$, $d\in\mathbb N$. If $\Psi$ is irreducible, then  $e^{t\mathcal K}$ is irreducible $\forall t\in\mathbb R_+.$
\end{lem}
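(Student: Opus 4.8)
The plan is to use the following standard fact about completely positive maps and semigroups: for a CP map $\Phi$, a subspace $\mathcal K \subseteq \mathbb C^d$ is invariant (meaning $\Phi$ leaves invariant the hereditary subalgebra / the compression associated to $\mathcal K$, equivalently $P\Phi(X)P = \Phi(PXP)$ appropriately, or in the Heisenberg picture $\Phi^*$ fixes the corresponding projector-cone structure) iff the support projector $P$ on $\mathcal K$ satisfies a fixed-point-type relation. The cleanest route is to work with the operator $T_t = e^{t\mathcal K}$ and show: if $T_t$ is \emph{not} irreducible for some $t>0$, then there is a nontrivial projector $P$ that is invariant for $T_s$ for \emph{all} $s \geq 0$ (because invariant subspaces of a one-parameter semigroup of positive maps can be taken common to the whole semigroup, by the semigroup property and continuity — the set of ``enclosures'' is closed under the dynamics), and then differentiate at $t=0$ to get an invariance condition on the generator $\mathcal K$, and finally use the structure $\mathcal K(X) = G^*X + XG + \Psi(X)$ to conclude that $P$ must be invariant for $\Psi$, contradicting irreducibility of $\Psi$.

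Concretely, first I would recall (or prove in a line) that a projector $P$ reduces a CP semigroup $T_t$ — i.e. $\mathcal K_S := P\mathcal K(P \cdot P)P$ restricted to $P\mathcal B(\mathbb C^d)P$ still generates a CP semigroup and $T_t(PXP)$ has support in $\mathcal K$ — precisely when $\mathcal K$ satisfies the differential invariance condition at $P$. The key algebraic step is: writing $X = PXP$ (so $X \geq 0$ supported on $\mathcal K$) and using $\mathcal K(X) = G^*X + XG + \Psi(X)$, the ``off-diagonal'' block $P^\perp \mathcal K(X) P^\perp$ (which must vanish for invariance, taking $X=P$) equals $P^\perp \Psi(P) P^\perp$, since $P^\perp G^* P = 0$'s contribution... more carefully: $P^\perp \mathcal K(P) P^\perp = P^\perp G^* P P^\perp + P^\perp P G P^\perp + P^\perp \Psi(P) P^\perp = P^\perp \Psi(P) P^\perp$. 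Because $\Psi(P) \geq 0$ and $\Psi$ is positive, $P^\perp \Psi(P) P^\perp = 0$ forces $\Psi(P)$ to have support inside $\mathcal K$, i.e. $\mathcal K$ is an invariant subspace for $\Psi$ — contradicting that $\Psi$ is irreducible (which by definition admits no nontrivial invariant subspace). Hence no nontrivial $P$ can reduce $T_t$, so $T_t$ is irreducible for every $t$.

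The main obstacle — and the step that needs the most care — is the passage from ``$T_t$ is reducible for a single time $t$'' to ``there is a common invariant projector for the whole semigroup, hence a differential invariance condition on $\mathcal K$.'' One has to argue that if $\mathcal K$ (a closed subspace) is invariant for the single map $T_t$, then by positivity and the semigroup/continuity structure one can produce a possibly different nontrivial subspace invariant for all $T_s$; the standard trick is to consider $\mathcal K' = \overline{\mathrm{span}}\bigcup_{s \geq 0} T_s(\text{supports inside }\mathcal K)$ or dually the largest subspace contained in $\mathcal K$ and stable under all $T_s$, and to check it is nontrivial — here finite-dimensionality of $\mathbb C^d$ and positivity are essential, and one uses that an irreducible \emph{map} power-iterated still has a Perron eigenvector, so the largest common-invariant subspace inside a nontrivial $T_t$-invariant one is itself nontrivial. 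Once this reduction is in hand, differentiating the invariance relation $T_s(P\mathcal B P) \subseteq P\mathcal B P$ at $s=0$ gives $\mathcal K(P\mathcal B P) \subseteq$ (operators with support in $\mathcal K$) $+ \mathbb R P$-type correction, and in particular $P^\perp \mathcal K(P) P^\perp = 0$, feeding into the algebraic computation above. Everything else — positivity of $\Psi(P)$, the block computation, and the definition of irreducibility — is routine.
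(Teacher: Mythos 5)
Your algebraic core is correct: if a nontrivial projector $P$ satisfied $P^\perp e^{s\mathcal K}(P)P^\perp=0$ for \emph{all} $s\ge0$, then differentiating at $s=0$ gives $P^\perp\mathcal K(P)P^\perp=P^\perp\Psi(P)P^\perp=0$ (the $G$-terms drop out since $PP^\perp=0$), and positivity of $\Psi(P)$ then forces $\mathrm{supp}\,\Psi(P)\subseteq\mathrm{Ran}\,P$, contradicting irreducibility of $\Psi$. But this only rules out a \emph{common} invariant subspace for the whole semigroup, whereas the Lemma asserts that each individual map $e^{t\mathcal K}$ is irreducible. The step you yourself flag as the main obstacle --- upgrading ``$e^{t_0\mathcal K}$ is reducible for one $t_0$'' to a generator-level invariance condition --- is a genuine gap, and the constructions you sketch do not close it. If $\mathcal V$ is invariant only for the single map $e^{t_0\mathcal K}$, the span $\bigvee_{s\ge0}\mathrm{supp}\bigl(e^{s\mathcal K}(P_{\mathcal V})\bigr)$ can be all of $\mathbb C^d$, and the largest subspace of $\mathcal V$ stable under every $e^{s\mathcal K}$ can be $\{0\}$. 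A concrete illustration of the mechanism (with $\Psi=0$, so not a counterexample to the Lemma itself, but to your reduction): for $\mathcal K(X)=i[H,X]$ with eigenvalue gaps of $H$ commensurate with $2\pi/t_0$, the map $e^{t_0\mathcal K}$ is the identity, so \emph{every} subspace is invariant at time $t_0$, while a generic such subspace has full orbit span and trivial stable core. The classical ``once positive, always positive'' dichotomy for Markov chains, which you are implicitly invoking, relies on the submultiplicativity $p_{ii}(s+u)\ge p_{ii}(s)p_{ii}(u)$; its naive noncommutative analogue fails because $A\ge0$ does not dominate $\langle\chi|A|\chi\rangle\,|\chi\rangle\langle\chi|$.

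The paper avoids this issue entirely by fixing $t>0$ and proving the stronger \emph{positivity improving} property of $e^{t\mathcal K}$ directly. It first uses Evans--H{\o}egh-Krohn to get that $({\rm Id}+\Psi)^{d-1}$, hence $e^{t\Psi}$, is positivity improving; then it writes $\Gamma_t=e^{-t\mathcal K_0}e^{t\mathcal K}$ (with $\mathcal K_0:X\mapsto G^*X+XG$) as a Dyson series in $\Psi_s=e^{-s\mathcal K_0}\circ\Psi\circ e^{s\mathcal K_0}$, all of whose terms are nonnegative and continuous in the time variables. Vanishing of $\langle\psi|\Gamma_{t_0}(|\phi\rangle\langle\phi|)\psi\rangle$ at the single time $t_0$ therefore kills every term, in particular $\langle\psi|\Psi^{\circ n}(|\phi\rangle\langle\phi|)\psi\rangle=0$ for all $n$, contradicting the positivity improving property of $e^{t\Psi}$. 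Some such argument exploiting the term-by-term positivity of the perturbation series (or an equivalent analyticity/propagation argument) is exactly what is missing from your proposal; without it, the proof does not go through.
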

\begin{proof}
The proof provides actually a stronger result. Namely, it shows that for any nonzero $|\phi\rangle,|\psi\rangle\in\mathbb C^d$, for any $t>0$, $\langle \psi|e^{t\mathcal K}(|\phi\rangle\langle\phi|)\psi\rangle>0$. This property is called positivity improving in \cite{jaksic}.
First, from \cite[Lemma 2.1]{Evans}, $\Psi$ irreducible implies that
\[\langle \psi|({\rm Id}+\Psi)^{d-1}(|\phi\rangle\langle\phi|)\psi\rangle>0\]
for any nonzero $\phi,\psi\in \mathbb C^d$.
Making an expansion of both $e^{t\Psi}$ and $({\rm Id}+\Psi)^{d-1}$ one see that all the terms are positive, and all the terms in the second expansion also appear in the first one. Hence, for any $t>0$, there exists $c>0$ such that $e^{t\Psi}\geq c ({\rm Id}+\Psi)^{d-1}$. Therefore, $e^{t\Psi}$ is positivity improving.

Now notice that that $e^{t\mathcal K_0}:X\mapsto e^{tG^*}Xe^{tG}$ is a semi group of completely positive maps. We define the family of completely positive maps: 
\[\Gamma_t(X)=e^{-t\mathcal K_0}e^{t\mathcal K}X.\]
Since for any $t>0$ and $|\phi\rangle\in\mathbb C^d$, $|\phi\rangle\neq0$, $e^{tG}|\phi\rangle\neq0$, it remains to show that for any $t>0$, $|\phi\rangle,|\psi\rangle\in\mathbb C^d$, $|\phi\rangle\neq0,|\psi\rangle\neq0$, $\langle \psi|\Gamma_t(|\phi\rangle\langle\phi|)\psi\rangle>0$.

Suppose $\langle \psi|\Gamma_{t_0}(|\phi\rangle\langle\phi|)\psi\rangle=0$ for a fixed $t_0$. The Dyson expansion of $\Gamma_{t_0}$ is
\[\Gamma_{t_0}={\rm Id} +\sum_n \int_{0<s_1<\ldots<s_n<t_0} \Psi_{s_1}\circ\cdots\circ\Psi_{s_n} ds_1\ldots ds_n\]
where $s\rightarrow\Psi_s=e^{-s\mathcal K_0}\circ \Psi \circ e^{s\mathcal K_0}$ is a family of continuous completely positive maps. It follows
\begin{align*}
\langle\psi|\Gamma_{t_0}(|\phi\rangle&\langle\phi|)\psi\rangle=\\
&|\langle\psi|\phi\rangle|^2 +\sum_n \int_{0<s_1<\ldots<s_n<t_0} \langle \psi|\Psi_{s_1}\circ\cdots\circ\Psi_{s_n}(|\phi\rangle\langle\phi|)\psi\rangle ds_1\ldots ds_n.
\end{align*}
All the integrands are positive and continuous in $s_1,\ldots,s_n$. Hence the assumption $\langle\psi|\Gamma_{t_0}(|\phi\rangle\langle\phi|)\psi\rangle=0$ implies $\langle \psi|\Psi_{s_1}\circ\cdots\circ\Psi_{s_n}(|\phi\rangle\langle\phi|)\psi\rangle=0$ for all $(s_1,\ldots,s_n)\in[0,t_0]^n$. Especially $\langle \psi|\Psi^{\circ n}(|\phi\rangle\langle\phi|)\psi\rangle=0$ for all $n\in\mathbb N\setminus\{0\}$. It follows that for all $t>0$, $\langle \psi|e^{t\Psi}(|\phi\rangle\langle\phi|)\psi\rangle=0$. This implies that either $\phi$ or $\psi$ must be $0$. Hence for all non zero $\phi$ and $\psi$, and for all times $t$, $\langle\psi|\Gamma_{t}(|\phi\rangle\langle\phi|)\psi\rangle>0$ thus, setting $\psi=e^{tG}\psi'$ one obtain that for any $t$,
\[\langle \psi'|e^{t\mathcal K}(|\phi\rangle\langle\phi|)\psi'\rangle>0.\]
The map $e^{t\mathcal K}$ is positivity improving and therefore irreducible. 
\end{proof}

We turn to the finer structure of the semi group $t\mapsto e^{t\mathcal L}$ implied by the stability of $\mathcal H_S$. Let $\rho_S\in \mathcal B(\mathcal H_S)$ and $\rho_R\in\mathcal B(\mathcal H_R)$ such that $\rho_S\geq 0$ and $\rho_R\geq 0,$ but not necessarily of trace one. Define, using the block--decomposition with respect to the orthogonal direct sum decomposition $\mathcal H=\mathcal H_S\oplus \mathcal H_R$ introduced before, the maps:
\begin{align*}
{\cal L}_S(\rho_S)=&-i[H_S,\rho_S] + \sum_j C_{j,S}\rho_SC_{j,S}^* -\frac12\{C_{j,S}^*C_{j,S},\rho_S\},\\
{\cal L}_R(\rho_R)=&-i[H_R,\rho_R] + \sum_j C_{j,R}\rho_RC_{j,R}^* -\frac12\{C_{j,P}^*C_{j,P}+C_{j,R}^*C_{j,R},\rho_R\}.
\end{align*}
Then, the following Proposition holds.
\begin{prop}\label{prop:LR}
The family $\{e^{t\mathcal L_S}\}_{t\geq 0}$ is a semi group of trace preserving completely positive maps, and $\{e^{t\mathcal L_R}\}_{t\geq 0}$ is a semi group of trace non increasing completely positive maps.
\end{prop}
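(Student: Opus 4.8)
The plan is to verify directly that each of $\mathcal{L}_S$ and $\mathcal{L}_R$ has the Lindblad/GKS form \eqref{CPform}, i.e.\ $\mathcal{K}(X) = G^*X + XG + \Psi(X)$ with $\Psi$ completely positive, and then to read off the trace properties from the particular $G$ that appears. For $\mathcal{L}_S$, I would set $G_S = -iH_S - \frac12\sum_j C_{j,S}^*C_{j,S}$ and $\Psi_S(\rho_S) = \sum_j C_{j,S}\rho_S C_{j,S}^*$; the map $\Psi_S$ is manifestly completely positive as a sum of conjugations, and a direct rearrangement shows $\mathcal{L}_S(\rho_S) = G_S^*\rho_S + \rho_S G_S + \Psi_S(\rho_S)$, hence $\{e^{t\mathcal{L}_S}\}$ is a semigroup of completely positive maps. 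For $\mathcal{L}_R$ the analogous choice is $G_R = -iH_R - \frac12\sum_j\big(C_{j,P}^*C_{j,P} + C_{j,R}^*C_{j,R}\big)$ and $\Psi_R(\rho_R) = \sum_j C_{j,R}\rho_R C_{j,R}^*$, which is again completely positive; the same rearrangement gives the GKS form for $\mathcal{L}_R$, so $\{e^{t\mathcal{L}_R}\}$ is completely positive as well.

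It remains to identify the trace behaviour. For any generator of the form \eqref{CPform} one has $\mathrm{tr}(\mathcal{K}(X)) = \mathrm{tr}\big((G + G^* + \Psi^\dagger(\mathrm{Id}))X\big)$, where $\Psi^\dagger$ is the adjoint of $\Psi$ with respect to the Hilbert--Schmidt inner product; equivalently $\tfrac{d}{dt}\mathrm{tr}(e^{t\mathcal{K}}(X)) = \mathrm{tr}\big((G+G^*+\Psi^\dagger(\mathrm{Id}))\, e^{t\mathcal{K}}(X)\big)$. For $\mathcal{L}_S$ we compute $\Psi_S^\dagger(\mathrm{Id}) = \sum_j C_{j,S}^* C_{j,S}$, while $G_S + G_S^* = -\sum_j C_{j,S}^*C_{j,S}$, so the sum vanishes and $\mathrm{tr}(e^{t\mathcal{L}_S}(\rho_S)) = \mathrm{tr}(\rho_S)$ for all $t$: trace preserving. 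For $\mathcal{L}_R$ we get $\Psi_R^\dagger(\mathrm{Id}) = \sum_j C_{j,R}^*C_{j,R}$, whereas $G_R + G_R^* = -\sum_j\big(C_{j,P}^*C_{j,P} + C_{j,R}^*C_{j,R}\big)$, so the combination equals $-\sum_j C_{j,P}^*C_{j,P} \leq 0$. Since $e^{t\mathcal{L}_R}$ preserves positivity, applying the above identity to a state $\rho_R \geq 0$ and integrating gives $\tfrac{d}{dt}\mathrm{tr}(e^{t\mathcal{L}_R}(\rho_R)) = -\sum_j \mathrm{tr}\big(C_{j,P}^*C_{j,P}\, e^{t\mathcal{L}_R}(\rho_R)\big) \leq 0$, hence $\mathcal{L}_R$ is trace non-increasing.

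The only genuinely delicate point — and the one I would be most careful about — is checking that the block formulas for $\mathcal{L}_S$ and $\mathcal{L}_R$ are indeed the correct restrictions induced by the full $\mathcal{L}$ under the invariance hypothesis, i.e.\ that the $P$- and $Q$-blocks that would otherwise couple $\mathcal{H}_S$ and $\mathcal{H}_R$ are handled consistently. This is where Theorem \ref{thm:mean inv condition} enters: invariance in mean forces $C_{j,Q} = 0$ for all $j$, so that the action of $\mathcal{L}$ on $S$-supported states stays $S$-supported with generator exactly $\mathcal{L}_S$, and the term $C_{j,P}^*C_{j,P}$ in $\mathcal{L}_R$ is precisely the "leakage" contribution that survives in the $R$-block of $\frac12\{C_j^*C_j,\cdot\}$ once $C_{j,Q}=0$. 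Once the bookkeeping of the blocks is pinned down, everything else is the routine verification sketched above; I would present the GKS identification and the two trace computations as the body of the proof.
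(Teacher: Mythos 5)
Your proof is correct and follows essentially the same route as the paper's: both establish complete positivity by exhibiting the GKS form \eqref{CPform} and then read off the trace behaviour from $\mathrm{tr}(\mathcal L_S(\rho_S))=0$ and $\mathrm{tr}(\mathcal L_R(\rho_R))=-\sum_j\mathrm{tr}(C_{j,P}^*C_{j,P}\rho_R)\le 0$ combined with the positivity of $e^{t\mathcal L_R}$. Two cosmetic remarks: your $G_S$ should be $+iH_S-\tfrac12\sum_j C_{j,S}^*C_{j,S}$ (as written, $G_S^*\rho_S+\rho_SG_S$ produces $+i[H_S,\rho_S]$ rather than $-i[H_S,\rho_S]$, and similarly for $G_R$), and your closing paragraph about the invariance hypothesis is not actually needed for this statement, since the proposition concerns $\mathcal L_S$ and $\mathcal L_R$ as defined directly by their block formulas; their identification with the restrictions of $\mathcal L$ under invariance is the content of the subsequent proposition.
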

\begin{proof}
Both generators have the form \eqref{CPform} and thus generate semigroups of completely positive maps.

The operator $\mathcal L_S$ have the form of a Lindblad operator, thus $\textrm{tr}[\mathcal L_S(\rho_S)]=0$ for any $\rho_S\in\mathcal B(\mathcal H_S)$ and $\{e^{t\mathcal L_S}\}_{t\geq 0}$ is trace preserving.

Since for any $t\geq 0$, $e^{t\mathcal L_R}$ is a positive map, we have $\rho_R\geq 0\Rightarrow e^{t\mathcal L_R}\rho_R\geq 0$. Moreover,
$$e^{t\mathcal L_R}\rho_R=\rho_R+\int_0^t \mathcal L_Re^{s\mathcal L_R}\rho_R ds$$
and ${\rm tr}(\mathcal L_R\rho_R)=-\sum_j {\rm tr}(C_{j,P}^*C_{j,P}\rho_R)\leq0$ for any positive semi definite $\rho_R\in \mathcal B(\mathcal H_R)$.
Thus $e^{t\mathcal L_R}$ is trace non-increasing for all $t\in \mathbb R_+$.
\end{proof}
\noindent The following proposition clarifies the signification of the semigroups we just defined. Recall the averaged evolution is given by
\[\hat \rho(t)=e^{t\mathcal L}\rho(0),\]
where ${\cal L}$ has the form given in \eqref{eq:def_functions}.
\begin{prop}
Assume $\mathcal H_S$ is invariant.
If $\rho\in\mathcal I_S(\mathcal H)$, then
\begin{equation}e^{t\mathcal L}\rho=\left(\begin{array}{cc}e^{t\mathcal L_S} \rho_S&0\\0&0\end{array}\right)\label{eq:L_S semi group}\end{equation}
and for any $\rho(0)\in\mathcal S(\mathcal H)$, the $R$-block of $\hat\rho(t)$ is
\begin{equation}\hat\rho_R(t)=e^{t\mathcal L_R}\rho_R(0).\label{eq:L_R semi group}\end{equation}
\end{prop}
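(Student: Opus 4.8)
The plan is to deduce both identities from one block-wise computation of $\mathcal{L}$, using the structural characterization of invariance recalled in Theorem \ref{thm:mean inv condition}. By that theorem, ``$\mathcal{H}_S$ invariant'' means exactly that $C_{j,Q}=0$ for every $j$ and that the operator $A:=iH_P-\tfrac12\sum_j C_{j,S}^*C_{j,P}$ (from $\mathcal H_R$ to $\mathcal H_S$) vanishes; taking adjoints also yields $A^*=-iH_P^*-\tfrac12\sum_j C_{j,P}^*C_{j,S}=0$. First I would write a generic $\rho$ in the $S,P,Q,R$ block form and, using $C_{j,Q}=0$ together with $H_Q=H_P^*$ (self-adjointness of $H$), expand the four blocks of $\mathcal{L}(\rho)=-i[H,\rho]+\sum_j\bigl(C_j\rho C_j^*-\tfrac12\{C_j^*C_j,\rho\}\bigr)$ as products of $2\times 2$ block matrices.

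For \eqref{eq:L_S semi group}, take $\rho\in\mathcal{I}_S(\mathcal H)$, so $\rho_P=\rho_Q=\rho_R=0$ and $\rho=\left(\begin{smallmatrix}\rho_S&0\\0&0\end{smallmatrix}\right)$. The expansion then gives: $S$-block $=-i[H_S,\rho_S]+\sum_j\bigl(C_{j,S}\rho_S C_{j,S}^*-\tfrac12\{C_{j,S}^*C_{j,S},\rho_S\}\bigr)=\mathcal{L}_S(\rho_S)$; $R$-block $=0$, since every one of its terms contains a factor $\rho_P$, $\rho_Q$ or $\rho_R$; $P$-block $=\rho_S A=0$ and $Q$-block $=A^*\rho_S=0$ by the invariance condition and its adjoint. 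Thus the linear subspace $\{\left(\begin{smallmatrix}X&0\\0&0\end{smallmatrix}\right):X\in\mathcal B(\mathcal H_S)\}$ is $\mathcal{L}$-invariant and $\mathcal{L}$ acts on it as the block embedding of $\mathcal{L}_S$; exponentiating the restriction gives $e^{t\mathcal{L}}\rho=\left(\begin{smallmatrix}e^{t\mathcal{L}_S}\rho_S&0\\0&0\end{smallmatrix}\right)$.

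For \eqref{eq:L_R semi group}, keep $\rho$ generic and isolate the $R$-block of $\mathcal{L}(\rho)$. Collecting contributions, it equals $-i[H_R,\rho_R]+\sum_j\bigl(C_{j,R}\rho_R C_{j,R}^*-\tfrac12\{C_{j,P}^*C_{j,P}+C_{j,R}^*C_{j,R},\rho_R\}\bigr)$, i.e.\ $\mathcal{L}_R(\rho_R)$, plus two cross terms that assemble into $A^*\rho_P$ and $\rho_Q A$ (from the Hamiltonian and the anticommutator); both vanish by the invariance condition and its adjoint. Hence $\mathcal{L}(\rho)_R=\mathcal{L}_R(\rho_R)$, independently of $\rho_S,\rho_P,\rho_Q$. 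Applying this along the average trajectory and using $\tfrac{d}{dt}\hat\rho(t)=\mathcal{L}(\hat\rho(t))$, the curve $t\mapsto\hat\rho_R(t)$ solves the closed linear ODE $\dot\sigma=\mathcal{L}_R(\sigma)$ with $\sigma(0)=\rho_R(0)$, so $\hat\rho_R(t)=e^{t\mathcal{L}_R}\rho_R(0)$.

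The work is entirely in the bookkeeping: one must check that all off-diagonal ($P$- and $Q$-) contributions of the Hamiltonian and of the anticommutator combine into exactly $A$ (resp.\ $A^*$), so that Theorem \ref{thm:mean inv condition} is precisely what is needed to annihilate them — there is no deeper obstacle. The only minor care point is that $\mathcal{L}_S$ and $\mathcal{L}_R$ act on $\mathcal B(\mathcal H_S)$ and $\mathcal B(\mathcal H_R)$, so ``restriction'' and ``embedding'' should be read through the obvious block inclusions; Proposition \ref{prop:LR} already guarantees these are legitimate (sub-)Markov generators, so the exponentials above are well defined.
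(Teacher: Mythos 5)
Your proposal is correct and follows essentially the same route as the paper's proof: invoke the structural characterization of invariance from Theorem \ref{thm:mean inv condition}, observe that it forces $\mathcal L\rho=\left(\begin{smallmatrix}\mathcal L_S\rho_S&0\\0&0\end{smallmatrix}\right)$ for $\rho\in\mathcal I_S(\mathcal H)$ and $(\mathcal L\rho)_R=\mathcal L_R(\rho_R)$ in general, and conclude by uniqueness of solutions of the resulting linear ODEs. The only difference is that you spell out the block bookkeeping (correctly — the off-diagonal contributions do assemble into $\rho_S A$, $A^*\rho_S$, $A^*\rho_P$ and $\rho_Q A$ with $A=iH_P-\tfrac12\sum_j C_{j,S}^*C_{j,P}=0$), which the paper leaves implicit.
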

\begin{proof}
Assuming $\rho\in\mathcal I_S(\mathcal H)$, the invariance of $\mathcal H_S$ implies $e^{t\mathcal L}\rho\in\mathcal I_S(\mathcal H)$. From the invariance condition of Theorem \ref{thm:mean inv condition}, it follows that for any $\rho\in\mathcal I_S(\mathcal H)$,
\[\mathcal L\rho=\left(\begin{array}{cc}\mathcal L_S\rho_S&0\\0&0\end{array}\right).\]
Thus $\hat\rho_S(t)$ is the unique solution of $\frac{d\hat\rho_S(t)}{dt}=\mathcal L_S\hat\rho_S(t)$ which is $e^{t\mathcal L_S}\rho_S$.
The invariance condition of Theorem \ref{thm:mean inv condition} gives immediately
\[\frac{d\hat\rho_R(t)}{dt}=\mathcal L_R\hat\rho_R(t)\]
and the result follows from the uniqueness of the solution.
\end{proof}

The next Lemma is the key one. Let us denote the spectral abscissa of ${\cal L}_R$ as:
\begin{equation} \alpha_0:=\min\{-{\rm Re}(\lambda)\,|\,\lambda\in{\rm sp}(\mathcal L_R)\}.\end{equation}
Building on the Perron--Frobenius Theorem for completely positive maps \cite{Evans}, the fact that ${\cal L}_R$ is a trace-nonicreasing CP generator (see Proposition \ref{prop:LR}) implies that there exists a corresponding positive semi-definite eigenoperator. We here show that there exists an arbitrary small perturbation of the generator for which such operator $K_R$ is actually {\em positive}.

\begin{lem}\label{lem:spectral_radius_majoration}
For any $\epsilon>0$ there exists $K_R>0$ such that
\[\mathcal L_R^*(K_R)\leq -(\alpha_0 -\epsilon)K_R\]
where $\mathcal L^*_R$ is the adjoint of $\mathcal L_R$ with respect to the Hilbert--Schmidt inner product on $\mathcal B(\mathcal H_R)$.
\end{lem}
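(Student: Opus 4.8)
The plan is to replace $\mathcal{L}_R$ by a nearby \emph{irreducible} generator, invoke the Perron--Frobenius theorem for completely positive maps \cite{Evans} to obtain a strictly positive eigenoperator, and then undo the perturbation, which converts the resulting eigenvalue identity into the desired inequality. First I would note that, exactly as in \eqref{CPform}, the adjoint generator $\mathcal{L}_R^*$ is again of completely positive type, $\mathcal{L}_R^*(X)=\tilde G^*X+X\tilde G+\tilde\Psi(X)$ with $\tilde\Psi(X)=\sum_j C_{j,R}^*XC_{j,R}$ completely positive and $\tilde G=iH_R-\frac12\sum_j\big(C_{j,P}^*C_{j,P}+C_{j,R}^*C_{j,R}\big)$, and that, since $\mathrm{Re}\,\bar\lambda=\mathrm{Re}\,\lambda$, the spectral abscissa of $\mathcal{L}_R^*$ also equals $-\alpha_0$. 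Then I would fix an irreducible completely positive map on $\mathcal{B}(\mathcal{H}_R)$, for instance $\Phi(X)=\mathrm{tr}(X)\,I_R$ with $I_R$ the identity on $\mathcal{H}_R$, whose Kraus operators $\{\,|e_k\rangle\langle e_l|\,\}_{k,l}$ (for an orthonormal basis $\{e_k\}$ of $\mathcal{H}_R$) manifestly share no nontrivial invariant subspace; and for $\eta>0$ I put
\[\mathcal{L}_{R,\eta}^*(X):=\tilde G^*X+X\tilde G+\tilde\Psi(X)+\eta\,\Phi(X).\]
Since $\tilde\Psi+\eta\Phi$ differs from the irreducible map $\eta\Phi$ by the completely positive map $\tilde\Psi$, it is again irreducible, so Lemma \ref{lem:irr from Psi} guarantees that $e^{t\mathcal{L}_{R,\eta}^*}$ is irreducible (indeed positivity improving) for every $t>0$.

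Next I would apply the Perron--Frobenius theorem \cite{Evans} to the irreducible semigroup $e^{t\mathcal{L}_{R,\eta}^*}$: it provides a strictly positive eigenoperator $K_{R,\eta}>0$ of $\mathcal{L}_{R,\eta}^*$ whose eigenvalue $\lambda_\eta$ is exactly the spectral abscissa of $\mathcal{L}_{R,\eta}^*$ and is real (since $K_{R,\eta}$ is self-adjoint). Because eigenvalues depend continuously on the matrix, $\eta\mapsto\lambda_\eta$ is continuous, and since $\mathcal{L}_{R,\eta}^*\to\mathcal{L}_R^*$ as $\eta\to0^+$ while the spectral abscissa of $\mathcal{L}_R^*$ equals $-\alpha_0$, we get $\lambda_\eta\to-\alpha_0$. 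As $-\alpha_0<-(\alpha_0-\epsilon)$, I may fix $\eta>0$ so small that $\lambda_\eta\le-(\alpha_0-\epsilon)$ and set $K_R:=K_{R,\eta}$; then, using $\Phi(K_R)=\mathrm{tr}(K_R)\,I_R\ge0$ and $K_R>0$,
\[\mathcal{L}_R^*(K_R)=\mathcal{L}_{R,\eta}^*(K_R)-\eta\,\Phi(K_R)=\lambda_\eta K_R-\eta\,\Phi(K_R)\le\lambda_\eta K_R\le-(\alpha_0-\epsilon)K_R,\]
which is the assertion.

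The main obstacle is the two points where irreducibility is essential: one has to verify that the perturbed map $\tilde\Psi+\eta\Phi$ (equivalently, that $\Phi$ by itself) is irreducible so Lemma \ref{lem:irr from Psi} can be applied, and one has to extract from the Perron--Frobenius theory that the eigenoperator carried by the spectral abscissa is genuinely positive definite rather than merely positive semi-definite --- this is precisely why the $O(\eta)$ perturbation is introduced, since for $\mathcal{L}_R$ itself the Perron eigenoperator may be degenerate. The remaining ingredient, continuity of the spectral abscissa in $\eta$, is routine.
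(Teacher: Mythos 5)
Your proof is correct and follows essentially the same route as the paper: perturb the generator by a small multiple of an irreducible completely positive map, invoke Lemma \ref{lem:irr from Psi} and the Perron--Frobenius theorem of \cite{Evans} to obtain a strictly positive eigenoperator at the spectral abscissa, use continuity of the spectral abscissa in $\eta$, and discard the nonnegative term $\eta\Phi(K_R)$ to get the inequality. The only differences are cosmetic — you make the irreducible perturbation concrete ($\Phi(X)=\mathrm{tr}(X)I_R$) and spell out why the perturbed CP part remains irreducible, which the paper leaves implicit.
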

\begin{proof}
By definition, for any $t\in \mathbb R_+$, the spectral radius of $e^{t\mathcal L_R}$ is $e^{-\alpha_0 t}$. If the completely positive maps of the semi group $e^{t\mathcal L_R}$ are irreducible the existence of $K_R>0$ follows directly from Perron--Frobenius Theorem \cite{Evans}. Indeed it implies there exists $ K_R>0$ such that $\mathcal L_R^*K_R=-\alpha_0 K_R$.

If $\mathcal L_R$ generate a semi group of reducible completely positive maps such positive definite $K_R$ may not exist but a perturbed generator will generate irreducible maps. Let $\Psi:\mathcal B(\mathcal H_R)\to\mathcal B(\mathcal H_R)$ be an irreducible completely positive map. From Lemma \ref{lem:irr from Psi}, it follows that for all $\eta>0$, $\mathcal L_\eta^*=\mathcal L_R^*+\eta \Psi$ is the generator of a semi group of irreducible completely positive maps. Let $\alpha_\eta=\min\{-{\rm Re}(\lambda)\,|\,\lambda\in{\rm sp}(\mathcal L_\eta^*)\}$. From Perron--Frobenius Theorem \cite{Evans}, there exists $K_\eta>0$ such that $\mathcal L_\eta^* K_\eta=-\alpha_\eta K_\eta$. 

Since $\lim_{\eta\to 0}\mathcal L_\eta^*=\mathcal L_R^*$, we have $\lim_{\eta\to 0}\alpha_\eta=\alpha_0$. Hence for any $\epsilon>0$ there exists a $\eta$ small enough such that $\alpha_\eta\geq\alpha_0-\epsilon$. Thus $\mathcal L_R^*K_\eta=-\alpha_\eta K_\eta - \eta \Psi(K_\eta)\leq -(\alpha_0-\epsilon)K_\eta -\eta \Psi(K_\eta)$. Since $\Psi$ is positive and $K_\eta>0$, $\mathcal L_R^* K_\eta\leq -(\alpha_0-\epsilon)K_\eta$ and the result follows setting $K_R=K_\eta$.
\end{proof}

In the construction of our Lyapunov function we shall need the following notation. We extend any linear operator $K_R$ on $\mathcal H_R$ to a linear operator on $\mathcal H$ by putting \[K=\left(\begin{array}{cc}0&0\\0&K_R\end{array}\right).\]
For any such extension $K$ of an operator $K_R>0$, let
\begin{align}
V_K:\ &\mathcal S(\mathcal H)\to[0,1]\label{eq:VK}\\
	&\rho\mapsto{\rm tr}(K\rho)={\rm tr}(K_R\rho_R).\nonumber
\end{align}.

\noindent Being ${\cal L}_R$ trace-nonincreasing implies that its spectrum has a negative-semidefinite real part. We next show, using again a Perron-Frobenius approach, that strict negativity, or equivalently positivity of $\alpha_0,$ is equivalent to GAS.
\begin{lem}\label{lem:alpha0pos}
 The subspace $\mathcal H_S$ is GAS, if and only if
 $$\alpha_0>0.$$
\end{lem}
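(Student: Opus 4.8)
The plan is to reduce everything to the behaviour of the block $\hat\rho_R(t)$ alone, and then read off the two implications from the Perron--Frobenius theorem for positive maps \cite{Evans} and from Lemma \ref{lem:spectral_radius_majoration}. Since GAS in mean and GAS almost surely are equivalent by Theorem \ref{thm-mean-as}, it suffices to treat the mean version; and since a GAS subspace is in particular invariant, we may use \eqref{eq:L_R semi group}, i.e.\ $\hat\rho_R(t)=e^{t\mathcal L_R}\rho_R(0)$. I would first record the elementary fact that, because $\hat\rho(t)\geq 0$ and $\mathrm{tr}\,\hat\rho(t)=1$, one has on the one hand $\|\hat\rho_R(t)\|\leq\|\hat\rho(t)-P_S\hat\rho(t)P_S\|$ (compress by $P_R$), and on the other hand $\|\hat\rho_P(t)\|=\|\hat\rho_Q(t)\|\leq\|\hat\rho_S(t)\|^{1/2}\|\hat\rho_R(t)\|^{1/2}\leq\|\hat\rho_R(t)\|^{1/2}$ by positivity of the $2\times2$ operator block matrix, together with $\|\hat\rho_S(t)\|\leq\mathrm{tr}\,\hat\rho_S(t)\leq1$. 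Consequently $\|\hat\rho(t)-P_S\hat\rho(t)P_S\|\to0$ if and only if $\hat\rho_R(t)\to0$, equivalently $\mathrm{tr}(e^{t\mathcal L_R}\rho_R(0))\to0$ for every $\rho_R(0)\geq0$. This turns the lemma into a statement about the trace-nonincreasing completely positive semigroup $e^{t\mathcal L_R}$ of Proposition \ref{prop:LR}.

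For the ``only if'' direction, assume $\mathcal H_S$ is GAS. The spectral radius of $e^{t\mathcal L_R}$ is $e^{-\alpha_0 t}$, and by the Perron--Frobenius theorem \cite{Evans} this value is attained on a positive semi-definite eigenoperator: there is $X_R\geq0$, $X_R\neq0$, with $e^{t\mathcal L_R}X_R=e^{-\alpha_0 t}X_R$ (it is enough to invoke this for one fixed time and iterate, if one wants to use only the map version). Since $X_R\geq0$ and $X_R\neq0$, $\mathrm{tr}\,X_R>0$, so $\rho_0:=0\oplus(X_R/\mathrm{tr}\,X_R)$ is a genuine state. Its averaged evolution satisfies $\hat\rho_R(t)=e^{-\alpha_0 t}X_R/\mathrm{tr}\,X_R$, hence $\|\hat\rho(t)-P_S\hat\rho(t)P_S\|\geq\|\hat\rho_R(t)\|=e^{-\alpha_0 t}\,\|X_R\|/\mathrm{tr}\,X_R$. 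GAS forces the left-hand side to $0$, so $e^{-\alpha_0 t}\to0$ and $\alpha_0>0$.

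For the ``if'' direction, assume $\alpha_0>0$ and apply Lemma \ref{lem:spectral_radius_majoration} with $\epsilon=\alpha_0/2$ to get $K_R>0$ with $\mathcal L_R^*(K_R)\leq-\tfrac{\alpha_0}{2}K_R$. For an arbitrary initial state, $\hat\rho_R(t)=e^{t\mathcal L_R}\rho_R(0)\geq0$, so
\[
\frac{d}{dt}\,\mathrm{tr}\big(K_R\hat\rho_R(t)\big)=\mathrm{tr}\big(\mathcal L_R^*(K_R)\,\hat\rho_R(t)\big)\leq-\tfrac{\alpha_0}{2}\,\mathrm{tr}\big(K_R\hat\rho_R(t)\big),
\]
where the inequality uses $\hat\rho_R(t)\geq0$ and the operator inequality for $\mathcal L_R^*(K_R)$. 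Grönwall's lemma gives $\mathrm{tr}(K_R\hat\rho_R(t))\leq e^{-\alpha_0 t/2}\,\mathrm{tr}(K_R\rho_R(0))$, and since $K_R>0$ there is $c>0$ with $K_R\geq c\,I$, whence $\mathrm{tr}\,\hat\rho_R(t)\leq c^{-1}e^{-\alpha_0 t/2}\,\mathrm{tr}(K_R\rho_R(0))\to0$. Thus $\hat\rho_R(t)\to0$, and by the block bounds of the first paragraph $\|\hat\rho(t)-P_S\hat\rho(t)P_S\|\to0$; so $\mathcal H_S$ is GAS in mean, hence GAS almost surely by Theorem \ref{thm-mean-as}.

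The argument is short once the two inputs---Perron--Frobenius for positive maps and Lemma \ref{lem:spectral_radius_majoration}---are available, and I do not expect a serious obstacle. The point that needs care is the reduction of the first paragraph: one must be sure that the convergence (or non-convergence) of $\mathrm{tr}\,\hat\rho_R(t)$ genuinely controls the full block matrix, which relies both on positivity of $\hat\rho(t)$ and on invariance of $\mathcal H_S$ (used to close the $\mathcal L_R$-dynamics for $\hat\rho_R$). The ``only if'' part also hinges on the fact that the Perron--Frobenius eigenoperator can only be taken positive semi-definite---it may be rank-deficient when $e^{t\mathcal L_R}$ is reducible---which is still enough to build the test state $\rho_0$; repairing this loss of strict positivity in the ``if'' direction is precisely why the perturbed operator $K_R>0$ of Lemma \ref{lem:spectral_radius_majoration} is needed rather than a bare Perron--Frobenius eigenvector.
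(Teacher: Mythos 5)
Your proof is correct and follows essentially the same route as the paper's: Perron--Frobenius applied to $e^{t\mathcal L_R}$ for the ``only if'' direction, and Lemma \ref{lem:spectral_radius_majoration} plus Gr\"onwall for the ``if'' direction. The only difference is presentational --- you spell out the block-matrix reduction showing that $\mathrm{tr}\,\hat\rho_R(t)\to0$ is equivalent to GAS (the paper uses this implicitly here and records the bound $\|\rho-P_S\rho P_S\|\leq 3\sqrt{V(\rho)}$ only later, in the proof of Corollary \ref{thm:littleo}), and you argue the ``only if'' part directly rather than by contradiction.
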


\begin{proof}
Let us first prove that if $\mathcal H_S$ is GAS, we have $\alpha_0>0$. Assume $\alpha_0\leq0$. From Perron--Frobenius Theorem \cite{Evans} there exists $\mu\in\mathcal S(\mathcal H)$ such that $\mu_R\neq 0$ and $e^{t\mathcal L_R}\mu_R=e^{-t\alpha_0}\mu_R$. It follows $V(\hat\mu(t))=e^{-\alpha_0 t}V(\mu)\geq V(\mu)$ for all $t\in\mathbb R_+$. That contradicts the GAS assumption $\lim_{t\to \infty} V(\hat \mu(t))=0$. Hence $\alpha_0>0$.

Concerning the other implication, assume $\alpha_0>0$ and fix $\epsilon$ such that $\alpha_0>\epsilon>0$. Then, by Lemma 3, there exist $K_R\geq I_{\mathcal H_R}$ such that $\mathcal L^*_RK_R\leq -(\alpha_0-\epsilon) K_R$. Using Gronwall's inequality, we get $V_K(\hat\rho(t))\leq e^{-(\alpha_0-\epsilon)t}V_K(\rho_0)$. Since $K_R\geq I_{\mathcal H_R}$, $V(\hat\rho(t))\leq e^{-(\alpha_0-\epsilon)t}V_K(\rho_0)$. It follows that $\mathcal H_S$ is GAS.
\end{proof}

We are now ready to prove Theorem \ref{thm:convlyap} and Theorem \ref{thm:expo_conv} Equation \eqref{eq:mean_mean_rate_conv}.
\begin{proof}[Proof of Theorem \ref{thm:convlyap}]
The ``if'' implication is a direct application of Krasovskii-LaSalle invariance principle. Let us focus on the converse implication. From Lemma \ref{lem:alpha0pos} we can choose a strictly positive operator $K_R$ on $\mathcal H_R$ fulfilling Lemma \ref{lem:spectral_radius_majoration} with $\epsilon=\alpha_0/2$. 
We then clearly have $V_K(\rho)\geq 0,$ and equal to zero if and only if $\rho\in{\cal I}_S({\cal H})$ by construction.
If we compute $V_K({\cal L}(\rho)),$ with $\rho\notin {\cal I}_S({\cal H}),$ we get:
\begin{align*}
V_K({\cal L}(\rho))&={\rm tr}(K{\cal L}(\rho))={\rm tr}(K_R{\cal L}_R(\rho_R))\\
&={\rm tr}(\mathcal L_R^*(K_R)\rho_R)\\
&\leq -\alpha_0/2\;{\rm tr}(K_R\rho_R)\\
&<0.\end{align*}
\end{proof}

\begin{proof}[Proof of Theorem \ref{thm:expo_conv} Equation \eqref{eq:mean_mean_rate_conv}]
From Lemma \ref{lem:spectral_radius_majoration}, there exists $K_R\geq I_{\mathcal H_R}$ such that $V(\hat\rho(t))\leq e^{-(\alpha_0-\epsilon)t}V_K(\rho(0))$. This way for all $\epsilon>0$ and for all $t>0$
$$\frac{1}{t}\ln(V(\hat\rho(t)))\leq -(\alpha_0-\epsilon)+\frac{\ln(V_K(\rho(0)))}{t}$$
Taking first the limsup and then $\epsilon$ goes to zero yields Equation \eqref{eq:mean_mean_rate_conv}.
\end{proof}

\section{Invariant and Stable Subspaces for Quantum Stochastic Master Equations}

\subsection{Probability spaces and stochastic processes}
This section is devoted to the formal introduction of the stochastic models of interest. Consider a filtered probability space $(\Omega,\mathcal F,(\mathcal F_t),\mathbb P)$ satisfying the usual conditions \cite{protter}. Let $(W_j(t)),j=0,\ldots,p$ be standard independent Wiener processes and let $(N_j(dx, dt)),j=p+1,\ldots,n$ be independent adapted Poisson point processes of intensity $dxdt$; the $N_j$'s are independent of the Wiener processes. We assume that $(\mathcal F_t)$ is the natural filtration of the processes $W,N$ and that $\displaystyle \mathcal F_\infty=\bigvee_{t>0}\mathcal F_t=\mathcal F$.

On $(\Omega,\mathcal F,(\mathcal F_t),\mathbb P), $we consider the following stochastic differential equation.
 
 \begin{equation}\label{eq:def_trajectory}
    \begin{split}
    \rho(t)=\rho_0 &+ \int_0^t {\cal L}(\rho(s-))ds \\
		   &+ \sum_{i=0}^p\int_0^t {\cal G}_i(\rho(s-)) dW_i(s) \\
		   &+ \sum_{i=p+1}^n \int_0^t \int_{\mathbb R} \left(\frac{{\cal J}_i(\rho({s-}))}{\textrm{tr}[\mathcal J_i(\rho({s-}))]}-\rho({s-})\right)\mathbf{1}_{0<x<\textrm{tr}[\mathcal J_i(\rho({s-}))]} [N_i(dx,ds)-dxds].
    \end{split}
   \end{equation}
General results of existence and uniqueness of the solution of \eqref{eq:def_trajectory} can be found in \cite{pellegrini1,pellegrini2,pellegrini3,barchielli2009,barchielliholevo}. From Eq. \eqref{eq:def_trajectory}, we introduce the measurement record for counting processes: 
$$\hat{N}_i(t)=\int_0^t \int_{\mathbb R}\mathbf{1}_{0<x<\textrm{tr}[\mathcal J_i(\rho({s-}))]} N_i(dx,ds), i=p+1,\ldots,n.$$
These processes are counting processes with stochastic intensity given by
 $$\int_0^t \textrm{tr}[\mathcal J_i(\rho({s-}))]ds,i=p+1,\ldots,n.$$
 In particular, for any $i\in\{p+1,\ldots, n\}$, the process $(\hat{N}_i(t)-\int_0^t \textrm{tr}[\mathcal J_i(\rho({s-}))]ds)$ is a $(\mathcal{F}_t)$ martingale under the probability $\mathbb P$.

Using the definition of $\hat{N}_i(t),$ we recover Equation \eqref{eq:def_trajectory0} from the Introduction:  
\begin{eqnarray}\label{eq:def_trajectory2}
   d\rho(t)&=&{\cal L}(\rho(t-))dt + \sum_{i=0}^p {\cal G}_i(\rho(t-)) dW_i(t)\nonumber\\&& + \sum_{i=p+1}^n \left(\frac{{\cal J}_i(\rho({t-}))}{\textrm{tr}[\mathcal J_i(\rho({t-}))]}-\rho({t-})\right)(d\hat{N}_i(t)-\textrm{tr}[\mathcal J_i(\rho({t-}))]dt),
\end{eqnarray}

\subsection{Invariant and Stable Subspaces - Proof of Theorem \ref{thm-mean-as}}
The key object in the proof of Theorem \ref{thm-mean-as} is the Lyapunov function:
\begin{align*}
V:&\mathcal S(\mathcal H)\to [0,1]\\
  &\rho\mapsto {\rm tr}(P_R \rho).
\end{align*}
Its relationship with the definition of $\mathcal I_S(\mathcal H)$ is given by the following Lemma (see also e.g. \cite{ticozzi-stochastic}).
\begin{lem}\label{lem:V=0}
\[V(\rho)=0\Leftrightarrow \rho\in\mathcal I_S(\mathcal H),\]
and the process $(V(\rho(t))$ is a positive super martingale.
\end{lem}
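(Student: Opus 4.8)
\textbf{Proof plan for Lemma \ref{lem:V=0}.}

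The first equivalence $V(\rho)=0\Leftrightarrow\rho\in\mathcal I_S(\mathcal H)$ is elementary and should be dispatched immediately: since $\rho\geq0$ and $P_R$ is an orthogonal projector, ${\rm tr}(P_R\rho)=0$ forces $P_R\rho P_R=0$, hence ${\rm tr}(P_R\rho P_R)=0$, which combined with ${\rm tr}(\rho)=1={\rm tr}(P_S\rho)+{\rm tr}(P_R\rho)$ gives ${\rm tr}(P_S\rho)=1$, i.e. $\rho\in\mathcal I_S(\mathcal H)$; the converse is immediate. I would also record here the decomposition $\sqrt{\rho}P_R=0$ so that $\rho$ has block form with only the $S$-block nonzero.

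The substantive claim is that $(V(\rho(t)))_t$ is a positive supermartingale. Positivity is clear since $0\le P_R\le I$. For the supermartingale property the plan is to apply It\^o's formula (for jump-diffusions) to $V(\rho(t))={\rm tr}(P_R\rho(t))$ using \eqref{eq:def_trajectory}. Because $V$ is \emph{linear} in $\rho$, the computation reduces to evaluating $V$ on each of the three driving terms:
\begin{itemize}
\item the drift term contributes $\int_0^t {\rm tr}(P_R\mathcal L(\rho(s-)))\,ds$;
\item each Wiener term contributes a stochastic integral $\int_0^t {\rm tr}(P_R\mathcal G_i(\rho(s-)))\,dW_i(s)$, which is a local martingale (and a true martingale, since everything lives in the compact set $\mathcal S(\mathcal H)$ so the integrands are bounded);
\item each compensated Poisson term contributes $\int_0^t\int_{\mathbb R}{\rm tr}\!\big(P_R(\tfrac{\mathcal J_i(\rho(s-))}{{\rm tr}[\mathcal J_i(\rho(s-))]}-\rho(s-))\big)\mathbf 1_{0<x<{\rm tr}[\mathcal J_i(\rho(s-))]}[N_i(dx,ds)-dx\,ds]$, again a martingale by boundedness.
\end{itemize}
So $V(\rho(t))=V(\rho_0)+\int_0^t {\rm tr}(P_R\mathcal L(\rho(s-)))\,ds + (\text{martingale})$, and it remains to show the drift is nonpositive, i.e. ${\rm tr}(P_R\mathcal L(\rho))\le0$ for every $\rho\in\mathcal S(\mathcal H)$.

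The core estimate is therefore ${\rm tr}(P_R\mathcal L(\rho))\le0$. Writing out $\mathcal L$ from \eqref{eq:def_functions}: the Hamiltonian part gives $-i\,{\rm tr}(P_R[H,\rho])=-i\,{\rm tr}([P_R,H]\rho)$, which need not vanish in general — so here I expect to invoke the standing hypothesis that $\mathcal H_S$ is invariant, and use Theorem \ref{thm:mean inv condition} ($C_{j,Q}=0$ and $iH_P=\tfrac12\sum_j C_{j,S}^*C_{j,P}$); with $C_{j,Q}=0$ one finds ${\rm tr}(P_R C_j\rho C_j^*)={\rm tr}(C_{j,R}^*C_{j,R}\rho_R)$ and $-\tfrac12{\rm tr}(P_R\{C_j^*C_j,\rho\})=-{\rm tr}(C_{j,P}^*C_{j,P}\rho_R)-{\rm tr}(C_{j,R}^*C_{j,R}\rho_R)$, while the cross terms from $H$ and from $C_{j,S}^*C_{j,P}$ cancel by the second invariance identity, leaving exactly ${\rm tr}(P_R\mathcal L(\rho))=-\sum_j{\rm tr}(C_{j,P}^*C_{j,P}\rho_R)=-{\rm tr}(\mathcal L_R(\rho_R))\cdot(-1)$... more precisely $=\sum_j {\rm tr}(C_{j,P}^*C_{j,P}\rho_R)$ with the correct sign giving $\le0$, which is precisely the identity ${\rm tr}(\mathcal L_R(\rho_R))\le0$ already established in Proposition \ref{prop:LR}. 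The main obstacle is thus bookkeeping: correctly reducing ${\rm tr}(P_R\mathcal L(\rho))$ to the $R$-block and matching it to $\mathcal L_R$, which requires the invariance hypothesis. (Alternatively, one can avoid It\^o entirely and deduce the supermartingale property from the already-proven $\hat\rho_R(t)=e^{t\mathcal L_R}\rho_R(0)$ together with the tower property and the fact that $e^{t\mathcal L_R}$ is trace-nonincreasing and positive, applied conditionally on $\mathcal F_s$ — this is cleaner and I would present it this way if the paper's conventions allow.) I would close by noting that a positive supermartingale converges a.s., which is what later sections use.
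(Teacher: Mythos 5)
Your proposal is correct and follows essentially the same route as the paper: evaluate the linear functional $V$ along the SDE, note that the Wiener and compensated-Poisson contributions are bounded and hence true martingales, and reduce the supermartingale property to the drift inequality ${\rm tr}(P_R\mathcal L(\rho))=-\sum_j{\rm tr}(C_{j,P}^*C_{j,P}\rho_R)\le 0$, which, as you rightly flag, uses the invariance conditions of Theorem \ref{thm:mean inv condition}. Only tidy the sign wobble at the end of your block computation: the identity is ${\rm tr}(P_R\mathcal L(\rho))=-\sum_j{\rm tr}(C_{j,P}^*C_{j,P}\rho_R)={\rm tr}(\mathcal L_R(\rho_R))\le 0$, exactly as in Proposition \ref{prop:LR}.
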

\begin{proof}
The equivalence, as well as the positivity of $V$ are immediate consequences definition of $V$ and the fact that $\rho\geq 0$ for any $\rho\in\mathcal S(\mathcal H)$.

For the super martingale property, using the expression \eqref{eq:def_trajectory2} we get for all $t\geq s\geq 0$
\begin{eqnarray*}
\mathbb E(V(\rho(t))\vert\mathcal F_s)=V(\rho(s))+\int_s^t\mathbb E(V(\mathcal L(\rho(u))|\mathcal F_s)du.
\end{eqnarray*}
Now explicit computations give
$$V(\mathcal L(\rho))={\rm tr}[P_R \mathcal L\rho]=-\sum_j {\rm tr}[{C_{j,P}}^*C_{j,P}\,\,\rho_R]\leq 0,$$
for all $\rho\in\mathcal S(\mathcal H)$.
This way, for all $t\geq s\geq0$
\begin{eqnarray*}
\mathbb E(V(\rho(t))\vert\mathcal F_s)\leq V(\rho(s)),
\end{eqnarray*}
which corresponds to the super martingale property.
\end{proof}

We are now ready to prove Theorem \ref{thm-mean-as}.
\begin{proof}[Proof of Theorem \ref{thm-mean-as}]
 Let us start by the invariance. Given Lemma \ref{lem:V=0}, it is sufficient to prove 
$$V(\hat\rho(t))=0,\,\,\forall t\geq 0 \Leftrightarrow V(\rho(t))=0, \forall t\geq 0\quad a.s.$$  Since $V$ is linear, we have for all $t\geq0$ \begin{eqnarray}
V(\hat\rho(t))&=&\mathbb E(V(\rho(t)))
\end{eqnarray} and then the implication \emph{almost surely} $\Rightarrow$ \emph{in mean} is immediate.

For the opposite direction,  let us remark that $V(\rho(t))\geq 0$ for all $t\geq 0$. This way if we assume that $\mathbb E(V(\rho(t)))=V(\hat\rho(t))=0$ for all $t>0$, it follows that $V(\rho(t))=0$, for all $t\geq0$ almost surely and the result holds.

\medskip
For the GAS property, given Lemma \ref{lem:V=0}, it is sufficient to prove 
\[\lim_{t\to \infty} V(\hat\rho(t))=0\Leftrightarrow \lim_{t\to\infty}V(\rho(t))=0\mbox{ a.s.}\]

 The implication \emph{almost surely} $\Rightarrow$ \emph{in mean} follows from dominated convergence Theorem applied on $V$. Indeed, we have $\lim_{t\to \infty}V(\rho(t))=0$ a.s. and $V(\rho(t))\leq 1$, for all $t\geq0$. It follows $$\lim_{t\to \infty} V(\hat\rho(t))=\lim_{t\to \infty}\mathbb E(V(\rho(t)))=\mathbb E(\lim_{t\to\infty}V(\rho(t)))=0.$$

The opposite direction relies on convergence for positive super martingales. On one hand, the subspace being GAS in mean implies that $\lim_{t\to \infty}\mathbb E(V(\rho(t)))=0$ for any initial state $\rho_0\in \mathcal S(\mathcal H)$. Since $V(\rho(t))\geq0$, this convergence corresponds to a $L^1$ convergence to $0$. 
On the other hand, since $0\leq V(\rho(t)\leq 1$ and given Lemma \ref{lem:V=0}, the process $(V(\rho(t))$ is a positive bounded super martingale. It follows from bounded super martingale convergence Theorem, that this process converges almost surely and in $L^1$ to a random variable $V_\infty$. The uniqueness of the $L^1$ limit implies $V_\infty=0$ almost surely.
\end{proof}
\noindent Given that the two notions of GAS are equivalent, from now on we do not specify to which notion we refer when we say that a subspace is GAS.

\section{Exponential stability}
This section is devoted to the proof of Theorem \ref{thm:expo_conv} Equations \eqref{eq:as_mean_rate_conv} and \eqref{eq:as_as_rate_conv}. That is, we establish almost sure upper bounds on
$$\limsup_{t\rightarrow\infty}\frac{1}{t}\ln(V(\rho(t)).$$
Equation \eqref{eq:as_mean_rate_conv} is proved using a super martingale almost sure convergence while Equation \eqref{eq:as_as_rate_conv} relies on the minimization of a function over a convex set and the strong law of large numbers for square integrable martingales. In Section 5, we further discuss the significance and differences of these two bounds by studying some specific examples. 

\subsection{Preliminaries}
We start by introducing a number of functions that will be instrumental to the proofs. For $\rho\in\mathcal S(\mathcal H), \rho_S\in\mathcal S(\mathcal H_S)$ and $\rho_R\in\mathcal S(\mathcal H_R)$, and for $j=1,\ldots,p$, define:

\begin{align*}
r_j(\rho)=&{\rm tr}[(C_j+C_j^*)\rho],\\
r_{j,S}(\rho_S)=&{\rm tr}[(C_{j,S}+C_{j,S}^*)\rho_S],\\
r_{j,R}(\rho_R)=&{\rm tr}[(C_{j,R}+C_{j,R}^*)\rho_R].
\end{align*}
These play the role of the expectations of the measurement records associated to diffusive processes.
On the other hand, for $j=p+1,\ldots,n$,
\begin{align*}
v_j(\rho)=&{\rm tr}[C_j\rho C_j^*],\\
v_{j,S}(\rho_S)=&{\rm tr}[C_{j,S}^*C_{j,S}\rho_S],\\
v_{j,R}(\rho_R)=&{\rm tr}[C_{j,R}^*C_{j,R}\rho_R].
\end{align*}
These correspond to expectations for jump type measurement record processes.
We define the related vectors,
\begin{eqnarray*}
&\mathbf{r}(\rho)=(r_j(\rho))_{j=1,\ldots,p},\quad \mathbf{r}_S(\rho_S)=(r_{j,S}(\rho_S))_{j=1,\ldots,p},\quad\mathbf{r}_R(\rho_R)=(r_{j,R}(\rho_R))_{j=1,\ldots,p},\\
&\mathbf{v}(\rho)=(v_j(\rho))_{j=p+1,\ldots,n},\quad\mathbf{v}_S(\rho_S)=(v_{j,S}(\rho_S))_{j=p+1,\ldots,n},\quad\mathbf{v}_R(\rho_R)=(v_{j,R}(\rho_R))_{j=p+1,\ldots,n}.
\end{eqnarray*}
In the following, for two vectors $\mathbf a,\mathbf b$, the division $\frac{\mathbf a}{\mathbf b}$ is meant element by elements: $\frac{\mathbf a}{\mathbf b}=(\frac{a_j}{b_j})_j$;  $\mathbf a.\mathbf b$ denotes the Euclidean inner product; for any function $f$ of $\mathbb R$, $f(\mathbf a)=(f(a_j))_j$ and $\|{\bf a}\|$ is the Euclidean norm.

\medskip
We recal Assumption {\bf SP} from Theorem \ref{thm:expo_conv}. While not really restrictive it is essential to our proofs.
\begin{quote}
{\bf Assumption SP:} $C_{j,R}^*C_{j,R}>0$ for all $j=p+1,\ldots,n$.
\end{quote}
This assumption particularly implies that for any $j=p+1,\ldots,n$ and any $\rho\in\mathcal S(\mathcal H)\setminus\mathcal I_S(\mathcal H)$, $v_{j,R}(\rho_R)>0$.

\medskip
The following function is central to the definition of $\beta_0$ in Theorem \ref{thm:expo_conv}.
\begin{align*}
\alpha:\;&\mathcal S(\mathcal H)\times\mathcal S(\mathcal H_R)\to\mathbb R_+\\
	&(\rho,\rho_R)\mapsto\left\{\begin{array}{l} 0\quad\mbox{ if }\exists j=p+1,\ldots,n\mbox{ s.t. }v_{j,R}(\rho_R)=0\\
\frac{1}{2}\|\mathbf{r}(\rho)-\mathbf{r}_R(\rho_R)\|^2 + (\mathbf{v}_R(\rho_R)-\mathbf{v}(\rho)).\mathbf{1}+\mathbf{v}(\rho).\ln\left(\frac{\mathbf{v}(\rho)}{\mathbf{v}_R(\rho_R)}\right)\mbox{ else,}\end{array}\right.
\end{align*}
with the convention $x\ln(x)=0$ whenever $x=0$ and $\mathbf{1}=(1)_{j=p+1,\ldots,n}$. Given that definition we have:

\begin{lem}\label{lem:alpha continuous}
Provided assumption {\bf SP} is fulfilled, $\alpha$ is continuous on $\mathcal S(\mathcal H)\times\mathcal S(\mathcal H_R)$ and the following minimum is well defined:
\begin{align*}
\alpha_1=&\min\{\alpha(\rho,\rho_R)\,|\,\rho\in\mathcal I_S(\mathcal H), \rho_R\in\mathcal S(\mathcal H_R)\}.
\end{align*}

\end{lem}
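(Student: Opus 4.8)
The plan is to show that $\alpha$ is continuous on the whole domain $\mathcal S(\mathcal H)\times\mathcal S(\mathcal H_R)$; once this is established, the existence of the minimum $\alpha_1$ is immediate since $\{\rho\in\mathcal I_S(\mathcal H)\}\times\mathcal S(\mathcal H_R)$ is a compact subset of the (finite-dimensional) product space and $\alpha$ is real-valued there. So the real content is the continuity claim, and the only place where it is not obvious is at points where some $v_{j,R}(\rho_R)$ vanishes, i.e. on the ``boundary'' separating the two branches of the definition of $\alpha$.

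First I would record that on the open set where $v_{j,R}(\rho_R)>0$ for \emph{all} $j=p+1,\dots,n$, continuity is clear: $\mathbf r$, $\mathbf r_R$, $\mathbf v$, $\mathbf v_R$ are all polynomial (hence continuous) in their arguments, and the functions $x\mapsto\|\cdot\|^2$, the inner product, and $x\mapsto x\ln(x/y)$ (extended by $0$ at $x=0$, with $y>0$) are continuous there. Likewise, on the open set where $v_{j,R}(\rho_R)=0$ for some fixed $j$, $\alpha\equiv 0$ and is trivially continuous. The subtle points are those $(\rho^{(0)},\rho_R^{(0)})$ with $v_{j,R}(\rho_R^{(0)})=0$ for at least one $j$: here I must check that $\alpha(\rho,\rho_R)\to 0$ along \emph{any} sequence approaching such a point, including sequences that stay in the region where all $v_{j,R}>0$.

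For that I would argue as follows. Fix an index $k$ with $v_{k,R}(\rho_R^{(0)})=0$. Along a sequence $(\rho^{(m)},\rho_R^{(m)})\to(\rho^{(0)},\rho_R^{(0)})$ lying in the ``else'' branch, the first term $\tfrac12\|\mathbf r(\rho^{(m)})-\mathbf r_R(\rho_R^{(m)})\|^2$ and the term $(\mathbf v_R(\rho_R^{(m)})-\mathbf v(\rho^{(m)}))\cdot\mathbf 1$ converge to finite limits by continuity of the polynomial data. The delicate term is $\mathbf v(\rho^{(m)})\cdot\ln(\mathbf v(\rho^{(m)})/\mathbf v_R(\rho_R^{(m)}))$. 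Here I would invoke assumption {\bf SP}: it forces $v_{j,R}(\rho_R)=\mathrm{tr}[C_{j,R}^*C_{j,R}\rho_R]$ to be bounded below by a positive multiple of $\mathrm{tr}\,\rho_R$ whenever $\rho_R$ is a state, but more to the point, along the $k$-th coordinate one has $v_{k,R}(\rho_R^{(0)})=0$, which together with $C_{k,R}^*C_{k,R}>0$ can only happen if... wait --- in fact $\rho_R$ is always a state ($\mathrm{tr}\,\rho_R=1$), so {\bf SP} gives $v_{j,R}(\rho_R)\ge c_j>0$ uniformly on $\mathcal S(\mathcal H_R)$, hence the ``else'' branch is in fact the \emph{only} branch on $\mathcal S(\mathcal H)\times\mathcal S(\mathcal H_R)$ and the case distinction never triggers on the domain of interest. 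So under {\bf SP} the denominators $\mathbf v_R(\rho_R)$ are bounded away from $0$ on $\mathcal S(\mathcal H_R)$, the map $x\mapsto x\ln(x/y)$ is jointly continuous (with the convention $0\cdot\ln 0=0$) for $y$ in a compact interval of $(0,\infty)$, and $\alpha$ is the sum of continuous functions, hence continuous.

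The main obstacle — and the reason assumption {\bf SP} is explicitly invoked in the statement — is precisely controlling the logarithmic term: without a positive lower bound on $v_{j,R}(\rho_R)$ one could not rule out $\ln(v_j/v_{j,R})\to+\infty$ faster than $v_j\to 0$, and continuity would fail at the boundary. Once {\bf SP} is in force this difficulty evaporates as explained, and the existence of $\alpha_1$ follows by compactness of $\{\rho\in\mathcal I_S(\mathcal H)\}$, which is a closed face of the compact set $\mathcal S(\mathcal H)$, times the compact set $\mathcal S(\mathcal H_R)$, together with the extreme value theorem.
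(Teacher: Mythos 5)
Your proof is correct and, once you arrive at the key observation, it is essentially the paper's argument: under \textbf{SP} the compactness of $\mathcal S(\mathcal H_R)$ gives a uniform positive lower bound on every $v_{j,R}$, so the exceptional set where the first branch of the definition could apply is empty, $\alpha$ is a composition of continuous functions, and the minimum exists by compactness of $\mathcal I_S(\mathcal H)\times\mathcal S(\mathcal H_R)$. The sequential analysis near putative boundary points in your middle paragraph is superfluous and could be deleted.
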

\begin{proof} Let introduce the set 
$$A=\{(\rho,\rho_R)\in\mathcal S(\mathcal H)\times\mathcal S(\mathcal H_R)\vert \exists j=p+1,\ldots,n,v_{j,R}(\rho_R)=0\}.$$
The set $A$ corresponds to the set of possible points of discontinuity for the function $\alpha$. By definition $\alpha=0$ on $A$. Nevertheless under the assumption {\bf SP}, since $\mathcal S(\mathcal H_R)$ is compact, we get that
\[\min_{j=p+1,\ldots,n}\min_{\rho_R\in\mathcal S(\mathcal H_R)} v_{j,R}(\rho_R)>0.\]
It follows that $A$ is empty and that $\alpha$ is continuous. Since the underlying set is compact and since $\alpha$ is continuous, the minimum is well defined.
\end{proof}

%

\bigskip
We turn to the different exponents that leads to the proofs of Theorem \ref{thm:expo_conv}. 

\noindent Recall that
\begin{itemize}
\item $-\alpha_0$ is the eigenvalue of $\mathcal {L}_R$ with minimum real part,
\item $\alpha_1$ is given in Lemma \ref{lem:alpha continuous},
\end{itemize}
 and define
\begin{itemize}
\item$\alpha_0'=\min {\rm spec}\left(\sum_{j=1}^n C_{j,P}^*C_{j,P}\right).$
\end{itemize}
Equations \eqref{eq:as_mean_rate_conv} and \eqref{eq:as_as_rate_conv} of Theorem \ref{thm:expo_conv} are transcribed in the two following points
\begin{itemize}
\item Provided $\mathcal H_S$ is GAS,
\begin{align*}
\limsup_{t\rightarrow+\infty} \frac{1}{t}\ln(V(\rho(t))\leq -\alpha_0\quad \mbox{a.s.}
\end{align*}
\item If moreover assumption {\bf SP} is fulfilled, 
\begin{align}
\limsup_{t\rightarrow+\infty} \frac{1}{t}\ln(V(\rho(t))\leq -(\alpha'_0+\alpha_1)\quad a.s.\label{eq:improved_rate_conv} 
\end{align}
Then putting 
$$\beta_0=\max(\alpha_0,\alpha_0'+\alpha_1)$$
we get Equation \eqref{eq:as_as_rate_conv}:
\begin{align*}
\limsup_{t\rightarrow+\infty} \frac{1}{t}\ln(V(\rho(t))\leq -\beta_0\quad a.s.
\end{align*}
\end{itemize}

Before turning to the proofs we clarify the relationship between $\alpha_0$ and $\alpha_0'$, and give a sufficient condition for $\alpha_1>0$.
\begin{prop}\label{prop:alpha_0' leq alpha_0}
Assume $\mathcal H_S$ is GAS, then
\[0\leq\alpha_0'\leq \alpha_0.\]
\end{prop}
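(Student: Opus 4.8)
\textbf{Proof proposal for Proposition \ref{prop:alpha_0' leq alpha_0}.}

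The plan is to exhibit a positive semi-definite eigenoperator of $\mathcal L_R^*$ associated with the eigenvalue $-\alpha_0$, and then test the eigenvalue relation against the positive operator $\sum_j C_{j,P}^*C_{j,P}$. First I would record that $\alpha_0'\geq 0$ is immediate: each $C_{j,P}^*C_{j,P}$ is positive semi-definite, hence so is the sum, and its smallest eigenvalue is nonnegative. The content is the inequality $\alpha_0'\leq\alpha_0$. Since $\mathcal L_R$ is the generator of a trace-nonincreasing semigroup of completely positive maps (Proposition \ref{prop:LR}) and has the form \eqref{CPform}, the Perron--Frobenius theorem for completely positive maps \cite{Evans} applies to $e^{t\mathcal L_R^*}$: there exists $X\geq 0$, $X\neq 0$, with $\mathcal L_R^* X=-\alpha_0 X$. (If one prefers, one may work with $\mathcal L_R$ itself and a positive-semidefinite density operator $\mu_R$ with $\mathcal L_R\mu_R=-\alpha_0\mu_R$, as already used in the proof of Lemma \ref{lem:alpha0pos}; either version works and I will use the latter for concreteness.)

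So take $\mu_R\in\mathcal S(\mathcal H_R)$ with $\mathcal L_R\mu_R=-\alpha_0\mu_R$. Taking the trace of both sides and using the explicit form of $\mathcal L_R$ from the definitions before Proposition \ref{prop:LR}, the unitary and the $C_{j,R}$ terms are trace-annihilated and we are left with
\begin{align*}
-\alpha_0={\rm tr}(\mathcal L_R\mu_R)=-\sum_j{\rm tr}\big(C_{j,P}^*C_{j,P}\,\mu_R\big)=-{\rm tr}\Big(\big(\textstyle\sum_j C_{j,P}^*C_{j,P}\big)\mu_R\Big).
\end{align*}
Now $\mu_R$ is a state (a positive operator of trace one) on $\mathcal H_R$, so the right-hand side is minus the expectation value of the positive operator $\sum_j C_{j,P}^*C_{j,P}$ in the state $\mu_R$; this expectation is at least the smallest eigenvalue of that operator, which is $\alpha_0'$. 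Hence $\alpha_0\geq\alpha_0'$, which is the claim. The GAS hypothesis enters only to guarantee (via Lemma \ref{lem:alpha0pos}) that $\alpha_0>0$, so that all three quantities are genuinely nonnegative and the displayed chain $0\leq\alpha_0'\leq\alpha_0$ makes sense; strictly it is not needed for the inequality $\alpha_0'\le\alpha_0$ itself.

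The only point requiring a little care — and the place I would expect a referee to look — is the justification that the Perron--Frobenius eigenoperator can be taken to be an honest state (positive, trace one) rather than merely positive semi-definite and nonzero; this is exactly the form in which \cite{Evans} is invoked elsewhere in the paper (e.g. in Lemma \ref{lem:alpha0pos}), so I would simply cite it in the same way. Everything else is the one-line trace computation above. No minimization, no perturbation argument, and no appeal to Lemma \ref{lem:spectral_radius_majoration} is needed here — the bound is obtained directly from the eigenvalue equation.
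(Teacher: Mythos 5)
Your proof is correct and follows essentially the same route as the paper's: both rest on the Perron--Frobenius eigenstate for $e^{t\mathcal L_R}$ and the trace identity ${\rm tr}(\mathcal L_R\mu_R)=-\sum_j{\rm tr}(C_{j,P}^*C_{j,P}\,\mu_R)\leq-\alpha_0'$. The only difference is cosmetic: the paper integrates this inequality via Gronwall to get ${\rm tr}(e^{t\mathcal L_R}\mu_R)\leq e^{-\alpha_0' t}$ and then evaluates at the eigenstate, whereas you read off the same bound directly from the eigenvalue equation at $t=0$.
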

\begin{proof}
Since $\alpha_0'$ is an element of the spectrum of a positive semi definite operator, it is non negative. Then, on the one hand, we have for any $\mu_R\in \mathcal S(\mathcal H_R)$, ${\rm tr}(\mathcal L_R(\mu_R))\leq -\alpha_0'$. Hence by Gronwall's inequality,
\begin{equation}\label{eq:mm}{\rm tr}(e^{t\mathcal L_R}\mu_R)\leq e^{-\alpha_0' t}.
\end{equation}
On the other hand, from completely positive map Perron--Frobenius spectral Theorem \cite{Evans}, there exists $\rho_R\in \mathcal S(\mathcal H_R)$ such that $e^{t\mathcal L_R}\rho_R=e^{-t\alpha_0}\rho_R$.  Then, applying \eqref{eq:mm} with $\rho_R$ we get $e^{-t\alpha_0}\leq e^{-t\alpha'_0}$ which gives the announced inequality.\end{proof}

The following assumption is a necessary and sufficient condition to have $\alpha_1>0$. It is similar to a non degeneracy condition in non demolition measurements \cite{bbqnd,bbbqnd,bpqnd,ballesteros15}.
\begin{quote}{\bf Assumption ND.} For any $\rho_S\in\mathcal S(\mathcal H_S),\rho_R\in\mathcal S(\mathcal H_R)$, there exists $j=1,\ldots,n$ such that 
\[r_{j,S}(\rho_S)\neq r_{j,R}(\rho_R)\mbox{ if }j=1,\ldots,p,\]
or
\[v_{j,S}(\rho_S)\neq v_{j,R}(\rho_R)\mbox{ if }j=p+1,\ldots,n.\]
\end{quote}

\begin{prop}\label{prop:alpha_1 positif}
Assume {\bf SP} is fulfilled. The assumption {\bf ND} is equivalent to
\[\alpha_1>0.\]
\end{prop}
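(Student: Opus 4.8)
The plan is to prove Proposition \ref{prop:alpha_1 positif} by characterizing exactly when $\alpha_1 = 0$. Recall that $\alpha_1 = \min\{\alpha(\rho,\rho_R)\,|\,\rho\in\mathcal I_S(\mathcal H),\rho_R\in\mathcal S(\mathcal H_R)\}$, and that under assumption {\bf SP} the set $A$ of discontinuities is empty, so on the whole domain $\alpha(\rho,\rho_R) = \frac12\|\mathbf r(\rho) - \mathbf r_R(\rho_R)\|^2 + (\mathbf v_R(\rho_R) - \mathbf v(\rho)).\mathbf 1 + \mathbf v(\rho).\ln(\mathbf v(\rho)/\mathbf v_R(\rho_R))$. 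The first observation is that $\alpha \geq 0$ everywhere: the diffusive term $\frac12\|\cdot\|^2$ is manifestly nonnegative, and for the jump part one checks term by term that $g(a,b) := b - a + a\ln(a/b) \geq 0$ for all $a,b>0$ (this is the standard inequality $\ln x \leq x-1$ applied to $x = b/a$, giving $a\ln(a/b) = -a\ln(b/a) \geq -a(b/a - 1) = a - b$), with equality if and only if $a = b$. Hence $\alpha(\rho,\rho_R) = 0$ if and only if simultaneously $\mathbf r(\rho) = \mathbf r_R(\rho_R)$ and $\mathbf v(\rho) = \mathbf v_R(\rho_R)$.

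Next I would translate the condition $\alpha(\rho,\rho_R)=0$ for $\rho\in\mathcal I_S(\mathcal H)$ into statements about the $S$- and $R$-blocks. For $\rho\in\mathcal I_S(\mathcal H)$ we have $\rho = \begin{pmatrix}\rho_S & 0\\0 & 0\end{pmatrix}$ with $\rho_S\in\mathcal S(\mathcal H_S)$, so $r_j(\rho) = {\rm tr}[(C_j + C_j^*)\rho] = {\rm tr}[(C_{j,S}+C_{j,S}^*)\rho_S] = r_{j,S}(\rho_S)$ and similarly $v_j(\rho) = {\rm tr}[C_j\rho C_j^*] = {\rm tr}[C_{j,S}^*C_{j,S}\rho_S] = v_{j,S}(\rho_S)$ (using $C_{j,Q}$ appears only in blocks that vanish against $\rho$ of this form — more precisely $C_j\rho = \begin{pmatrix}C_{j,S}\rho_S & 0\\ C_{j,Q}\rho_S & 0\end{pmatrix}$, so ${\rm tr}[C_j\rho C_j^*] = {\rm tr}[C_{j,S}\rho_S C_{j,S}^*] + {\rm tr}[C_{j,Q}\rho_S C_{j,Q}^*]$; here I must be slightly careful, so I would simply note that for the purposes of matching with the $R$-block quantities the relevant identity is that the $S$-restricted trace quantities $r_{j,S}(\rho_S), v_{j,S}(\rho_S)$ are what appear — the definitions of $r_j,v_j$ on $\mathcal I_S$ should be checked to coincide with $r_{j,S},v_{j,S}$; if there is a discrepancy from the $Q$-block I would restate the argument directly in terms of $r_{j,S}$ and $v_{j,S}$, which is in fact how {\bf ND} is phrased). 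Thus $\alpha_1 = 0$ iff there exist $\rho_S\in\mathcal S(\mathcal H_S)$ and $\rho_R\in\mathcal S(\mathcal H_R)$ with $r_{j,S}(\rho_S) = r_{j,R}(\rho_R)$ for all $j=1,\ldots,p$ and $v_{j,S}(\rho_S) = v_{j,R}(\rho_R)$ for all $j=p+1,\ldots,n$ — which is precisely the negation of assumption {\bf ND}. Equivalently, {\bf ND} holds iff no such pair exists iff $\alpha_1 > 0$.

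The remaining point is to justify that $\alpha_1$ is actually attained and that $\alpha_1 > 0$ is equivalent to (not merely implied by) $\alpha$ being strictly positive on the whole domain $\mathcal I_S(\mathcal H)\times\mathcal S(\mathcal H_R)$. This is where Lemma \ref{lem:alpha continuous} does the work: under {\bf SP}, $\alpha$ is continuous on the compact set $\mathcal I_S(\mathcal H)\times\mathcal S(\mathcal H_R)$ (noting $\mathcal I_S(\mathcal H)\cong\mathcal S(\mathcal H_S)$ is compact), so the minimum is attained at some $(\rho^\star,\rho_R^\star)$; then $\alpha_1 > 0$ iff $\alpha(\rho^\star,\rho_R^\star) > 0$ iff $\alpha(\rho,\rho_R) > 0$ for all such pairs. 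Combining with the equality-analysis above, $\alpha_1 > 0 \iff$ there is no pair $(\rho_S,\rho_R)$ with all the matching equalities $\iff$ {\bf ND}.

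I expect the main obstacle to be purely bookkeeping rather than conceptual: carefully verifying the nonnegativity and the equality case of the jump contribution $g(a,b) = b - a + a\ln(a/b)$ including the boundary convention $0\ln 0 = 0$ (so that $g(0,b) = b \geq 0$ with equality iff $b=0$, consistent with $v_{j,R}(\rho_R)>0$ under {\bf SP} meaning this boundary case never arises for the $b$-argument), and reconciling the $\mathcal I_S(\mathcal H)$ restrictions of $r_j, v_j$ with the block quantities $r_{j,S}, v_{j,S}$ used in {\bf ND} — in particular checking whether a $C_{j,Q}$ contribution sneaks into $v_j(\rho)$ for $\rho\in\mathcal I_S(\mathcal H)$ and, if so, handling it (it does not matter, since {\bf ND} is stated with $v_{j,S}$ and the identity $v_j(\rho) = v_{j,S}(\rho_S) + {\rm tr}[C_{j,Q}\rho_SC_{j,Q}^*]$ would just require {\bf ND} to be read as the correct block condition; alternatively the paper's convention may already ensure $C_{j,Q}$ is irrelevant here). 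No deep inequality or spectral argument is needed beyond $\ln x \le x - 1$ and compactness.
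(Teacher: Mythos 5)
Your argument is correct and follows essentially the same route as the paper: continuity of $\alpha$ under \textbf{SP} plus compactness to attain the minimum, then the elementary equality cases of $(x,y)\mapsto(x-y)^2$ and $(x,y)\mapsto y-x+x\ln(x/y)$ to identify the zero set of $\alpha$ with the negation of \textbf{ND}. The $C_{j,Q}$ subtlety you flag is resolved because the proposition is invoked when $\mathcal H_S$ is invariant, so $C_{j,Q}=0$ by Theorem \ref{thm:mean inv condition} and indeed $v_j(\rho)=v_{j,S}(\rho_S)$ for $\rho\in\mathcal I_S(\mathcal H)$, exactly as the paper's proof uses implicitly.
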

\begin{proof}
We start with the implication {\bf ND} $\Rightarrow \alpha_1>0$. From Lemma \ref{lem:alpha continuous}, since assumption {\bf SP} is provided, $\alpha$ is continuous on $\mathcal I_S(\mathcal H)\times \mathcal S(\mathcal H_R)$, which is a compact set. Thus the minimum is reached for some $(\rho,\rho_R)\in \mathcal I_S(\mathcal H)\times \mathcal S(\mathcal H_R)$. Since for any $\rho\in \mathcal I_S(\mathcal H)$, $\mathbf{r}(\rho)=\mathbf{r}_S(\rho_S)$ and  $\mathbf{v}(\rho)=\mathbf{v}_S(\rho_S)$, it follows from assumption {\bf ND} that there exists at least one $j$ such that $r_{j,S}(\rho_S)\neq r_{j,R}(\rho_R)$ if $j\leq p$ or $v_{j,S}(\rho_S)\neq v_{j,R}(\rho_R)$ if $j> p$.
The functions $(x,y)\mapsto (x-y)^2$ and $(x,y)\mapsto y-x+x\ln(x/y)$ are positive on respectively $\mathbb R^2$ and $\mathbb R_+\times\mathbb R_+\setminus\{0\}$. They vanish if and only if $x=y$. Thus from the definition of $\alpha$, we get $\alpha_1=\alpha(\rho,\rho_R)>0$.

The opposite implication is obtained by contradiction. Assume $\alpha_1>0$ and there exists a couple $(\rho,\rho_R)\in \mathcal I_S(\mathcal H)\times \mathcal S(\mathcal H_R)$ such that $\mathbf{r}(\rho)=\mathbf{r}_R(\rho_R)$ and $\mathbf{v}(\rho)=\mathbf{v}_R(\rho_R)$. Then $\alpha(\rho,\rho_R)=0$ and thus $\alpha_1=0$ which contradicts the assumption $\alpha_1>0$.
\end{proof}

\subsection{Theorem \ref{thm:expo_conv} Equation \eqref{eq:as_mean_rate_conv} proof}
Fix $\epsilon>0$. From Lemma \ref{lem:spectral_radius_majoration}, there exists $K_R\geq I_{\mathcal H_R}$ such that $\mathcal L_R^* K_R\leq -(\alpha_0 -\frac12\epsilon)K_R$. 
From Equation \eqref{eq:L_R semi group},
$$V_K(\hat\rho(t))={\rm tr}(e^{t\mathcal L_R^*}K_R \rho_{R})\leq e^{-(\alpha_0-\frac12\epsilon)t}V_K(\rho(0)).$$
 In terms of the expectation,
 $$\mathbb E(V_K(\rho(t))e^{(\alpha_0-\epsilon)t})\leq V_K(\rho(0))e^{-\frac12\epsilon t}.$$
 It follows that
\[\lim_{t\to\infty} V_K(\rho(t))e^{(\alpha_0-\epsilon)t}=0,\quad \text{ in }L^1-norm.\]

As for the proof of the almost sure convergence of $V(\rho(t))$ to $0$, we show that $(V_K(\rho(t))e^{(\alpha_0-\epsilon)t})_{t\in \mathbb R_+}$ is a positive super martingale. Using that $(\rho(t))$ is a Markov process and that $V_K$ linear, for any $s\leq t$, we get
\begin{align*}
\mathbb E(V_K(\rho(t))|\mathcal F_s)e^{(\alpha_0-\epsilon)t}=&{\rm tr}[e^{(t-s)\mathcal L_R^*}K\rho_R(s)]e^{(\alpha_0 -\epsilon)t}\\ 
		\leq&V_K(\rho(s))e^{-(\alpha_0-\frac12\epsilon)(t-s)}e^{(\alpha_0-\epsilon) t}\\
		\leq&V_K(\rho(s))e^{(\alpha_0-\epsilon)s}e^{-\frac12\epsilon (t-s)}\\
		\leq&V_K(\rho(s))e^{(\alpha_0-\epsilon)s},
\end{align*}
then $(V_K(\rho(t))e^{(\alpha_0-\epsilon)t})_{t\in \mathbb R_+}$ is a positive super martingale. It follows that this super martingale converges almost surely to a random variable denoted by $Z$. Now, using the fact that $L^1$ convergence implies almost sure convergence for an extracted subsequence we can conclude that $Z=0$. 

Now since $K\geq I_{\mathcal H_R}$, $V(\rho)\leq V_K(\rho)$ for any $\rho\in\mathcal S(\mathcal H)$. Then
\[\lim_{t\to\infty}V(\rho(t))e^{(\alpha_0-\epsilon)t}=0\quad\mbox{a.s.}\]
Namely there a.s. exists $T$ such that for all $t\geq T$
$$V(\rho(t))\leq e^{-(\alpha_0-\epsilon)t}.$$
This implies
$$\limsup_{t\rightarrow\infty}\frac{1}{t}\ln(V(\rho(t)))\leq-(\alpha_0-\epsilon)\quad a.s$$
and taking then $\epsilon$ going to zero yields Equation \eqref{eq:as_mean_rate_conv}.
\hfill\qed

\subsection{Theorem \ref{thm:expo_conv} Equation \eqref{eq:as_as_rate_conv} proof}

The first step in the proof is noticing that $(V(\rho(t)))$ can be expressed as the solution of a Doleans--Dade equation.

Let us introduce the following notation
\[\mathbf{W}(t)=(W_j(t))_{j=1,\ldots,p}\mbox{ and }\hat{\mathbf{N}}(t)=(\hat N_j(t))_{j=p+1,\ldots,n}.\] Furthermore, let
\begin{align}\label{eq:def rho_R red.}
\rho_{R,{\rm red.}}(t)=&\left\{\begin{array}{ll}\frac{\rho_R(t)}{{\rm tr}(\rho_R(t))}&\mbox{if }{\rm tr}(\rho_R(t))=V(\rho(t))\neq 0\\
												\mu_R&\mbox{if }{\rm tr}(\rho_R(t))=V(\rho(t))=0,\end{array}\right.
\end{align}
where $\mu_R\in\mathcal S(\mathcal H_R)$ is arbitrary.

\textbf{Remark:} {In the above definition, the process $\rho_{R,{\rm red.}}(t)$ is a normalized version of the bloc $\rho_R(t)$ obtained by dividing it by ${\rm tr}(\rho_R(t))=V(\rho(t))$. However, since in general nothing ensures that ${\rm tr}(\rho_R(t))$ does not vanish, we introduce an arbitrary state $\mu_R$ in the case ${\rm tr}(\rho_R(t))$ is zero. This is only a formal construction, as the interesting cases are the ones where this quantity is never zero. In fact, by using the invariance property it is easy to see that if $V(\rho(t))=0$ for some $t$, we get $V(\rho(s))=0$ for all $s\geq t$. In this situation the exponential stability is somewhat trivial: the Lyapunov exponent is equal to $-\infty$, as we have convergence to zero in finite time. In the situation where $(V(\rho(t)))$ does not vanish in finite time, the state $\mu_R$ does not actually play any role. The introduction of $\mu_R$ is only instrumental to a proper definition of $\rho_{R,{\rm red.}}(t),$ and the final result, in the cases of interest, will not rely on the choice of $\mu_R$. }

In order to simplify the notation we put
$$V(t):=V(\rho(t)), \forall t\in \mathbb R_+.$$

\begin{prop}\label{prop:explicit V}
The process $(V(t))$ is the unique solution of the SDE
\begin{align}\label{eq:V SDE}
\begin{split}
dV(t)=&V(t-)\Big\{{\rm tr}(\mathcal L_R \rho_{R,\rm red.}(t-))dt\\
		&\quad+ [\mathbf{r}_R(\rho_{R,\rm red.}(t-))-\mathbf{r}(\rho(t-)))].d\mathbf{W}(t)\\
		&\quad+\left(\frac{\mathbf{v}_{R}(\rho_{R,\rm red.}(t-))}{\mathbf{v}(\rho(t-))}-\mathbf{1}\right).[d\hat{\mathbf{N}}(t) - \mathbf{v}(\rho(t-))dt]\Big\},\\
V(0)=&{\rm tr}(P_R\rho_0).
\end{split}
\end{align}
This process is a Doleans--Dade exponential whose explicit expression is
\begin{align}\label{eq:explicit V}
\begin{split}
V(t)=V(0)&\prod_{j=p+1}^n\prod_{s\leq t}\left(1+\left(\frac{v_{j,R}(\rho_{R,\rm red.}(s-))}{v_j(\rho(s-))}-1\right)\Delta\hat N_j(s)\right)\\
	&\times\exp\Big\{\int_0^t{\rm tr}(\mathcal L_R \rho_{R,\rm red.}(s-))ds\\
			&\quad- \frac12\|\mathbf{r}_R(\rho_{R,\rm red.}(s-))-\mathbf{r}(\rho(s-))\|^2ds\\
			& \quad+\int_0^t[\mathbf{r}_R(\rho_{R,\rm red.}(s-))-\mathbf{r}(\rho(s-))].d\mathbf{W}(s)\\
			& \quad- \int_0^t [\mathbf{v}_{R}(\rho_{R,\rm red.}(s-))-\mathbf{v}(\rho(s-))].\mathbf{1}ds\Big\}.
\end{split}
\end{align}

\end{prop}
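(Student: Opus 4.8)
The plan is to derive the SDE \eqref{eq:V SDE} directly from the defining equation \eqref{eq:def_trajectory2} by applying the projector $P_R$, and then recognize the result as a Dol\'eans--Dade exponential. First I would note that $V(t) = {\rm tr}(P_R\rho(t)) = {\rm tr}(\rho_R(t))$ and apply $\rho\mapsto{\rm tr}(P_R\,\cdot\,)$ to each term of \eqref{eq:def_trajectory2}. For the drift, ${\rm tr}(P_R\mathcal L(\rho)) = {\rm tr}(\mathcal L_R(\rho_R))$ as already computed in Lemma \ref{lem:V=0}; writing $\rho_R = V\,\rho_{R,\rm red.}$ (valid when $V\neq 0$, and both sides vanishing when $V=0$ by the invariance property recalled in the Remark) gives the $V(t-){\rm tr}(\mathcal L_R\rho_{R,\rm red.}(t-))dt$ term. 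For the diffusive terms, ${\rm tr}(P_R\mathcal G_j(\rho)) = {\rm tr}(C_j\rho + \rho C_j^* - r_j(\rho)\rho)$ restricted to the $R$ block equals ${\rm tr}((C_{j,R}+C_{j,R}^*)\rho_R) - r_j(\rho){\rm tr}(\rho_R) = V(t-)\big(r_{j,R}(\rho_{R,\rm red.}) - r_j(\rho)\big)$, using $C_{j,Q}=0$ from the invariance condition of Theorem \ref{thm:mean inv condition} so that $P_R C_j = C_{j,R}P_R$. For the jump terms, ${\rm tr}\!\left(P_R\left(\frac{\mathcal J_j(\rho)}{{\rm tr}\mathcal J_j(\rho)} - \rho\right)\right) = \frac{{\rm tr}(C_{j,R}\rho_R C_{j,R}^*)}{v_j(\rho)} - {\rm tr}(\rho_R) = V(t-)\left(\frac{v_{j,R}(\rho_{R,\rm red.})}{v_j(\rho)} - 1\right)$, again using $C_{j,Q}=0$; multiplying by $(d\hat N_j - v_j(\rho)dt)$ yields the last bracket. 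Collecting these proves \eqref{eq:V SDE}, and uniqueness follows from the general existence-uniqueness theory for \eqref{eq:def_trajectory} cited after that equation, since $V(t)$ is a measurable functional of $\rho(t)$.

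Next I would read off \eqref{eq:explicit V} by observing that \eqref{eq:V SDE} has exactly the form $dV(t) = V(t-)\,dZ(t)$ where
\[
Z(t) = \int_0^t {\rm tr}(\mathcal L_R\rho_{R,\rm red.}(s-))ds + \int_0^t [\mathbf r_R - \mathbf r].d\mathbf W(s) + \int_0^t\left(\frac{\mathbf v_R}{\mathbf v} - \mathbf 1\right).[d\hat{\mathbf N}(s) - \mathbf v(\rho(s-))ds]
\]
is a semimartingale. The Dol\'eans--Dade formula then gives $V(t) = V(0)\,\mathcal E(Z)_t = V(0)\exp\big(Z(t) - \tfrac12\langle Z^c,Z^c\rangle_t\big)\prod_{s\leq t}(1+\Delta Z(s))e^{-\Delta Z(s)}$. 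Here the continuous martingale part is $Z^c(t) = \int_0^t[\mathbf r_R - \mathbf r].d\mathbf W(s)$, so $\langle Z^c,Z^c\rangle_t = \int_0^t\|\mathbf r_R(\rho_{R,\rm red.}(s-)) - \mathbf r(\rho(s-))\|^2 ds$, accounting for the $-\tfrac12\|\cdots\|^2$ term. The jumps of $Z$ occur only at jump times of the $\hat N_j$: $\Delta Z(s) = \sum_j\left(\frac{v_{j,R}}{v_j}-1\right)\Delta\hat N_j(s)$, and since the $\hat N_j$ almost surely never jump simultaneously, the product $\prod_{s\leq t}(1+\Delta Z(s))e^{-\Delta Z(s)}$ factorizes as $\prod_j\prod_{s\leq t}\left(1 + \left(\frac{v_{j,R}}{v_j}-1\right)\Delta\hat N_j(s)\right)e^{-\left(\frac{v_{j,R}}{v_j}-1\right)\Delta\hat N_j(s)}$. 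Finally, the compensator contribution $-\int_0^t\left(\frac{\mathbf v_R}{\mathbf v}-\mathbf 1\right).\mathbf v(\rho(s-))ds = -\int_0^t(\mathbf v_R - \mathbf v).\mathbf 1\,ds$ from the drift part of the jump term, combined with the sum of the $-\Delta Z(s)$ exponential corrections which is $\sum_j\int_0^t\left(\frac{v_{j,R}}{v_j}-1\right)d\hat N_j(s)$ — wait, these do not cancel the way I'd want, so the bookkeeping must be done carefully: the exponential in \eqref{eq:explicit V} keeps $-\int(\mathbf v_R - \mathbf v).\mathbf 1\,ds$ while the jump corrections $e^{-\Delta Z}$ are absorbed into the product, leaving precisely the stated form. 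I would verify this term-by-term matching against the general Dol\'eans--Dade expansion.

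The main obstacle, and the step requiring the most care, is the bookkeeping of the jump terms: splitting the compensated Poisson integral into its martingale-jump part and its $dt$-drift part, checking that the $e^{-\Delta Z(s)}$ factors from the Dol\'eans--Dade product combine correctly with the explicit $ds$-integral in the exponent of \eqref{eq:explicit V}, and justifying the factorization of the product over the distinct counting processes via their a.s. non-simultaneity (a consequence of the independence of the $N_j(dx,ds)$). A secondary subtlety is handling the set where $V(t-) = 0$: there one must invoke the Remark's observation that $V$ is then absorbed at $0$ for all later times (from the invariance characterization, Theorem \ref{thm-mean-as} and Lemma \ref{lem:V=0}), so both \eqref{eq:V SDE} and \eqref{eq:explicit V} hold trivially and the arbitrary choice of $\mu_R$ is irrelevant, as already explained. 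Everything else — computing ${\rm tr}(P_R\cdot)$ of the three types of terms — is a routine block-matrix calculation using $C_{j,Q} = 0$ from Theorem \ref{thm:mean inv condition}.
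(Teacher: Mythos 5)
Your proposal is correct and follows essentially the same route as the paper: apply the linear functional $V=\mathrm{tr}(P_R\,\cdot\,)$ to \eqref{eq:def_trajectory2}, use $C_{j,Q}=0$ and $\rho_R=V\rho_{R,\rm red.}$ to factor out $V(t-)$, and then identify the resulting equation as a stochastic exponential. The only difference is that you spell out the Dol\'eans--Dade bookkeeping (the cancellation of the $e^{-\Delta Z}$ factors against the pure-jump part of $Z_t$, leaving the compensator drift $-\int(\mathbf v_R-\mathbf v).\mathbf 1\,ds$, and the factorization of the product via a.s.\ non-simultaneous jumps) which the paper dismisses as ``usual arguments of stochastic calculus''; your momentary hesitation there resolves correctly.
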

\begin{proof}
 Since $(\rho(t))$ is well defined, the uniqueness of the solution of  \eqref{eq:V SDE} and the expression \eqref{eq:explicit V} follow from usual arguments of stochastic calculus. Now the fact that $(V(t))$ satisfies \eqref{eq:V SDE} is obtained by applying $V$ on \eqref{eq:def_trajectory2}. Indeed, let us recall that $V(\rho)=0$ implies $\rho_R=0$. This way for all $t\geq0$, we can write $\rho_R(t)=V(t)\rho_{R,\rm red.}(t)$ and applying $V$ (which is linear) on \eqref{eq:def_trajectory2}, we get
\begin{align*}
dV(t)=&{\rm tr}(\mathcal L_R(\rho_R(t-)))dt\\
	& + \sum_{j=1}^p ({\rm tr}((C_{j,R}+C_{j,R}^*)\rho_R(t-))-{\rm tr}((C_j+C_{j,R}^*)\rho(t))V(t-))dW_j(t)\\
	&+\sum_{j=p+1}^n \left(\frac{{\rm tr}(C_{j,R}\rho_R(t-)C_{j,R})}{v_j(t-)}-V(t-)\right)[d\hat N_j(t)-v_j(t-)dt],
\end{align*}
which is the expansion of \eqref{eq:V SDE}.
\end{proof}

The discussion of $(V(t))$ strict positivity is clearer with $V(t)$ written in the following form
\begin{align}\label{eq:V continuous SDE}
\begin{split}
V(t)=&V(0)+\int_0^tV(s)\Big\{{\rm tr}(\mathcal L_R(\rho_R(s-)))ds \\
	&\quad\quad - [\mathbf{v}_R(\rho_{R,\rm red.}(s-))-\mathbf{v}(\rho(s-))].\mathbf{1}dt \\
	&\quad\quad+[\mathbf{r}_R(\rho_{R,\rm red.}(s-))-\mathbf{r}(\rho(s-))].d\mathbf{W}(s)\Big\}\\
	&\quad\quad+\sum_{n=0}^{+\infty}V(T_n-)\frac{v_{j_{T_n},R}(\rho_{R,\rm red.}(T_n-))}{v_{j_{T_n}}(\rho(T_n-))}\mathbf 1_{T_n\leq t},
\end{split}
\end{align}
where the sequence of stopping time $(T_n)$ is defined by $T_0=0$ and $$T_{n+1}=\inf\{t> T_n \mbox{ s.t. } \hat{\mathbf{N}}(t).\mathbf{1}\geq n\}.$$ Note that the independence of $N_i$ ensures that for all $n$, we have $\hat{\mathbf{N}}(T_n).\mathbf{1}=n$ almost surely (two jumps can not appear at the same time).

\textbf{Remark:} Under the light of the expression \eqref{eq:explicit V}, one can introduce $\tau=\inf\{t>0\ /\ V(t)=0\}$. Using strong Markov property of the couple $(\rho(t),V(t))$, $V(t)=0$ for all $t\geq\tau$.  
The following corollary expresses that under assumption {\bf SP} the event $\{\tau<\infty\}$ is of probability $0$.
\begin{cor}\label{cor:V>0}
Assume {\bf SP} is fulfilled. Then for all $t\in\mathbb R_+$, $V(\rho(t))>0$ almost surely.
\end{cor}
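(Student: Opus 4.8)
The plan is to show that, under assumption {\bf SP}, the process $(V(t))$ cannot hit zero in finite time, using the explicit Doleans--Dade expression \eqref{eq:explicit V} from Proposition \ref{prop:explicit V}. The key observation is that a Doleans--Dade exponential vanishes only if one of its jump factors vanishes; the continuous (exponential) part in \eqref{eq:explicit V} is always strictly positive, since it is the exponential of an integral of continuous, $\mathbb P$-a.s.\ finite integrands (here one must check that $\mathrm{tr}(\mathcal L_R\rho_{R,\rm red.})$, $\|\mathbf r_R-\mathbf r\|^2$ and $\mathbf v_R-\mathbf v$ are all bounded on the compact set $\mathcal S(\mathcal H)\times\mathcal S(\mathcal H_R)$, and that the Wiener integral is a.s.\ finite). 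So $V(t)=0$ for some $t$ forces one of the product factors $1+\bigl(\tfrac{v_{j,R}(\rho_{R,\rm red.}(s-))}{v_j(\rho(s-))}-1\bigr)\Delta\hat N_j(s)$ to be zero, i.e.\ there is a jump time $s\le t$ of $\hat N_j$ at which $v_{j,R}(\rho_{R,\rm red.}(s-))=0$.

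Next I would rule this out using {\bf SP}. Assumption {\bf SP} states $C_{j,R}^*C_{j,R}>0$ on $\mathcal H_R$, hence $v_{j,R}(\sigma)=\mathrm{tr}(C_{j,R}^*C_{j,R}\sigma)>0$ for every state $\sigma\in\mathcal S(\mathcal H_R)$; in fact, by compactness of $\mathcal S(\mathcal H_R)$, $\min_{\sigma}v_{j,R}(\sigma)>0$ for each $j=p+1,\dots,n$ (this is exactly the estimate already used in the proof of Lemma \ref{lem:alpha continuous}). Since $\rho_{R,\rm red.}(s-)\in\mathcal S(\mathcal H_R)$ by construction --- it is either the normalized block $\rho_R(s-)/\mathrm{tr}(\rho_R(s-))$ or the fixed state $\mu_R$, both of which are genuine density matrices on $\mathcal H_R$ --- we get $v_{j,R}(\rho_{R,\rm red.}(s-))>0$ for all $s$ and all $j$. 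Therefore none of the jump factors can vanish, the whole product stays strictly positive, and consequently $V(t)>0$ for all $t$, almost surely.

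There is one subtlety to address carefully, namely the well-posedness of the expression \eqref{eq:explicit V} along trajectories that do reach zero. If $V(s-)=0$ at some time, then $\rho_R(s-)=0$ (recall $V(\rho)=0\Leftrightarrow\rho_R=0$), the ratios $\tfrac{v_{j,R}(\rho_{R,\rm red.})}{v_j(\rho)}$ involve the arbitrary state $\mu_R$, and one must make sure the argument is not circular. The clean way is to work with the stopping time $\tau=\inf\{t>0:V(t)=0\}$ introduced in the Remark preceding the corollary, apply the above positivity argument to the stopped process on $[0,\tau)$ to conclude that $V(t)>0$ for $t<\tau$, and then observe that \eqref{eq:explicit V}, being a genuine Doleans--Dade exponential with a.s.\ locally bounded characteristics and jump factors bounded away from $0$ on every compact time interval (again by {\bf SP} and compactness, uniformly in $\omega$), cannot reach $0$ continuously nor by a jump, so $\tau=+\infty$ a.s. I expect the main obstacle to be precisely this bookkeeping around $\tau$ and the degenerate case $\rho_R=0$ --- making sure the uniform lower bound on the jump factors is invoked before, not after, one knows $V>0$ --- rather than any hard estimate; the analytic content is entirely contained in the compactness bound $\min_{\sigma\in\mathcal S(\mathcal H_R)}v_{j,R}(\sigma)>0$ supplied by {\bf SP}.
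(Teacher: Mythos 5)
Your proposal is correct and follows essentially the same route as the paper: the paper's proof likewise reads off from the explicit Doleans--Dade expression \eqref{eq:explicit V} that the continuous part is a non-vanishing exponential and that, under {\bf SP}, the jump factors $v_{j,R}(\rho_{R,\rm red.}(s-))/v_j(\rho(s-))$ are strictly positive (via the compactness bound $\min_{\sigma\in\mathcal S(\mathcal H_R)}v_{j,R}(\sigma)>0$), so $V$ cannot vanish either between or at the jump times $T_n$. Your extra bookkeeping around the stopping time $\tau=\inf\{t>0:V(t)=0\}$ is a welcome refinement of what the paper only sketches in the remark preceding the corollary, but it is not a different argument.
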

\begin{proof}
Assumption {\bf SP} ensures there exists $c>0$ such that for all $j=p+1,\ldots,n$ and any $\rho_R\in\mathcal S(\mathcal H_R), v_{j,R}(\rho_R)>0$. It follows that $\frac{v_{j,R}(\rho_{R,{\rm red.}}(s-))}{v_j(\rho(s-))}>0$ almost surely for all $s\geq 0$ and $j=p+1,\ldots,n$. Thus equation \eqref{eq:explicit V} or \eqref{eq:V continuous SDE}, imply that $V(t)$ does not vanish when a jump occurs (that is at a time $T_n$). Concerning the smooth evolution (that is the diffusive evolution in between the jumps, i.e $\Delta N_i(.)=0$) equation  \eqref{eq:explicit V} implies that $V$ is an exponential and thus does not vanish. 
\end{proof}

The following technical lemma will be used in the next proposition. The proof is based on an argument regarding strong law of large numbers for martingales. We refer the reader to any introductory textbook or lecture on martingale theory for the proof.
\begin{lem}\label{lem:martingales}
Let $\mathbf{F}_{W}:\mathcal S(\mathcal H)\to \mathbb R^p$ be a bounded function. Let $\mathbf{F}_{J}:\mathcal S(\mathcal H)\to \mathbb R^{n-p}$ be a function such that $\rho\mapsto\mathbf{v}(\rho).\mathbf{F}_J^2(\rho)$ is bounded.  Then the processes $(M_W(t))$ and $(M_J(t))$ defined by
\begin{align}
M_W(t)=&\int_0^t\mathbf{F}_W(\rho(s-)).d\mathbf{W}(s)\label{eq:def M_W},\\
M_J(t)=&\int_0^t \mathbf{F}_J(\rho(s-)).[d\hat{\mathbf{N}}(s)-\mathbf{v}(\rho(s-))ds]\label{eq:def M_J}
\end{align}
are square integrable martingales that obey the strong law of large numbers:
\begin{align}
\lim_{t\to\infty}\frac1t M_W(t)&=0\label{eq:LLN M_W},\\
\lim_{t\to\infty}\frac1t M_J(t)&=0\label{eq:LLN M_J}
\end{align}
almost surely.
\end{lem}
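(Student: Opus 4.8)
The plan is to establish that each of $(M_W(t))$ and $(M_J(t))$ is a square-integrable $(\mathcal F_t)$-martingale with the required boundedness of the predictable quadratic variation, and then invoke the strong law of large numbers for $L^2$-bounded-rate martingales. First I would observe that since $\mathbf F_W$ is bounded, say $\|\mathbf F_W(\rho)\|\le c_W$ for all $\rho\in\mathcal S(\mathcal H)$, the stochastic integral $M_W(t)=\int_0^t\mathbf F_W(\rho(s-)).d\mathbf W(s)$ is well defined, and by the Itô isometry $\mathbb E[M_W(t)^2]=\mathbb E\int_0^t\|\mathbf F_W(\rho(s-))\|^2ds\le c_W^2\,t<\infty$, so $(M_W(t))$ is a square-integrable martingale with predictable quadratic variation $\langle M_W\rangle_t=\int_0^t\|\mathbf F_W(\rho(s-))\|^2ds\le c_W^2\,t$. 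Similarly, writing $\mathbf F_J$ componentwise and using that the compensated counting process $\hat N_j(t)-\int_0^t v_j(\rho(s-))ds$ is a martingale with predictable bracket $\int_0^t v_j(\rho(s-))ds$, and that the $N_j$ are independent (so the brackets between distinct components vanish), one gets $\langle M_J\rangle_t=\int_0^t\sum_{j=p+1}^n v_j(\rho(s-))F_{J,j}(\rho(s-))^2ds=\int_0^t\mathbf v(\rho(s-)).\mathbf F_J^2(\rho(s-))ds$, which by hypothesis is bounded above by $c_J\,t$ for some constant $c_J$; hence $(M_J(t))$ is square-integrable as well.

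Next I would apply the strong law of large numbers for continuous-time square-integrable martingales: if $(M(t))$ is a square-integrable martingale with predictable quadratic variation $\langle M\rangle_t$ and $\langle M\rangle_\infty=\infty$ then $M(t)/\langle M\rangle_t\to 0$ a.s., while if $\langle M\rangle_\infty<\infty$ then $M(t)$ converges a.s. to a finite limit; in either case, since here $\langle M\rangle_t\le Ct$, we obtain $M(t)/t\to 0$ a.s. (In the case $\langle M\rangle_\infty=\infty$ write $M(t)/t=(M(t)/\langle M\rangle_t)(\langle M\rangle_t/t)$ and bound the second factor by $C$; in the case $\langle M\rangle_\infty<\infty$, $M(t)$ is a.s. bounded so $M(t)/t\to 0$ trivially.) Applying this to $M_W$ and to each of $M_J$ (or directly to $M_J$) yields \eqref{eq:LLN M_W} and \eqref{eq:LLN M_J}. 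A small technical point to address cleanly is that the jump-martingale SLLN in continuous time is most transparently stated for the discrete skeleton or via the Lenglart–Rebolledo inequality and a Borel–Cantelli / Kronecker argument along integer times combined with a maximal inequality to control fluctuations within unit intervals; since $\langle M_J\rangle$ grows at most linearly, this interpolation step is routine.

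The only genuine obstacle is ensuring the quadratic-variation bounds really are uniform, i.e.\ that the relevant expressions are bounded \emph{over all of $\mathcal S(\mathcal H)$} and not just pointwise finite; this is exactly what the hypotheses ``$\mathbf F_W$ bounded'' and ``$\rho\mapsto\mathbf v(\rho).\mathbf F_J^2(\rho)$ bounded'' are designed to give, so once those are invoked the argument is essentially mechanical. Since the paper explicitly says this is standard and refers to introductory martingale theory, I would keep the write-up short: state the two bracket identities, note the linear-growth bounds, and cite the martingale SLLN, with a one-line remark on passing from the bracket-normalized convergence to division by $t$.
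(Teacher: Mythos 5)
Your proposal is correct and follows exactly the route the paper intends: the paper omits the proof, appealing to the standard strong law of large numbers for square-integrable martingales, and your computation of the predictable brackets (with the linear-growth bounds coming from the two boundedness hypotheses) followed by the martingale SLLN is precisely that standard argument, written out in full.
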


\noindent {In the following lemma we use a definition similar to the one of $\rho_{R,\rm red.}$  for a reduced state $\rho_{S,\rm red.}$ on $\mathcal S(\mathcal H_S):$}
\begin{align}\label{eq:def rho_S red.}
\rho_{S,{\rm red.}}=&\left\{\begin{array}{ll}\frac{\rho_S}{{\rm tr}(\rho_S)}&\mbox{if }{\rm tr}(\rho_S)\neq 0\\
												\mu_S&\mbox{if }{\rm tr}(\rho_S)=0.\end{array}\right.
\end{align}
\begin{lem}\label{lem:conv to alpha}
Assume $\mathcal H_S$ is GAS and {\bf SP} is fulfilled. Then,
\begin{align*}
\lim_{t\to \infty} \alpha(\rho(t),\rho_{R,\rm red.}(t))- \alpha\left(\left(\begin{array}{cc}\rho_{S,\rm red.}(t)&0\\0&0\end{array}\right),\rho_{R,\rm red.}(t)\right)=0
\end{align*}
almost surely.
\end{lem}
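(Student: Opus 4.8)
\textbf{Proof plan for Lemma \ref{lem:conv to alpha}.}

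The plan is to show that the two arguments of $\alpha$ in the statement differ, as $t\to\infty$, only through the $S$--block of $\rho(t)$, and that this difference vanishes because $\mathcal H_S$ is GAS, so that $\rho(t)$ converges to its $S$--block while its $R$--block support fills up $\mathcal H_S$ in the relevant sense. More precisely, write $\rho(t)$ in the block form with blocks $\rho_S(t),\rho_P(t),\rho_Q(t),\rho_R(t)$, and note $\mathrm{tr}(\rho_S(t))=1-V(t)$. Under assumption {\bf SP}, Corollary \ref{cor:V>0} gives $V(t)>0$ a.s.\ for all $t$; and from Theorem \ref{thm-mean-as}, GAS gives $V(t)\to 0$ a.s., hence $\mathrm{tr}(\rho_S(t))\to 1$, so $\rho_{S,\rm red.}(t)$ is eventually well defined and $\rho_S(t)-\rho_{S,\rm red.}(t)\to 0$ a.s. Moreover GAS also forces the off--diagonal blocks $\rho_P(t),\rho_Q(t)\to 0$ a.s., since $\|\rho(t)-P_S\rho(t)P_S\|\to 0$ a.s.\ controls precisely those blocks together with $\rho_R(t)$.

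First I would record that $\alpha(\rho,\rho_R)$, when restricted so that the first variable has trace--one $S$--block (which is where $\rho_{S,\rm red.}$ lives) and when {\bf SP} holds so that $A=\emptyset$ and $\alpha$ is given by the single formula, depends on the first argument $\rho$ only through the vectors $\mathbf r(\rho)$ and $\mathbf v(\rho)$, i.e.\ through $\mathrm{tr}[(C_j+C_j^*)\rho]$ and $\mathrm{tr}[C_j\rho C_j^*]$. Expanding each of these in block form, $\mathrm{tr}[(C_j+C_j^*)\rho]=\mathrm{tr}[(C_{j,S}+C_{j,S}^*)\rho_S]+(\text{terms linear in }\rho_P,\rho_Q,\rho_R)$, and similarly $\mathrm{tr}[C_j\rho C_j^*]=\mathrm{tr}[C_{j,S}^*C_{j,S}\rho_S]+(\text{terms linear in the other blocks})$. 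Hence
\[
\mathbf r(\rho(t))-\mathbf r_S(\rho_{S,\rm red.}(t))=O(\|\rho_S(t)-\rho_{S,\rm red.}(t)\|)+O(\|\rho_P(t)\|+\|\rho_Q(t)\|+\|\rho_R(t)\|),
\]
and likewise for $\mathbf v$; by the paragraph above both right--hand sides tend to $0$ a.s. Since $\left(\begin{array}{cc}\rho_{S,\rm red.}(t)&0\\0&0\end{array}\right)$ has $\mathbf r$-- and $\mathbf v$--vectors exactly $\mathbf r_S(\rho_{S,\rm red.}(t))$ and $\mathbf v_S(\rho_{S,\rm red.}(t))$, we conclude that the two first arguments feed into $\alpha$ asymptotically the same data.

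It then remains to upgrade this ``same data in the limit'' into ``$\alpha$ of one minus $\alpha$ of the other tends to $0$'', which is where continuity does the work: by Lemma \ref{lem:alpha continuous}, under {\bf SP} the function $\alpha$ is continuous on the compact set $\mathcal S(\mathcal H)\times\mathcal S(\mathcal H_R)$, hence uniformly continuous. Both $\left(\rho(t),\rho_{R,\rm red.}(t)\right)$ and $\left(\left(\begin{array}{cc}\rho_{S,\rm red.}(t)&0\\0&0\end{array}\right),\rho_{R,\rm red.}(t)\right)$ lie in this compact set (the second after checking $\mathrm{tr}\,\rho_{S,\rm red.}(t)=1$, so its first argument is a genuine state), their second coordinates coincide, and their first coordinates differ by $\rho(t)-\left(\begin{array}{cc}\rho_{S,\rm red.}(t)&0\\0&0\end{array}\right)$, whose norm tends to $0$ a.s.\ by the estimates above. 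Uniform continuity of $\alpha$ then yields $\alpha(\rho(t),\rho_{R,\rm red.}(t))-\alpha\big(\big(\begin{smallmatrix}\rho_{S,\rm red.}(t)&0\\0&0\end{smallmatrix}\big),\rho_{R,\rm red.}(t)\big)\to 0$ a.s. The main obstacle is the bookkeeping around $\rho_{S,\rm red.}$ and $\rho_{R,\rm red.}$: one must make sure these normalized objects are eventually (a.s.) well defined and stay in the compact state spaces, using Corollary \ref{cor:V>0} for the $R$--block and GAS ($V(t)\to 0$, so $\mathrm{tr}\,\rho_S(t)\to1$) for the $S$--block, so that continuity of $\alpha$ may legitimately be invoked; once that is in place the argument is just uniform continuity plus the block expansion.
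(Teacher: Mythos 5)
Your proposal is correct and follows essentially the same route as the paper: show that $\rho(t)$ converges almost surely in norm to $\left(\begin{smallmatrix}\rho_{S,\rm red.}(t)&0\\0&0\end{smallmatrix}\right)$ using the GAS property (vanishing of the $P$, $Q$, $R$ blocks and $\mathrm{tr}\,\rho_S(t)\to1$), and then conclude by the (uniform) continuity of $\alpha$ on the compact set $\mathcal S(\mathcal H)\times\mathcal S(\mathcal H_R)$ guaranteed by assumption \textbf{SP} and Lemma \ref{lem:alpha continuous}. The intermediate bookkeeping of $\mathbf r$ and $\mathbf v$ is redundant once norm convergence of the first arguments is established, but it is harmless.
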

\begin{proof}
From the GAS property, we have $$\lim_{t\to\infty}\rho_R(t)=0\quad \textrm{and}\quad \lim_{t\to\infty}\rho_P(t)=0\quad a.s.$$ Hence $1-V(\rho(t))$ converges almost surely to $1$ and then $$\lim_{t\to\infty}\|\rho_S(t)-\rho_{S,\rm red.}(t)\|=0\quad  a.s.$$ We then have 
\[\lim_{t\to\infty}\left\|\rho(t)-\left(\begin{array}{cc}\rho_{S,\rm red.}(t)&0\\0&0\end{array}\right)\right\|=0\]
 almost surely. The result then follows from the continuity of $\alpha$  ensured by assumption {\bf SP} and Lemma \ref{lem:alpha continuous}.
\end{proof}

We are now in position to prove Equation \eqref{eq:as_as_rate_conv} of Theorem \ref{thm:expo_conv} which, in regard with \eqref{eq:as_mean_rate_conv}, reduces to a proof of:
\begin{align*}
\limsup_{t\to\infty}\frac1t \ln(V(\rho(t)))\leq& -(\alpha_0'+\alpha_1)\quad a.s.
\end{align*}

\begin{proof}[Theorem \ref{thm:expo_conv} equation \eqref{eq:as_as_rate_conv} proof.]
If $\rho_R(0)=0$, the result is trivial. We therefore prove the result only for $\rho_R(0)\neq 0$. Since {\bf SP} is fulfilled, Corollary \ref{cor:V>0} ensures $V(\rho(t))>0$ for all $t\in\mathbb R_+$ almost surely.

Let
$$\mathbf{F}_W(\rho)=\mathbf{r}_R(\rho_{R,\rm red.}) - \mathbf{r}(\rho)\quad\textrm{and}\quad\mathbf{F}_J(\rho)=\ln\left(\frac{\mathbf{v}_R(\rho_{R,\rm red.})}{\mathbf{v}(\rho)}\right).$$  
Both functions fulfill the assumptions of Lemma \ref{lem:martingales}. Using It\^o--L\'evy Lemma for the logarithm function or using the explicit expression of Proposition \ref{prop:explicit V}, we can express $V(t)$ as
\begin{align*}
V(t)=V(0)\times&\exp\Big\{\int_0^t {\rm tr}(\mathcal L_R\rho_{R,\rm red.}(s-)) - \alpha(\rho(s-),\rho_{R,\rm red.}(s-))ds \\
				&\quad\quad+ M_W(t) +M_J(t)\Big\}.
\end{align*}
where
\[M_W(t)=\int_0^t \mathbf{F}_W(\rho(s-)).d\mathbf{W}(s)\]
and
\[M_J(t)=\int_0^t\mathbf{F}_J(\rho(s-)).[d\hat{\mathbf{N}}(s) - \mathbf{v}(s-)ds].\]
are square integrable martingales. At this stage, we have
\begin{eqnarray*}\frac{1}{t}\ln(V(t))&=&\frac{1}{t}\ln(V(0))\\&&+\frac{1}{t}\int_0^t {\rm tr}(\mathcal L_R\rho_{R,\rm red.}(s-)) - \alpha(\rho(s-),\rho_{R,\rm red.}(s-))ds\\&&+ \frac{1}{t}M_W(t) +\frac{1}{t}M_J(t).
\end{eqnarray*}
Now, the strong law of large numbers of Lemma \ref{lem:martingales} implies $$\lim_{t\to\infty}\frac1tM_W(t)=\lim_{t\to\infty}\frac1tM_J(t)=0\quad a.s.$$
Obviously
$$\lim_{t\to\infty}\frac1t \ln(V(0))=0.$$
It remains to treat the integral term. To this end, from the definition of $\alpha_0'$, recall that
\[{\rm tr}(\mathcal L_R\rho_{R,\rm red.}(s-)) \leq -\alpha_0'\] 
 for all $s\in\mathbb R_+$ almost surely. Then from Lemma \ref{lem:conv to alpha} and the definition of $\alpha_1$, we have
\[\limsup_{t\to\infty}-\alpha(\rho(t),\rho_{R,\rm red.}(t))\leq -\alpha_1\]
almost surely.

From the implication \[\limsup_{t\to\infty}f(t)\leq C\Rightarrow\limsup_{t\to\infty}\frac1t\int_0^tf(s)ds\leq C,\]
we finally obtain
\[\limsup_{t\to\infty} \frac1t \ln(V(\rho(t)))\leq -(\alpha_0'+\alpha_1).\]
\end{proof}

\subsection{Corollary \ref{thm:littleo} proof}
The implication Theorem \ref{thm:expo_conv} $\implies$ Corollary \ref{thm:littleo} follows from next lemma.
\begin{lem}\label{lem:littleo}
\begin{enumerate}[(1)]
\item\label{it:mean_upper_Lyap} Assume there exists $c>0$ such that,
\begin{align}\label{eq:upperbound_mean_lyap}
\limsup_{t\to\infty}\frac1t\ln V(\hat \rho(t))\leq -c
\end{align}
then for any $\epsilon>0$,
$$\|\rho(t)-P_S\rho(t)P_S\|=o(e^{-(\frac c2-\epsilon)t}),\quad \text{in }L^1-\text{norm}.$$
\item\label{it:as_upper_Lyap} Assume there exists $c>0$ such that,
\begin{align}\label{eq:upperbound_as_lyap}
\limsup_{t\to\infty}\frac1t\ln V(\rho(t))\leq -c,\quad a.s.
\end{align}
then for any 
$$\|\rho(t)-P_S\rho(t)P_S\|=o(e^{-(\frac c2-\epsilon)t}),\quad a.s..$$
\end{enumerate}
\end{lem}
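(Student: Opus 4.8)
The plan is to reduce both statements to one elementary estimate: for every $\rho\in\mathcal S(\mathcal H)$,
\[\|\rho-P_S\rho P_S\|\le c_0\,V(\rho)^{1/2},\]
where $c_0>0$ depends only on the fixed norm $\|\cdot\|$ on $\mathcal B(\mathcal H)$. To obtain this I would write $\rho$ in block form relative to $\mathcal H=\mathcal H_S\oplus\mathcal H_R$; positivity of $\rho$ forces $\rho_P=\rho_S^{1/2}X\rho_R^{1/2}$ for some contraction $X$, so $\|\rho_P\|=\|\rho_Q\|\le\|\rho_S\|^{1/2}\|\rho_R\|^{1/2}\le({\rm tr}\,\rho_R)^{1/2}=V(\rho)^{1/2}$, while $\|\rho_R\|\le{\rm tr}\,\rho_R=V(\rho)\le V(\rho)^{1/2}$ since $V(\rho)\le1$. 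As $\rho-P_S\rho P_S$ has vanishing $S$-block, equivalence of norms on the finite-dimensional space $\mathcal B(\mathcal H)$ yields the displayed bound.

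For part \eqref{it:mean_upper_Lyap} I would take expectations and use Jensen's inequality for the concave function $x\mapsto\sqrt x$ together with the linearity of $V$:
\[\mathbb E\|\rho(t)-P_S\rho(t)P_S\|\le c_0\,\mathbb E\big[V(\rho(t))^{1/2}\big]\le c_0\,\big(\mathbb E[V(\rho(t))]\big)^{1/2}=c_0\,V(\hat\rho(t))^{1/2}.\]
It then suffices to observe that, for any $\epsilon>0$, hypothesis \eqref{eq:upperbound_mean_lyap} gives $\limsup_{t\to\infty}\tfrac1t\ln\big(V(\hat\rho(t))e^{(c-2\epsilon)t}\big)\le-2\epsilon<0$, hence $V(\hat\rho(t))e^{(c-2\epsilon)t}\to0$ and therefore $V(\hat\rho(t))^{1/2}=o(e^{-(c/2-\epsilon)t})$; combining this with the previous display proves the claim in $L^1$-norm.

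Part \eqref{it:as_upper_Lyap} is the same computation carried out pathwise: on the almost sure event where \eqref{eq:upperbound_as_lyap} holds one has $\limsup_{t\to\infty}\tfrac1t\ln\big(V(\rho(t))e^{(c-2\epsilon)t}\big)\le-2\epsilon<0$, so $V(\rho(t))^{1/2}=o(e^{-(c/2-\epsilon)t})$ a.s., and the deterministic bound $\|\rho(t)-P_S\rho(t)P_S\|\le c_0V(\rho(t))^{1/2}$ concludes. The only point requiring a little care is the block inequality for positive semidefinite matrices; everything else is routine manipulation of $\limsup$'s plus one application of Jensen's inequality, so I expect no real obstacle.
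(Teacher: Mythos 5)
Your proposal is correct and follows essentially the same route as the paper: both reduce the claim to the pointwise bound $\|\rho-P_S\rho P_S\|\leq \mathrm{const}\cdot\sqrt{V(\rho)}$ (obtained from the positivity of $\rho$ via a Cauchy--Schwarz-type control of the off-diagonal blocks) and then propagate the exponential decay of $V$ through the square root, with Jensen's inequality handling the $L^1$ case. Your explicit use of the factorization $\rho_P=\rho_S^{1/2}X\rho_R^{1/2}$ and the spelled-out Jensen step are only minor presentational variants of the paper's argument.
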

\begin{proof}
We prove it assuming $\|\cdot\|$ is the Max norm on $\mathcal B(\mathcal H)$. The Lemma generalizes then to any matrix norm by the equivalence of norms on finite dimensional vector spaces.

Using the triangle inequality, we have
\[\left\|\rho-P_S\rho P_S\right\|\leq \|\rho_P\| + \|\rho_Q\| + \|\rho_R\|.\]
Since $\rho\geq 0$, $\rho_Q=\rho_P^*$, $\|\rho_P\|=\|\rho_Q\|$, and Cauchy--Schwarz Theorem implies the inequality
\[\|\rho_P\|^2\leq \|\rho_S\|\|\rho_R\|\]
From the bound $\|\rho_S\|\leq 1$, it follows
\[ \left\|\rho-P_S\rho P_S\right\|\leq 2\|\rho_R\|^{1/2} +  \|\rho_R\|.\]
Since $V(\rho)$ is the trace norm of $\rho_R$ and the Max norm is smaller than the trace norm, $\|\rho_R\|\leq V(\rho)\leq 1$ and it follows,
\begin{equation}\label{eq:inequality_norm_V}
\left\|\rho-P_S\rho P_S\right\|\leq 3 \sqrt{V(\rho)}.
\end{equation}
Since \eqref{eq:upperbound_mean_lyap} implies that for any $\epsilon>0$, $\lim_{t\to\infty}V(\hat \rho(t))e^{(c-2\epsilon)t}=0$, the positivity of $V$ implies $\lim_{t\to\infty}V(\rho(t))e^{(c-2\epsilon)t}=0$ in $L^1$. Then \eqref{eq:inequality_norm_V} yields (\ref{it:mean_upper_Lyap}).

Similarly \eqref{eq:upperbound_as_lyap} implies that for any $\epsilon>0$, $\lim_{t\to\infty}V(\rho(t))e^{(c-2\epsilon)t}=0$ almost surely. Then \eqref{eq:inequality_norm_V} yields (\ref{it:as_upper_Lyap}).
\end{proof}

\section{Improved stability consequences}\label{sec:simulations}
As stated in the introduction, one can taylor examples where $\alpha_0<\beta_0$. Actually, in this section, we show that it is possible to add a measurement channel that modifies neither $\mathcal L_S$ nor $\mathcal L_R,$ yet makes $\alpha_1$ arbitrarily large.
Define 
$$C_{n+1}=\ell_S P_S+\ell_RP_R,$$ 
with $\ell_S,\ell_R\in\mathbb C$.

This new operator accounts for the addition of a diffusive ``non demolition'' measurement, distinguishing wether the state is in $\mathcal H_S$ or $\mathcal H_R$ \cite{bbbqnd,bpqnd}. It is worth noticing that this addition does not modify the invariance and GAS property of $\mathcal H_S$.

Let us introduce the new operator valued functions associated to the SME which includes $C_{n+1}$. We denote them with a  ``\~{}'' in order to distinguish them from the original ones. For any $\rho\in\mathcal S(\mathcal H),$ we have:
\begin{align*}
\tilde{\mathcal L}(\rho)=&\mathcal L(\rho)+C_{n+1}\rho C_{n+1}^* -\frac12\{C^*_{n+1}C_{n+1},\rho\},\\
\mathcal G_{n+1}(\rho)=&C_{n+1}\rho+\rho C_{n+1}^* -{\rm tr}(C_{n+1}+C_{n+1}^*)\rho)\rho.
\end{align*}
Direct computations yield $\tilde{\mathcal L}_S=\mathcal L_S$ and $\tilde{\mathcal L}_R=\mathcal L_R$. Therefore $\tilde\alpha_0=\alpha_0$ and $\tilde\alpha_0'=\alpha_0'$. We only expect $\tilde\alpha_1\neq\alpha_1$.
The new quantum trajectory $(\tilde\rho(t))$ is the solution of the SDE
\begin{align}\label{eq:def rho tilde}\begin{split}
\tilde\rho(t)=&\rho_0+\int_0^t\tilde{\mathcal L}(\tilde\rho(s-))ds\\
	&+\sum_{j=0}^p\int_0^t\mathcal G_j(\tilde\rho(s-))dW_j(s) + \int_0^t \mathcal G_{n+1}(\tilde\rho(s-))dW_{n+1}(s)\\
&+\sum_{j=p+1}^n\int_0^t\int_{\mathbb R}\left(\frac{\mathcal J_j(\tilde\rho(s-))}{v_j(\tilde\rho(s-))}-\tilde\rho(s-)\right)1_{0<x<v_j(\tilde\rho(s-))}[N_j(dx,ds)-dxds]
\end{split}\end{align}
with $(W_{n+1}(t))$ a Wiener process independent of all the other Wiener and Poisson processes\footnote{We define the new filtered probability space $(\tilde\Omega,\tilde{\mathcal F}, (\tilde{\mathcal F}_t), \tilde{\mathbb P}),$ similarly to the original one.}.

If we assume {\bf SP}, then from the definition of $\alpha$ and the corresponding $\tilde\alpha$, we have
\[\tilde\alpha_1=\alpha_1+\frac12{\rm Re}^2(\ell_S-\ell_R).\]
It is then clear that we can play with the value ${\rm Re}^2(\ell_S-\ell_R)$ to increase arbitrarily the value of $\alpha_1$.  Next proposition expresses this fact and follows directly from Theorem \ref{thm:expo_conv}.
\begin{prop}
Assume $\mathcal H_S$ is GAS and {\bf SP} is fulfilled. Then
\[\limsup_{t\to\infty}\frac1t \ln\big(\mathbb E(V(\tilde\rho(t)))\big)\leq -\alpha_0\]
and for any $C>0$ there exists $\ell_S,\ell_R\in\mathbb C$ such that
\[\lim_{t\to\infty}\frac1t \ln\big(V(\tilde\rho(t))\big)\leq -C\quad \text{a.s.}\]
\end{prop}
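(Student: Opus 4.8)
The plan is to read the statement off Theorem~\ref{thm:expo_conv} applied to the enlarged model $(\tilde\rho(t))$ of \eqref{eq:def rho tilde}, using the identities $\tilde{\mathcal L}_S=\mathcal L_S$ and $\tilde{\mathcal L}_R=\mathcal L_R$ recorded just above, so that the whole argument amounts to checking that the standing hypotheses transfer and that $\alpha_1$ transforms as announced.

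First I would dispatch the mean bound. Adding the channel $C_{n+1}=\ell_SP_S+\ell_RP_R$ leaves $\mathcal H_S$ invariant (the new block $C_{n+1,Q}$ vanishes and $C_{n+1,P}=0$, so the conditions of Theorem~\ref{thm:mean inv condition} are untouched) and, since $\tilde{\mathcal L}_R=\mathcal L_R$ forces $\tilde\alpha_0=\alpha_0>0$, Lemma~\ref{lem:alpha0pos} shows $\mathcal H_S$ stays GAS for $\tilde{\mathcal L}$. As $V$ is linear and $\hat{\tilde\rho}(t)=\mathbb E(\tilde\rho(t))=e^{t\tilde{\mathcal L}}\rho_0$, we have $\mathbb E(V(\tilde\rho(t)))=V(\hat{\tilde\rho}(t))$, and \eqref{eq:mean_mean_rate_conv} of Theorem~\ref{thm:expo_conv} applied to $(\tilde\rho(t))$ gives directly $\limsup_{t\to\infty}\tfrac1t\ln\mathbb E(V(\tilde\rho(t)))\le-\tilde\alpha_0=-\alpha_0$.

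For the almost sure bound I would first observe that assumption {\bf SP} constrains only the jump operators $C_j$, $j=p+1,\dots,n$, which are unchanged, so $\tilde C_{j,R}^*\tilde C_{j,R}=C_{j,R}^*C_{j,R}>0$ and {\bf SP} holds for the enlarged model; the SDE \eqref{eq:def rho tilde} is of the form \eqref{eq:def_trajectory}, so it is well posed on the new probability space. Hence \eqref{eq:as_as_rate_conv} (equivalently \eqref{eq:improved_rate_conv}) of Theorem~\ref{thm:expo_conv} applies and yields $\limsup_{t\to\infty}\tfrac1t\ln V(\tilde\rho(t))\le-\tilde\beta_0$ a.s., with $\tilde\beta_0=\max(\tilde\alpha_0,\tilde\alpha_0'+\tilde\alpha_1)$. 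Now $\tilde\alpha_0'=\alpha_0'\ge0$ (Proposition~\ref{prop:alpha_0' leq alpha_0}), $\alpha_1\ge0$ by definition of $\alpha$, and, by the relation $\tilde\alpha_1=\alpha_1+\tfrac12{\rm Re}^2(\ell_S-\ell_R)$ established above — which holds because the extra diffusive channel enters the definition of $\alpha$ only through the term $\tfrac12\bigl(\tilde r_{n+1,S}(\rho_S)-\tilde r_{n+1,R}(\rho_R)\bigr)^2$, and $C_{n+1,S}=\ell_S I_{\mathcal H_S}$, $C_{n+1,R}=\ell_R I_{\mathcal H_R}$ make this a constant proportional to ${\rm Re}^2(\ell_S-\ell_R)$, independent of $(\rho,\rho_R)$ — we get $\tilde\beta_0\ge\tilde\alpha_1\ge\tfrac12{\rm Re}^2(\ell_S-\ell_R)$. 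Given $C>0$, choosing e.g. $\ell_R=0$ and $\ell_S=\sqrt{2C}$ real makes $\tilde\beta_0\ge C$, so $\limsup_{t\to\infty}\tfrac1t\ln V(\tilde\rho(t))\le-C$ a.s., which is the asserted inequality (the $\lim$ in the statement being read as $\limsup$).

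There is no real obstacle here: the proposition is a corollary of Theorem~\ref{thm:expo_conv}, and the only points demanding a line of care are the verifications that the perturbed model remains in the scope of the theorem (invariance/GAS of $\mathcal H_S$, validity of {\bf SP}, well-posedness) and the bookkeeping identifying the additive, freely tunable contribution of $C_{n+1}$ to $\alpha_1$.
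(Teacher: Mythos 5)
Your proposal is correct and follows exactly the paper's route: the paper gives no separate proof, stating only that the proposition ``follows directly from Theorem~\ref{thm:expo_conv}'' via the identities $\tilde{\mathcal L}_S=\mathcal L_S$, $\tilde{\mathcal L}_R=\mathcal L_R$ and the additive shift of $\alpha_1$ by a tunable multiple of ${\rm Re}^2(\ell_S-\ell_R)$, which is precisely what you verify (your hedging that the extra term is ``a constant proportional to ${\rm Re}^2(\ell_S-\ell_R)$'' is wise, since the exact prefactor is immaterial to the conclusion). Your explicit checks that invariance, GAS, {\bf SP} and well-posedness transfer to the enlarged model are the details the paper leaves implicit, and your choice of $\ell_S,\ell_R$ completes the argument.
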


Hence, whatever is the mean stability exponent $\alpha_0$, we may have an arbitrarily large almost sure asymptotic stability exponent. 
\bigskip

In the particular case of qubits, i.e. two dimensional Hilbert spaces $\mathcal H$, a finer result holds. The inequalities of Theorem \ref{thm:expo_conv} become equalities, showing that the above bound is, in some sense, sharp. 
Note that, in the qubit case, $\mathcal H_S$ and $\mathcal H_R$ have both dimension one. They correspond to two orthogonal projective rays of $\mathcal H$.
The quantum trajectory can then be expressed, in the orthonormal basis associated to $\mathcal H_S$ and $\mathcal H_R,$ as
\[\rho(t)=\left(\begin{array}{cc} p(t)&c(t)\\ \overline{c}(t)&1-p(t)\end{array}\right)\]
for any time $t$. The evolution of $(\rho(t))$ is then uniquely determined by that of $(p(t))$ and $(c(t))$. In particular we have
$$V(\rho(t))=1-p(t),$$
for all $t\geq0$.

For the sake of simplicity we just focus on the case where only two diffusive measurements are involved ($n=p=1$), associated to operators:
\[ C_0=\left(\begin{array}{cc} 0&\ell_P\\0&0\end{array}\right)\text{, }C_1=\left(\begin{array}{cc} \ell_S&0\\0&\ell_R\end{array}\right)\text{ and }H=0,\]
with $\ell_S,\ell_R,\ell_P\in\mathbb C$. This restriction is intended mainly to improve the readability of our proof, as the results extend easily to more general choices of $C_0$ and $C_1$. Also, adding more diffusive measurements or counting measurements is straightforward.

We can translate the SDE \eqref{eq:def_trajectory} to a SDE involving $(p(t),c(t))$. If $(\rho(t))$ is the solution of \eqref{eq:def_trajectory} with $p=n=1$ and the above defined $C_0$ and $C_1$ then its corresponding process $(p(t),c(t))$ is the solution of

\begin{align}
\begin{split}
dp(t)=&(1-p(t))|\ell_P|^2dt \\
&\quad+ 2(1-p(t)){\rm Re}(\ell_P \overline{c}(t))dW_0(t) \\
&\quad+ 2p(t)(1-p(t)){\rm Re}(\ell_S-\ell_R)dW_1(t)
\end{split}\label{eq:dp}\\
\begin{split}
dc(t)=&-\frac12|\ell_P|^2 c(t)dt -\frac12(|\ell_S|^2+|\ell_R|^2 - 2 \ell_S\overline{\ell}_R) c(t) dt\\
&\quad +\big((1-p(t))\ell_P -2c(t){\rm Re}(\ell_P \overline{c}(t))\big)dW_0(t)\\
&\quad  +\big(\ell_S c(t) + \overline{\ell}_R\overline{c}(t) - 2c(t)(p(t){\rm Re}(\ell_S) + (1-p(t)){\rm Re}(\ell_R))\big)dW_1(t).
\end{split}\label{eq:dc}
\end{align}
From equation \eqref{eq:dp} and the definitions of $\alpha_0$, $\alpha_0'$ and $\alpha_1$, we immediately have
\[\alpha_0=\alpha_0'=|\ell_P|^2\text{ and } \alpha_1=2{\rm Re}^2(\ell_S-\ell_R)\]
and the following refinement of Theorem \ref{thm:expo_conv} holds.
\begin{thm}
Consider the two-dimensional system described above. Assume $\ell_P\neq 0$ and $p_0<1$. Then,
\[\lim_{t\to\infty} \frac1t\ln(1-p(t))=-(\alpha_0 + \alpha_1)\text{ a.s.}\]
\end{thm}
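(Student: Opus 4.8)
The plan is to promote the general upper bound $\limsup_{t\to\infty}\frac1t\ln(1-p(t))\leq-(\alpha_0+\alpha_1)$ from Theorem \ref{thm:expo_conv} to an equality by proving the matching lower bound $\liminf_{t\to\infty}\frac1t\ln(1-p(t))\geq-(\alpha_0+\alpha_1)$ almost surely. First I would apply It\^o's formula to $\ln(1-p(t))$ using \eqref{eq:dp}. Since $n=p=1$ there are no jump terms, so this is a pure diffusion computation: with $V(t)=1-p(t)$ one gets
\begin{align*}
d\ln V(t)=&-|\ell_P|^2 dt-2{\rm Re}(\ell_P\overline c(t))\,dW_0(t)-2p(t){\rm Re}(\ell_S-\ell_R)\,dW_1(t)\\
&-\tfrac12\big(4{\rm Re}^2(\ell_P\overline c(t))+4p(t)^2{\rm Re}^2(\ell_S-\ell_R)\big)dt,
\end{align*}
so that $\frac1t\ln V(t)=\frac1t\ln V(0)-|\ell_P|^2-\frac1t\int_0^t\big(2{\rm Re}^2(\ell_P\overline c(s))+2p(s)^2{\rm Re}^2(\ell_S-\ell_R)\big)ds+\frac1tM(t)$, where $M(t)$ collects the two stochastic integrals. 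The martingale $M(t)$ has bounded quadratic variation density (since $p,c$ are bounded), so by Lemma \ref{lem:martingales} (strong law of large numbers for square-integrable martingales) $\frac1tM(t)\to0$ a.s.

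The heart of the matter is then the deterministic Cesàro limit of the integrand. Here I would invoke GAS: since $\ell_P\neq0$, $\mathcal H_S$ is GAS, so $p(t)\to1$ and $c(t)\to0$ almost surely (the off-diagonal block vanishes with the $R$-block by the Cauchy--Schwarz bound \eqref{eq:inequality_norm_V}). Therefore ${\rm Re}^2(\ell_P\overline c(s))\to0$ and $p(s)^2{\rm Re}^2(\ell_S-\ell_R)\to{\rm Re}^2(\ell_S-\ell_R)$ almost surely, hence the Cesàro average $\frac1t\int_0^t\big(2{\rm Re}^2(\ell_P\overline c(s))+2p(s)^2{\rm Re}^2(\ell_S-\ell_R)\big)ds$ converges to $2{\rm Re}^2(\ell_S-\ell_R)=\alpha_1$ almost surely. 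Combining the three pieces gives $\lim_{t\to\infty}\frac1t\ln V(t)=-|\ell_P|^2-\alpha_1=-(\alpha_0+\alpha_1)$ a.s., which is both the sought lower bound and, consistently, matches the Theorem \ref{thm:expo_conv} upper bound; so the limit exists and equals $-(\alpha_0+\alpha_1)$.

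One detail to dispatch: the computation presupposes $V(t)>0$ for all $t$ so that $\ln V(t)$ is well defined. This follows exactly as in Corollary \ref{cor:V>0} — with no jumps, $V(t)$ is a Dol\'eans--Dade exponential of a continuous semimartingale and hence strictly positive for all $t$ whenever $V(0)=1-p_0>0$, which is the standing hypothesis $p_0<1$.

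The main obstacle I anticipate is not any single step but making the Cesàro-limit argument fully rigorous: one needs that almost sure convergence of the integrand implies convergence of its time-average, which is elementary, but one must be careful that the a.s. convergence $p(t)\to1$, $c(t)\to0$ holds \emph{simultaneously} on a single probability-one event together with $\frac1tM(t)\to0$. Both facts are a.s. statements, so their intersection still has probability one, and the conclusion follows on that event. A secondary point is confirming that the qubit $\alpha_1$ from the general definition indeed equals $2{\rm Re}^2(\ell_S-\ell_R)$ and $\alpha_0=\alpha_0'=|\ell_P|^2$; these are the explicit identifications already recorded right before the theorem statement, read off from \eqref{eq:dp} and the spectral definitions, so no further work is needed there.
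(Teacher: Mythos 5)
Your proposal is correct and follows essentially the same route as the paper: It\^o's formula for $\ln(1-p(t))$, the strong law of large numbers for the square-integrable martingale part, and the almost sure convergences $p(t)\to1$, $c(t)\to0$ (from GAS, since $\ell_P\neq0$) to evaluate the Ces\`aro averages and turn the general upper bound into an exact limit. Your explicit It\^o computation, with the drift corrections $2{\rm Re}^2(\ell_P\overline c(s))$ and $2p(s)^2{\rm Re}^2(\ell_S-\ell_R)$, is in fact the careful version of the formula the paper displays, and your remark on the strict positivity of $1-p(t)$ (a continuous Dol\'eans--Dade exponential, no jumps) makes explicit a point the paper leaves implicit.
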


\begin{proof}
From \^Ito lemma, we have
\[\ln(1-p(t))=\ln(1-p_0) -\alpha_0 t -\alpha_1\int_0^t p(s) ds - \int_0^t {\rm Re}(\ell_P \overline{c}(s))ds +M_t\]
with $M_t$ a square integrable martingale such that $\lim_{t\to\infty}M_t/t=0$ almost surely. 
Note that $\ell_P\neq 0$ implies the almost sure convergences of $p(t)$ to $1$ and of $c(t)$ to $0$. Adapting the proof of Theorem \ref{thm:expo_conv} Equation \eqref{eq:as_as_rate_conv} yields the result.
\end{proof}
The almost sure convergence towards $\mathcal H_S$ was already known\cite{pellegrini5}, hence the new result in this case is the stability rate derivation. 

 \begin{figure}[H]
\center
\includegraphics[width=.49\textwidth,bb=0 0 576 432]{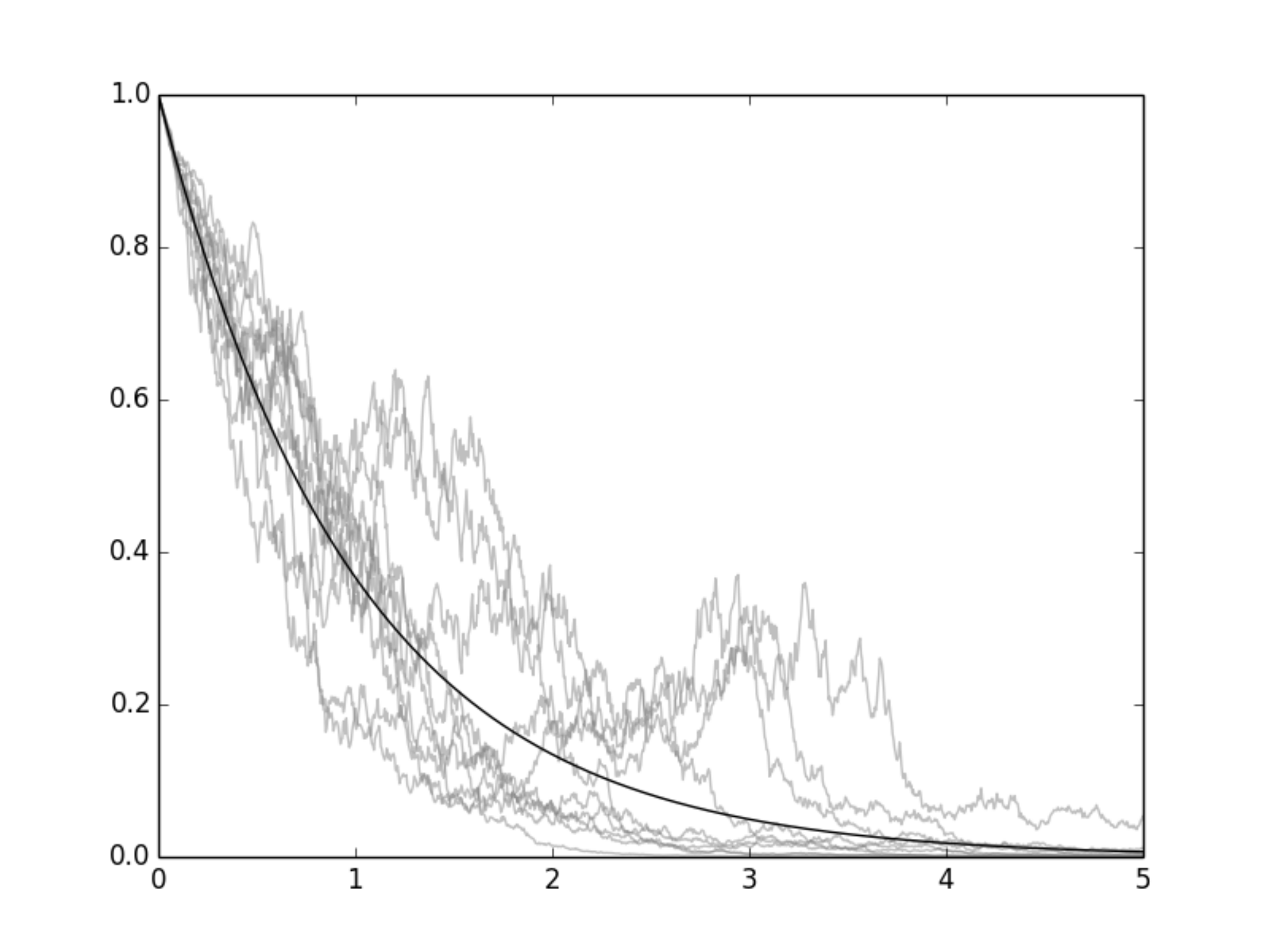}
\includegraphics[width=.49\textwidth,bb=0 0 576 432]{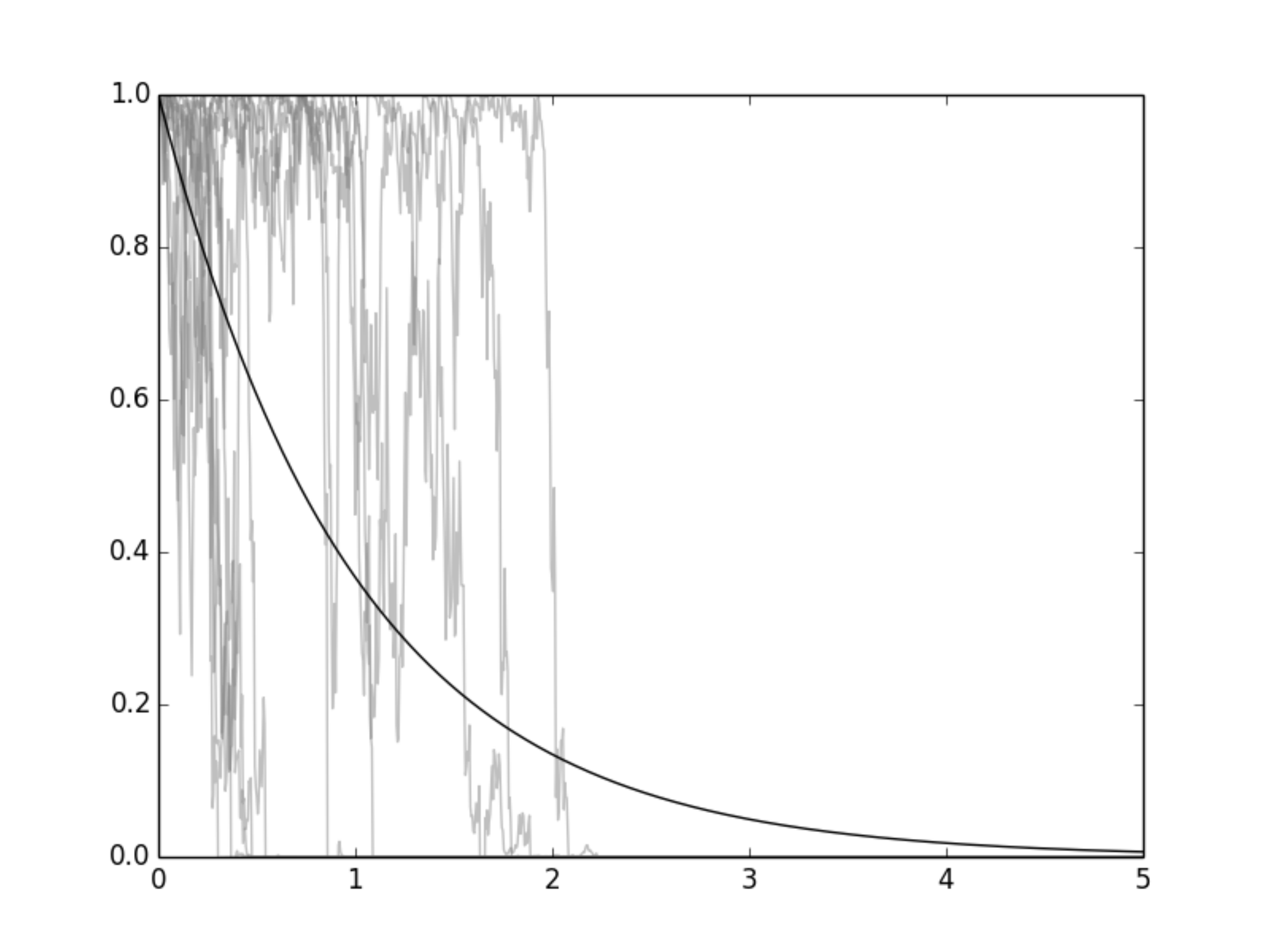}
 \caption{Numerical simulations of the evolution of $1-p(t)$. On the left $\alpha_0=1$ and $\alpha_1=1/2$. On the right $\alpha_0=1$ and $\alpha_1=8$. In each graph one gray line corresponds to a realisation and the solid black line corresponds to the average evolution. The initial condition is set to $p_0=0$. One can remark that when $\alpha_1$ increases, the asymptotic stability increases. \label{fig:simu}} 
 \end{figure}
 
We conclude this section and this article with some numerical simulations (see Figure \ref{fig:simu}) that illustrate the influence of an increased $\alpha_1$ on the typical trajectories.
A larger asymptotic stability rate leads to initially more erratic trajectories, yet the convergence is faster in the sense of the Lyapunov exponents: the increased stability rate makes the state almost ``jump'' to the target subspace, where it remains. This limit behaviour was first remarked and discussed in \cite{bbthermique,bbtjumps1,bbtjumps2,bbtjumps3}. Formulating and proving these observations  more rigorously, \emph{i.e.} studying the limit $\alpha_1\to\infty$, will be the object of further investigation. 
\bigskip

\textbf{Acknowledgments:} The authors thank S. Attal, A. Joye and C.-A. Pillet for the organisation of the summer school in Autrans: `Advances in open quantum systems" where this work has been initiated. We also thank the referee for useful advices regarding the structure of the paper. The research of T. B. was partly supported by ANR
project RMTQIT (grant ANR-12-IS01-0001-01). The research of T. B. and C. P. was partly supported by ANR project StoQ (grant ANR-14-CE25-0003-0). C. P. thanks B. Cloez for stimulating discussions.


\begin{thebibliography}{1}
 
 \bibitem{adler}
S.~L.~Adler, D. C.~Brody, T.~A. Brun and L.~P. Hughston, Martingale models for
  quantum state reduction, \emph{J. Phys. A: Math.
  Gen.}, 34, 8795--8820, 2001.
 
 \bibitem{alicki-lendi}
R.~Alicki and K.~Lendi.
\newblock {\em Quantum Dynamical Semigroups and Applications}.
\newblock Springer-Verlag, Berlin, 1987.
 
 \bibitem{altafini-markov}
C.~Altafini, Controllability properties for finite dimensional quantum
  {M}arkovian master equations, \emph{J. Math. Phys.}, 44,  2357--2372, 2003.
 
 \bibitem{ticozzi-stochastic}
C. Altafini, K. Nishio and F.~Ticozzi.
\newblock  Stabilization of Stochastic Quantum Dynamics via Open and Closed Loop Control. 
\newblock {\em IEEE Trans. Automat. Contr.} 58, 74--85, 2013.
 
 
 \bibitem{altafini-introduction}
C.~Altafini and F.~Ticozzi.
\newblock Modeling and Control of Quantum Systems: An Introduction.
\newblock {\em IEEE Trans. Automat. Contr.}, {57}, 1898--1917, 2012.

\bibitem{amini_rouchon_pellegrini}
H.~Amini, P.~Rouchon and C.~Pellegrini.
\newblock Stability of continuous-time quantum filters with measurement imperfections
\newblock{\em Russian Journal of Mathematical Physics} Vol. 21, Issue 3, 297--315 2014

\bibitem{Rouchon3}
H. Amini, A. Somaraju, I. Dotsenko, C. Sayrin, M. Mirrahimi, P. Rouchon.
\newblock Feedback stabilization of discrete-time quantum systems subject to non-demolition measurements with imperfections and delays.
\newblock{\em Automatica} 49(9):2683--2692. 2013.

\bibitem{Rouchon4}
H. Amini, M. Mirrahimi, P. Rouchon,
\newblock{ On stability of continuous-time quantum-filters.}
\newblock{ CDC/ECC} 2011, pp:6242--6247.


\bibitem{pellegrini5}
S.Attal and C. Pellegrini
\newblock{Return to Equilibrium in Quantum Trajectory Theory}
\newblock{\em Nova Publisher Book Ç stochastic differential equations È} ISBN: 978-1-61324-278-0 (2011)

 
 \bibitem{ballesteros15}
M.~Ballesteros, M.~Fraas, J.~Fr{\"o}hlich, and B.~Schubnel.
\newblock Indirect retrieval of information and the emergence of facts in
  quantum mechanics.
\newblock {\em preprint arXiv:1506.01213}, 2015.

 \bibitem{barchielli2009}
A.~Barchielli and M.~Gregoratti, \emph{Quantum Trajectories and Measurements in
Continuous Time: The Diffusive Case}, ser. Lect. Notes Phys.,
782. Springer, Berlin Heidelberg, 2009.

\bibitem{barchielliholevo}
A.~Barchielli and A.~S. Holevo.
\newblock Constructing quantum measurement processes via classical stochastic
  calculus.
\newblock {\em Stoch. Process. Appl.}, 58, 293--317, Aug. 1995.

\bibitem{bbqnd}
M.~Bauer and D.~Bernard.
\newblock Convergence of repeated quantum nondemolition measurements and
  wave-function collapse.
\newblock {\em Phys. Rev. A}, 84, 044103, Oct. 2011.

\bibitem{bbthermique}
M.~Bauer and D.~Bernard.
\newblock Real time imaging of quantum and thermal fluctuations: The case of a
  two-level system.
\newblock {\em Lett. Math. Phys.}, 104, 707--729, June 2014.

\bibitem{bbbqnd}
M.~Bauer, T.~Benoist, and D.~Bernard.
\newblock Repeated quantum non-demolition measurements: Convergence and
  continuous time limit.
\newblock {\em Ann. H. Poincar\'e}, 14, 639--679, May 2013.

\bibitem{bbbcontinuous}
M.~Bauer, D.~Bernard, and T.~Benoist.
\newblock Iterated stochastic measurements.
\newblock {\em J. Phys. A: Math. Theor.}, 45, 494020, Dec. 2012.

\bibitem{bbtjumps1}
M.~Bauer, D.~Bernard, and A.~Tilloy.
\newblock Open quantum random walks: bistability on pure states and
  ballistically induced diffusion.
\newblock {\em Phys. Rev. A}, 88, 062340, 2013.

\bibitem{bbtjumps2}
M.~Bauer, D.~Bernard, and A.~Tilloy.
\newblock The open quantum brownian motions.
\newblock {\em J. Stat. Mech.: Theor. Exp.},
  2014, P09001, 2014.

\bibitem{bbtjumps3}
M.~Bauer, D.~Bernard, and A.~Tilloy.
\newblock Computing the rates of measurement-induced quantum jumps.
\newblock {\em J. Phys. A: Math. Theor.},
  48, 25FT02, 2015.
  
\bibitem{baumgartner-2}
{B.}~{Baumgartner} and H.{Narnhofer},
\newblock Journal of Physics A: Mathematical and Theoretical
\newblock {\bf 41}:{395303},{2008}.
  
\bibitem{belavkin1985}
V. P. Belavkin.
\newblock Nondemolition measurements and control in quantum dynamical systems.
\newblock In {\em Proceedings, Information Complexity and Control in Quantum
Physics, Udine 1985} (A. Blaquiere, S. Diner and G. Lochak Eds.).
\newblock 311--336. Springer-Verlag, Vienna-New York.

\bibitem{belavkin1992}
V. P. Belavkin.
\newblock Quantum stochastic calculus and quantum nonlinear
filtering.
\newblock {\em J. Multivariate Anal.}, 42, 171--201,1992.

\bibitem{belavkin-report}
V.~P. Belavkin, Measurement, filtering and control in quantum open dynamical
  systems, \emph{Rep. Math. Phys.}, 43, 405--425, 1999.
  
 \bibitem{bpqnd}
T.~Benoist and C.~Pellegrini.
\newblock Large time behaviour and convergence rate for non demolition quantum
  trajectories.
\newblock {\em Comm. Math. Phys.}, 331, 703--723, Oct. 2014.

\bibitem{vanhandel-invitation}
L.~Bouten, R.~van Handel, and M.~R. James, A discrete invitation to quantum
  filtering and feedback control, \emph{SIAM Rev.}, 51, 239--316, 2009.

\bibitem{petruccione}
H.~P. Breuer and F.~Petruccione, \emph{The Theory of Open Quantum
  Systems}. Oxford University Press, UK,
  2006.

\bibitem{cirillo-decompositions} G. I. Cirillo, F. Ticozzi. Decompositions of Hilbert Spaces, Stability Analysis and Convergence Probabilities for Discrete-Time Quantum Dynamical Semigroups. {\em J. Phys. A: Math. Theor.}, 48, 085302, 2015. 

\bibitem{thomas-liesemi group}
G.~Dirr, U.~Helmke, I.~Kurniawan, and T.~Schulte-Herbr{\"u}ggen,
  {L}ie-semi group structures for reachability and control of open quantum
  systems: {K}ossakowski-{L}indblad generators from {L}ie wedges to {M}arkovian
  channels, \emph{Rep. Math. Phys.}, 64, 93--121, 2009.
  
  \bibitem{Evans}
D.~E. Evans and R.~H{\o}egh-Krohn.
\newblock Spectral properties of positive maps on {C}*-algebras.
\newblock {\em J. London Math. Soc.}, 2, 345--355, 1978.

\bibitem{gardiner-qn}
C.~W. Gardiner and P.~Zoller, \emph{Quantum Noise: A Handbook of Markovian and
  Non-Markovian Quantum Stochastic Methods with Applications to Quantum
  Optics}, 3rd~ed. Springer-Verlag, N.
  Y., 2004.

\bibitem{geremia}
R.~L. Cook, P.~J. Martin, and J.~M. Geremia, Optical coherent state
  discrimination using a closed-loop quantum measurement, \emph{Nature},
  446, 774--777, 2007.

\bibitem{GKS}
V.~Gorini, A.~Kossakowski, and E.~Sudarshan, Completely positive dynamical
  semigroups of n-level systems, \emph{J. Math. Phys.}, 17,
  821--825, 1976.
  
\bibitem{gorini1978}
V. Gorini, A. Frigerio, M. Verri, A. Kossakowski and E. C. G. Sudarshan.
\newblock Properties of quantum Markovian master equations,
\newblock {\em Rep. Math. Phys.}, 13, 149-173, 1978.

\bibitem{harochebook}
S.~Haroche and J.-M. Raimond.
\newblock {\em Exploring the Quantum: Atoms, Cavities, and Photons}.
\newblock Oxford University Press, Oxford ; New York, Oct. 2006.

\bibitem{nanomechanical2}
A.~Hopkins, K.~Jacobs, S.~Habib, and K.~Schwab, Feedback cooling of a
  nanomechanical resonator,'' \emph{Phys. Rev. B}, 68, 235328,
  Dec 2003.
  
  \bibitem{jaksic}
V.~Jak{\v s}i{\'c}, C.-A. Pillet, and M.~Westrich.
\newblock Entropic fluctuations of quantum dynamical semigroups.
\newblock {\em J. Statist. Phys.}, 154, 153--187, Jan. 2014.

\bibitem{lindblad1976}
G. Lindblad.
\newblock On the generators of quantum dynamical semigroups,
\newblock Commun. Math. Phys., 48, 119-130, 1976.

\bibitem{qed1}
H.~Mabuchi and A.~C. Doherty, Cavity Quantum Electrodynamics: Coherence in
  Context, \emph{Science}, 298, 1372--1377, 2002.
  
  \bibitem{nanomechanical1}
S.~Mancini, D.~Vitali, and P.~Tombesi, Optomechanical cooling of a
  macroscopic oscillator by homodyne feedback, \emph{Phys. Rev. Lett.},
  80, 688--691, 1998.
  
  \bibitem{pellegrini1}
C.~Pellegrini.
\newblock Existence, uniqueness and approximation of a stochastic
  {S}chr\"odinger equation: the diffusive case.
\newblock {\em Ann. Probab.}, 36, 2332--2353, 2008.

\bibitem{pellegrini4}
C.~Pellegrini.
\newblock{Poisson and Diffusion Approximation of Stochastic Schrdinger Equations with Control.}
\newblock{\em Annales Henri Poincar: Physique Thorique} (2009), Vol 10, 995--1025. 

\bibitem{pellegrini2}
C.~Pellegrini.
\newblock Existence, uniqueness and approximation of the jump-type stochastic
  {S}chr{\"o}dinger equation for two-level systems.
\newblock {\em Stoch. Process. Appl.}, 120, 1722--1747, Aug. 2010.

\bibitem{pellegrini3}
C.~Pellegrini.
\newblock Markov chains approximation of jump--diffusion stochastic master
  equations.
\newblock {\em Ann. Inst. H. Poincar\'e: Prob. Stat.}, 46, 924--948, Nov.
  2010.

\bibitem{poyatos}
J.F. Poyatos, J.I. Cirac, and P. Zoller,
Quantum reservoir engineering with laser cooled trapped ions,
\emph{Phys. Rev. Lett.}, 77, 4728, 1996.

\bibitem{protter}
P.E~Protter
\newblock{Stochastic Integration and Differential Equations}
\newblock{\em Springer} 2013

\bibitem{Rouchon1}
P. Rouchon, J. Ralph
\newblock{ Efficient quantum filtering for quantum feedback control.}
\newblock{\em Phys. Rev. A} 91, 012118, 2015

\bibitem{Rouchon2}
C. Sayrin, I. Dotsenko, X. Zhou, B. Peaudecerf, Th. Rybarczyk, S. Gleyzes, P. Rouchon, M. Mirrahimi, H. Amini, M. Brune, J.M. Raimond, S. Haroche,
\newblock{ Real-time quantum feedback prepares and stabilizes photon number states.}
\newblock{\em Nature,} 477(7362), 1 September 2011.

\bibitem{shabani-init}
A.~Shabani and D.~A. Lidar, ``Theory of initialization-free decoherence-free
  subspaces and subsystems,'' \emph{Physical Review A}, vol.~72, no.~4, pp.
  042\,303:1--14, 2005.
  
  \bibitem{scaramuzza-switching} P. Scaramuzza, F. Ticozzi.
\newblock Switching Quantum Dynamics for Fast Stabilization.
\newblock {\em Physical Review A}, 91, 062314, 2015. 

\bibitem{orozco1}
W.~P. Smith, J.~E. Reiner, L.~A. Orozco, S.~Kuhr, and H.~M. Wiseman, Capture
and release of a conditional state of a cavity \mbox{QED} system by quantum
feedback, \emph{Phys. Rev. Lett.}, 89, 133601, 2002.

\bibitem{qed2}
D.~A. Steck, K.~Jacobs, H.~Mabuchi, T.~Bhattacharya, and S.~Habib, Quantum
  feedback control of atomic motion in an optical cavity, \emph{Phys. Rev.
  Lett.}, 92, 223004, Jun 2004.
 
  
\bibitem{ticozzi-QDSspeed}
F.~Ticozzi, R.~Lucchese, P.~Cappellaro, and
L.~Viola.
\newblock Hamiltonian Control of Quantum Dynamical Semigroups: 
Stabilization and Convergence Speed.
\newblock {\em IEEE Trans. Automat. Contr.},
 {57}, 1931--1944, 2012.

\bibitem{ticozzi-markovian}
\newblock F. Ticozzi and L. Viola
\newblock Analysis and synthesis of attractive quantum Markovian dynamics.
\newblock{\em Automatica}, 45, 2002--2009, 2009.

\bibitem{ticozzi-QDS}
F.~Ticozzi and L. Viola.
\newblock Quantum Markovian Subsystems: Invariance, Attractivity and Control.
\newblock {\em IEEE Trans. Automat. Contr.}, 53, 2048-2063, 2008.
  
\bibitem{ticozzi-QL1}
F.~Ticozzi and L.~Viola.
\newblock Steady-state entanglement by engineered quasi-local Markovian dissipation.
\newblock {\em Quantum Information and Computation}, {14}, 0265--0294, 2014. 

\bibitem{ticozzi-isometries}
\newblock F. Ticozzi and L. Viola
\newblock Quantum information encooding, protection and correction via trace-norm isometries.
\newblock\emph{Physical Review A}, 81(3):032313, 2010.

\bibitem{wiseman-adaptive}
H.~M. Wiseman, Adaptive phase measurements of optical modes: Going beyond the
  marginal $q$ distribution, \emph{Phys. Rev. Lett.}, 75,
  4587--4590, 1995.

\bibitem{wiseman-book}
H.~M. Wiseman and G.~J. Milburn.
\newblock {\em Quantum Measurement and Control}.
\newblock Cambridge University Press, 2009.

\end{thebibliography}
\end{document}